\documentclass[a4paper,USenglish,cleveref,nameinlink, autoref, thm-restate, numberwithinsect]{lipics-v2021}
\usepackage{graphicx} 
\usepackage{amsthm}
\usepackage{amssymb}
\usepackage{amsmath}
\usepackage{booktabs}
\usepackage{enumerate}
\usepackage{bm}
\usepackage{multirow}

\usepackage{makecell}

\theoremstyle{definition}
\newtheorem{problem}{Problem}

\newcommand{\cf}{\ensuremath{\mathcal{F}}}
\newcommand{\cl}{\ensuremath{\mathcal{L}}}
\newcommand{\B}{\ensuremath{\mathcal{B}}}
\newcommand{\A}{\ensuremath{\mathcal{A}}}

\newcommand{\NP}{\ensuremath{\mathsf{NP}}}
\newcommand{\coNP}{\ensuremath{\mathsf{coNP}}}
\newcommand{\poly}{\ensuremath{\mathsf{poly}}}
\newcommand{\XP}{\ensuremath{\mathsf{XP}}}
\newcommand{\FPT}{\ensuremath{\mathsf{FPT}}}
\newcommand{\W}{\ensuremath{\mathsf{W[1]}}}
\newcommand{\WT}{\ensuremath{\mathsf{W[2]}}}
\renewcommand{\O}{\ensuremath{\mathcal{O}}}
\newcommand{\N}{\ensuremath{\mathbb{N}}}
\newcommand{\nZ}{\ensuremath{\overline{Z}}}

\newcommand{\ThreeColor}{\textsc{3-Coloring}}

\newcommand{\MinLeafL}{\textsc{Min-Leaf \cl-Tree}}
\newcommand{\MaxLeafL}{\textsc{Max-Leaf \cl-Tree}}

\newcommand{\MinInL}{\textsc{Min-Internal \cl-Tree}}
\newcommand{\MaxInL}{\textsc{Max-Internal \cl-Tree}}

\newcommand{\MinLeaf}{\textsc{Min-Leaf}}
\newcommand{\MaxLeaf}{\textsc{Max-Leaf}}

\newcommand{\MinIn}{\textsc{Min-Internal}}
\newcommand{\MaxIn}{\textsc{Max-Internal}}

\usepackage{tikz}
\usetikzlibrary{arrows}
\usetikzlibrary{shapes.geometric}
\usetikzlibrary{decorations.pathreplacing}
\usetikzlibrary{shadows.blur}
\usetikzlibrary{shapes.symbols}

\usetikzlibrary{arrows}
\usetikzlibrary{math}
\usetikzlibrary{calc}
\usetikzlibrary{fadings}
\tikzstyle{treeedge}=[line width=2pt, lipicsYellow]
\tikzstyle{blobedge}=[ultra thick, lipicsLineGray]
\tikzstyle{cliqueedge}=[line width=3pt, lipicsYellow]
\tikzstyle{normaledge}=[black,thick]
\tikzstyle{fadeedge}=[black, thick,path fading=fade ends]
\tikzstyle{noneedge}=[line width=3pt, dashed, lipicsLineGray]
\tikzstyle{exampleedge}=[line width=3pt, lipicsLineGray]

\tikzstyle{thinedge}=[ultra thick, lipicsLineGray,opacity=0.1]
\tikzset{vertex/.style = {draw, circle, thick, fill=white,minimum size = 5pt,inner sep=0pt}}
\title{On Kernels and Leaves: Searching for Bare and Lush Trees}

\author{Jesse Beisegel}{Institute of Mathematics, Brandenburg University of Technology, Cottbus, Germany}{jesse.beisegel@b-tu.de}{https://orcid.org/0000-0002-8760-0169}{}
\author{Ekkehard Köhler}{Institute of Mathematics, Brandenburg University of Technology, Cottbus, Germany}{ekkehard.koehler@b-tu.de}{}{}
\author{Robert Scheffler}{Institute of Mathematics, Brandenburg University of Technology, Cottbus, Germany}{robert.scheffler@b-tu.de}{https://orcid.org/0000-0001-6007-4202}{}
\author{Martin Strehler}{Department of Mathematics, Westsächsische Hochschule Zwickau, Zwickau, Germany}{martin.strehler@whz.de}{https://orcid.org/0000-0003-4241-6584}{}

\Copyright{Jesse Beisegel, Ekkehard Köhler, Robert Scheffler, Martin Strehler}

\keywords{graph search, spanning tree, parameterized complexity, kernelization, leaves of trees}
\authorrunning{J. Beisegel, E. Köhler, R. Scheffler, and M. Strehler}

\ccsdesc[500]{Mathematics of computing~Graph algorithms} 

\ccsdesc[500]{Theory of computation~Parameterized complexity and exact algorithms}

\hideLIPIcs
\nolinenumbers

\begin{document}

\maketitle

\begin{abstract}
We study a variation of the classical Maximum (Minimum) Leaf Spanning Tree problem. In many applications, Depth-First Search (DFS) is used to compute a spanning tree of a graph. Such a search tree is constructed by connecting each vertex $v$ with the last vertex the search has visited before $v$ and we call this a \emph{last-in tree}. By restricting the Maximum (Minimum) Leaf Spanning Tree problem to last-in trees of a graph search, we ask for a search ordering that leads to the largest (smallest) number of leaves in its search tree. Recently, Bergougnoux et al. [\emph{Journal of Computer and System Sciences} 154 (2025)] have studied the parameterized complexity of these problems for DFS. They showed that the minimization problem is para-\NP-hard  and the maximization problem is \W-hard when parameterized by the number of leaves. When parameterized by the number of internal vertices, both problems have polynomial kernels. Here, we examine whether these results also hold for the variant Lexicographic DFS (LDFS). We show that the hardness results of DFS can be transferred to LDFS. We also present exponential kernels for the number of internal vertices as the parameter. We complement this by showing that polynomial kernels do not exist, unless $\NP \subseteq \coNP / \poly$. We also consider last-in trees of searches that do not follow the DFS scheme. In contrast to (L)DFS, minimizing the number of internal vertices is para-\NP-hard for several searches including Breadth-First Search. 
\end{abstract}

\section{Introduction}

Graph searches, despite being relatively simple procedures, serve as crucial sub-routines in many high-level algorithms. Depth-First Search (DFS), for example, helps to recognize planar graphs~\cite{hopcraft1974planarity}, find strongly connected components~\cite{tarjan1972dfs}, or construct a topological sorting for directed acyclic graphs~\cite{cormen2022introduction}. Consequently, the properties of these searches and their associated search trees have been studied extensively, in particular the question of whether a given spanning tree is a search tree~\cite{beisegel2021recognition,mfcs,hagerup1985biconnected,hagerup1985recognition,korach1989dfs,manber1990recognizing,scheffler2022recognition}.

Recently, Bergougnoux et al.~\cite{bergougnoux2025parameterized} addressed the problem of finding DFS trees with a minimum or maximum number of leaves for a given graph. The authors study the parametrized complexity for both minimization and maximization when either the number of leaves or the number of internal vertices is used as parameter. In particular, they show that deciding whether there is a DFS tree with at most $k$ leaves is para-\NP-hard, while deciding whether a DFS tree can have at least $k$ leaves is \W-hard. In contrast, the problems of deciding whether a DFS tree can have at most or at least $k$ internal vertices admit polynomial kernels and thus are fixed-parameter tractable (\FPT). 

Historically, the study of spanning trees with either few or many leaves is a much older problem. For example, a spanning tree with one leaf corresponds to a Hamiltonian path.\footnote{Following the convention in~\cite{bergougnoux2025parameterized}, we only consider rooted spanning trees and do not count the root as a leaf even if it has degree~1.} Therefore, the problem is already hard for the fixed value $k = 1$. However, when we consider the dual parameterization by the number of internal vertices, it allows for linear kernels and, thus, becomes \FPT~\cite{fomin2013linear,li2017deeper}. That is why the problem is commonly referred to as the \textsc{Maximum Internal Spanning Tree} (MIST) problem. Conversely, for the \textsc{Maximum Leaf Spanning Tree} (MLST) problem, known to be \NP-hard even for restricted graph classes such as planar cubic graphs~\cite{reich16complexity}, a large number of \FPT{} algorithms and linear kernels have been presented when parameterized by the number of leaves~\cite{bodlaender1993linear,bonsma2003faster,bonsma2008spanning,estivill-castro2005fpt,fellows1992well,fellows2000coordinatized,kneis2011new,raible2010amortized,zehavi2018k-leaf}. However, when parameterized by the number of internal vertices, it is equivalent to the \textsc{Connected Dominating Set} problem~\cite{fujie2003exact} and, thus, \WT-complete~\cite{cygan2015param}.

Comparing the results on general spanning trees with the results of Bergougnoux~et~al.~\cite{bergougnoux2025parameterized} on DFS trees, we can observe that for both pairs of dual parameterizations, there is one case that is in \FPT{} while the other case is hard. For general spanning trees, both parameterization by the number of leaves and by the number of internal vertices are hard in one case and tractable in the other. In contrast, for DFS only the parameterizations by the internal vertices are tractable, whereas the problems are hard when parameterized by the number of leaves. 

Most recently, Beisegel et al.~\cite{iwoca} addressed the same questions for BFS trees. They found that the complexity of the problems is somehow inverted compared to DFS. For BFS trees, both minimizing and maximizing the number of leaves is in \FPT{} when parameterized by the number of leaves, but \W-hard when parameterized by the number of internal vertices.

These differences may arise from the different schemes in which BFS and DFS trees are defined. In the terminology of~\cite{beisegel2021recognition}, DFS trees are \emph{last-in trees} or \emph{\cl-trees}, while BFS trees are \emph{first-in trees} or \emph{\cf-trees}. Given a search ordering $\sigma=(v_1,v_2,\dots,v_n)$, i.e., the sequence of visits of the vertices during a graph search, the \cl-tree is constructed by connecting each vertex $v_i$ with $i > 1$ to the rightmost adjacent predecessor in the sequence. The \cf-tree, however, is constructed by connecting $v_i$ to the leftmost adjacent predecessor (see \cref{fig:trees}). 

\begin{figure}
	\centering
	\begin{tikzpicture}[vertex/.style={inner sep=2pt,draw,circle}]
    \footnotesize
    \begin{scope}
	\node[vertex, label={180:$2$}] (1) at (-0.25,0) {};
	\node[vertex, label={[label distance=0.1cm]-90:$4$}] (2) at (0.75,0) {};
	\node[vertex, label={0:$3$}] (3) at (1.75,0) {};
	\node[vertex, label={0:$1$}] (4) at (0.75,1) {};
	\node[vertex, label={180:$5$}] (5) at (0.25,-1) {};
	\node[vertex, label={0:$6$}] (6) at (1.25,-1) {};
	\node[] (a) at (-0.5,1) {(a)};
	
	\draw[] (1)--(2)--(3)--(4)--(2)--(6)--(3)--(2)--(5)--(1)--(4);
	\draw[treeedge] (5)--(1)--(4)--(3)--(6);
	\draw[treeedge] (2) -- (4);
	\end{scope}
	
	\begin{scope}[xshift=3.5cm]
	\node[vertex, label={180:$2$}] (1) at (-0.25,0) {};
	\node[vertex, label={[label distance=0.1cm]-90:$4$}] (2) at (0.75,0) {};
	\node[vertex, label={0:$3$}] (3) at (1.75,0) {};
	\node[vertex, label={0:$1$}] (4) at (0.75,1) {};
	\node[vertex, label={180:$5$}] (5) at (0.25,-1) {};
	\node[vertex, label={0:$6$}] (6) at (1.25,-1) {};
	\node[] (b) at (-0.5,1) {(b)};
	
	\draw[] (1)--(2)--(3)--(4)--(2)--(6)--(3)--(2)--(5)--(1)--(4);
	\draw[treeedge] (5)--(2)--(3)--(4)--(1);
	\draw[treeedge] (2) -- (6);
	\end{scope}
	
	\begin{scope}[xshift=7cm]
	\node[vertex, label={180:$4$}] (1) at (-0.25,0) {};
	\node[vertex, label={[label distance=0.1cm]-90:$2$}] (2) at (0.75,0) {};
	\node[vertex, label={0:$6$}] (3) at (1.75,0) {};
	\node[vertex, label={0:$1$}] (4) at (0.75,1) {};
	\node[vertex, label={180:$3$}] (5) at (0.25,-1) {};
	\node[vertex, label={0:$5$}] (6) at (1.25,-1) {};
	\node[] (a) at (-0.5,1) {(c)};
	
	\draw[] (1)--(2)--(3)--(4)--(2)--(6)--(3)--(2)--(5)--(1)--(4);
	\draw[treeedge] (1)--(4)--(2)--(5);
	\draw[treeedge] (2) -- (6);
	\draw[treeedge] (3) -- (4);
	\end{scope}
	
	\begin{scope}[xshift=10.5cm]
	\node[vertex, label={180:$4$}] (1) at (-0.25,0) {};
	\node[vertex, label={[label distance=0.1cm]-90:$2$}] (2) at (0.75,0) {};
	\node[vertex, label={0:$6$}] (3) at (1.75,0) {};
	\node[vertex, label={0:$1$}] (4) at (0.75,1) {};
	\node[vertex, label={180:$3$}] (5) at (0.25,-1) {};
	\node[vertex, label={0:$5$}] (6) at (1.25,-1) {};
	\node[] (a) at (-0.5,1) {(d)};
	
	\draw[] (1)--(2)--(3)--(4)--(2)--(6)--(3)--(2)--(5)--(1)--(4);
	\draw[treeedge] (1)--(5)--(2)--(6)--(3);
	\draw[treeedge] (2) -- (4);
	\end{scope}

	\end{tikzpicture}\caption{Illustration of \cf-trees and \cl-trees borrowed from \cite{scheffler2022recognition}. The search trees are denoted by the thick yellow edges. The tree in (a) is the \cf-tree of the given BFS ordering. The $\cl$-tree of that ordering is given in (b). The graphs in (c) and (d) present a DFS ordering together with its \cf-tree~(c) and \cl-tree (d).}\label{fig:trees}
\end{figure}

The hardness results for BFS given in \cite{iwoca} do not only hold for BFS trees but for \cf-trees of a large family of graph searches including DFS. This motivates several questions. First of all, one may ask whether the hardness results for the number of leaves as parameter given for \cl-trees of DFS can be extended to the \cl-trees of other searches. Conversely, one may ask whether one can also give \FPT{} algorithms and polynomial kernels when parameterized by the number of internal vertices. In particular, this question is interesting for \emph{Lexicographic DFS} (LDFS). This search, introduced by Corneil and Krueger~\cite{corneil2008unified}, has gained significant attention in the context of optimization problems on cocomparability graphs~\cite{corneil2013ldfs,corneil2016power,mertzios2012simple,mertzios2018linear}. In essence, it is a DFS with an additional tie-breaking scheme where not only the most recently visited neighbor of a vertex is important, but also its entire history of visited neighbors. Of course, all \cl-trees of LDFS are also \cl-trees of DFS, and thus they have the same strong structure. However, the vertex choice scheme of LDFS is more enhanced than that of DFS which may influence the complexity of finding search trees with few or many leaves. The study of these problems for LDFS might give insights as to whether the tractability of the problems on DFS depends  only on the structure of the trees or also on how the respective search determines the vertex order.

\subparagraph*{Our Contribution.} We study the problem of finding $\cl$-trees with either few or many leaves. For a given graph $G$ and a search paradigm $\A$, we explore both the minimizing and maximizing variants of the following two problems, which vary depending on their use of the parameter 
$k$. The first problem is \textsc{Min-Leaf (Max-Leaf) \cl-Tree} for a search $\A$: given a graph $G$ and an integer $k$, we ask whether there exists an $\A$-ordering of $G$ whose $\cl$-tree has at most (least) $k$ leaves. The second problem is \textsc{Min-Internal (Max-Internal) \cl-Tree} for a search $\A$: given a graph $G$ and a parameter $k$, we ask for an $\A$-ordering of $G$ whose $\cl$-tree has at most (least) $k$ internal vertices. Note that these problems differ mainly in how the parameter $k$ is defined. Without parameterization, minimizing the number of leaves is the same as maximizing the number of internal vertices and vice versa.%

In particular, we focus on Lexicographic Depth-First Search (LDFS), Breadth-First Search (BFS), and Generic Search (GS). The latter is the most general graph search considered here, where the only condition on the next vertex to be visited is that it has a visited neighbor. In \cref{sec:internal}, we show that we can extend the \FPT{} results for the parameter ``number of internal vertices'' of DFS to LDFS by presenting exponential kernels. In contrast to DFS, we show that there are no polynomial kernels, unless $\NP \subseteq \coNP / \poly$. We also show in this section that minimizing the number of leaves of \cl-trees is para-\NP-hard when considered for GS or BFS. In \cref{sec:leaves}, we extend the hardness results for DFS with respect to the parameter ``number of leaves'' to LDFS. To this end, we present a graph modification called \emph{blow-up} that ensures that DFS $\cl$-trees with $k$ leaves of the original graph correspond to LDFS \cl-trees of the blow-up with $k$ leaves. A summary of our results is given in \cref{tab:results-tree}.

\begin{table}[t]
	\centering
	\caption{Known and new results about the parameterized complexity of finding \cl-trees with minimal or maximal number of leaves. The non-existence of polynomial kernels holds under the assumption that $\NP \not\subseteq \coNP / \poly$.
}\label{tab:results-tree}
    \setlength{\tabcolsep}{11pt}
	\begin{tabular}{l c c c c }
		\addlinespace
		\toprule
    
		\footnotesize results   &\footnotesize GS &\footnotesize BFS                           &\footnotesize DFS~\cite{bergougnoux2025parameterized}  &\footnotesize LDFS                 \\[0.8ex] \midrule
        \footnotesize\MinLeaf{}     &\footnotesize 
		  para-NP-hard  &\footnotesize 
		para-NP-hard  &\footnotesize 
		para-NP-hard & 
		para-NP-hard  
		         \\[2ex]
		\footnotesize{\MaxLeaf{}} &\footnotesize 
		?  &\footnotesize 
		?  &\footnotesize 
		W[1]-hard& 
		W[1]-hard 
        \\[1.2ex] 
        \footnotesize \MaxIn{}     &\footnotesize 
		  $\O(k)$ kernel  &\footnotesize 
		  ?   &\footnotesize 
		$\O(k^3)$ kernel& 
		\makecell{$\O(k \cdot 4^k)$ kernel \\ no poly kernel} 
        \\
		\footnotesize\MinIn{} &\footnotesize 
		  para-NP-hard   &\footnotesize 
		para-NP-hard   &\footnotesize 
		$\O(k^3)$ kernel & 
		\makecell{$\O(k \cdot 4^k)$ kernel\\ no poly kernel} 
        \\[0.8ex] 
		\bottomrule \addlinespace
	\end{tabular}
\end{table}

\section{Preliminaries}

Throughout this paper, all graphs are simple, undirected, connected, and non-empty. We denote the set of \emph{vertices} of a graph $G$ by $V(G)$ and the set of \emph{edges} by $E(G)$, where $n:=|V(G)|$ and $m:=|E(G)|$. Furthermore, we use $N_G(v)$ to denote the \emph{neighborhood} of a vertex $v\in V(G)$. For further graph-theoretic concepts, we refer to~\cite{west2001introduction}. For a comprehensive overview of the concepts of parameterized complexity,  we refer the reader to \cite{cygan2015param,downey2013fundamentals}. 

A \emph{graph search} $\cal A$ is a procedure to systematically visit all vertices of a graph. The associated \emph{search ordering} $\sigma=(v_1,\dots,v_n)$ represents the order in which the vertices are visited. Note that we write $u \prec_\sigma v$ if the vertex $u$ appears before $v$ (also called ``to the left of $v$'' in the following) in the search ordering $\sigma$. Here, we consider the following searches~$\cal A$. \emph{Generic Search} (GS) imposes no constraints beyond connectivity, i.e., after the first vertex, each newly visited vertex must simply be a neighbor of an already visited vertex. BFS refers to the standard \emph{Breadth-First Search} implemented through a queue data structure~\cite{Kleinberg06algorithmdesign}. \emph{Depth-First Search} (DFS) proceeds by exploring an adjacent vertex to the most recently visited vertex, backtracking when no unvisited neighbor remains. This traversal can be implemented using a stack data structure~\cite{Kleinberg06algorithmdesign}. \emph{Lexicographic Depth-First Search} (LDFS) employs lexicographically ordered labels to resolve ties when the most recently visited vertex has multiple unvisited neighbors~\cite{corneil2008unified}. Alternatively, LDFS can be defined using the \emph{4-point condition} given in~\cite{corneil2008unified}.

\begin{lemma}[Corneil and Krueger~\cite{corneil2008unified}]\label{lemma:4point-ldfs}
    The vertex ordering $\sigma$ is an LDFS ordering of some graph $G$ if and only if for all $a,b,c \in V(G)$ with $a \prec_\sigma b \prec_\sigma c$, $ac \in E(G)$ and $ab \notin E(G)$, there is a vertex $d \in V(G)$ with $a \prec_\sigma d \prec_\sigma b$ such that $db \in E(G)$ and $dc \notin E(G)$.
\end{lemma}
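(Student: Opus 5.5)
We prove both directions directly from the label-based definition of LDFS. We use the following formulation of that definition. When the $i$-th vertex is visited, the index $i$ is prepended to the label of each of its unvisited neighbors. Hence, just before the vertex $b$ is visited, the label $L(v)$ of an unvisited vertex $v$ is the sequence of $\sigma$-positions of the already visited neighbors of $v$, listed in decreasing order. Labels are compared lexicographically, with a proper prefix counting as smaller. Thus $\sigma$ is an LDFS ordering if and only if, at every step, the chosen vertex $b$ satisfies $L(b) \ge L(c)$ for every vertex $c$ with $b \prec_\sigma c$, where both labels are taken at the moment $b$ is chosen. Both directions then reduce to a single key observation about comparing two such labels.

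\textbf{Key observation.} Let $b, c$ be unvisited when $b$ is chosen, and suppose $L(b) \neq L(c)$. Let $x$ be the $\sigma$-rightmost visited vertex that is adjacent to exactly one of $b, c$. Then the two decreasing sequences agree on all positions larger than that of $x$, and they first differ where $x$ appears. Therefore $L(b) > L(c)$ if and only if $x$ is adjacent to $b$. This also covers the prefix case: if $x$ is adjacent to $c$, then either $L(b)$ has a smaller entry at that spot or $L(b)$ ends there, so $L(b) < L(c)$ in both cases. Making this convention precise is the main, though modest, technical point.

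\textbf{($\Rightarrow$)} Let $\sigma$ be an LDFS ordering, and let $a \prec_\sigma b \prec_\sigma c$ with $ac \in E(G)$ and $ab \notin E(G)$. Consider the moment $b$ is chosen. Since $a$ is visited, adjacent to $c$ and not to $b$, we have $L(b) \neq L(c)$, so the vertex $x$ from the observation exists. Because $b$ was chosen, $L(b) \ge L(c)$, hence $L(b) > L(c)$, and so $x$ is adjacent to $b$ but not to $c$. Moreover $x \neq a$, since $a$ is adjacent to $c$ only. As $a$ itself is a candidate for $x$ and $x$ is the rightmost candidate, we get $a \prec_\sigma x \prec_\sigma b$. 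Setting $d := x$ gives $db \in E(G)$ and $dc \notin E(G)$, as required.

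\textbf{($\Leftarrow$)} Suppose $\sigma$ satisfies the 4-point condition but is not an LDFS ordering. The first vertex is unconstrained, so there is a later step where $b$ is chosen although some vertex $c$ with $b \prec_\sigma c$ has $L(c) > L(b)$. By the observation, the rightmost visited vertex $a$ adjacent to exactly one of $b, c$ is adjacent to $c$ and not to $b$. Then $a \prec_\sigma b \prec_\sigma c$, $ac \in E(G)$ and $ab \notin E(G)$. The condition yields a vertex $d$ with $a \prec_\sigma d \prec_\sigma b$, $db \in E(G)$ and $dc \notin E(G)$. This $d$ is visited and adjacent to exactly one of $b, c$, and it lies to the right of $a$, contradicting the choice of $a$. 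Hence $\sigma$ is an LDFS ordering.
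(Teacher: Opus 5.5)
Your proof is correct. The paper does not prove this lemma at all; it cites it from Corneil and Krueger. Your argument is essentially the standard label-based proof. Because an LDFS label is the decreasing list of positions of already-visited neighbours, the comparison of $L(b)$ and $L(c)$ is decided by the rightmost visited vertex adjacent to exactly one of $b$ and $c$. In the forward direction this vertex is your $d$, and in the backward direction the vertex $d$ supplied by the 4-point condition contradicts the choice of $a$.
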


This lemma can also be formulated in the following way: If set $A$ contains the neighbors of a vertex $c$ among the $k$ most recently visited vertices by LDFS and another vertex $b$ is not adjacent to at least one $a \in A$ and to no vertex outside of $A$ among these last $k$ vertices, then $b$ cannot be visited next.
A similar characterization can also be given for DFS orderings.

\begin{lemma}[Corneil and Krueger~\cite{corneil2008unified}]\label{lemma:4point-dfs}
    The vertex ordering $\sigma$ is an DFS ordering of some graph $G$ if and only if for all $a,b,c \in V(G)$ with $a \prec_\sigma b \prec_\sigma c$, $ac \in E(G)$ and $ab \notin E(G)$, there is a vertex $d \in V(G)$ with $a \prec_\sigma d \prec_\sigma b$ such that $db \in E(G)$.
\end{lemma}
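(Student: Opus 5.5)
We prove the characterization in \cref{lemma:4point-dfs} directly from the stack-based definition of DFS: a DFS ordering is one in which every vertex $v_i$ with $i>1$ is a neighbor of the rightmost vertex among $v_1,\dots,v_{i-1}$ that still has an unvisited neighbor. Equivalently, if $v_j$ is the rightmost predecessor of $v_i$ having an unvisited neighbor at time $i$, then $v_i \in N_G(v_j)$. Both directions compare this definition with the 4-point condition.

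\emph{Necessity.} Let $\sigma$ be a DFS ordering and let $a \prec_\sigma b \prec_\sigma c$ with $ac \in E(G)$ and $ab \notin E(G)$. When $b$ is visited, $c$ is still unvisited and adjacent to $a$, so $a$ still has an unvisited neighbor. Hence the rightmost visited vertex with an unvisited neighbor lies at $a$ or to the right of $a$. By the DFS rule, $b$ is adjacent to that rightmost vertex $d$. Since $ab \notin E(G)$, we have $d \neq a$, so $a \prec_\sigma d \prec_\sigma b$ and $db \in E(G)$, as required.

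\emph{Sufficiency.} Suppose $\sigma$ satisfies the condition but is not a DFS ordering. Take the first index $i$ where the rule fails. Let $a$ be the rightmost predecessor of $b := v_i$ that has an unvisited neighbor $c$; then $ab \notin E(G)$. Such an $a$ exists because $G$ is connected: some predecessor of $b$ has a neighbor outside $\{v_1,\dots,v_{i-1}\}$. We may also choose $c \neq b$. If $b$ is $a$'s only unvisited neighbor, then $ab \in E(G)$, contradicting $ab \notin E(G)$. So $c \succ_\sigma b$. The condition now gives a vertex $d$ with $a \prec_\sigma d \prec_\sigma b$ and $db \in E(G)$. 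At the time $b$ is visited, $b$ itself is an unvisited neighbor of $d$. Thus $d$ is a predecessor of $b$ with an unvisited neighbor that lies strictly to the right of $a$, contradicting the choice of $a$.

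\emph{Where care is needed.} The one delicate point is fixing the precise definition of DFS used, since the statement is only as good as that definition. It must be the ``most recent vertex with an unvisited neighbor'' formulation, which matches backtracking and is equivalent to the stack implementation up to tie-breaking. Once that is fixed, both directions are the short exchange arguments above.
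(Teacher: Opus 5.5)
Your proof is correct. The paper gives no proof of this lemma; it is quoted from Corneil and Krueger. So there is no in-paper argument to compare against. Your two exchange arguments form a complete, self-contained proof for connected graphs. Connectivity is the paper's standing assumption, and you correctly use it to guarantee that the vertex $a$ exists in the sufficiency direction.

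The only step you assert rather than prove is the choice of DFS definition. You need that the paper's stack/backtracking description of DFS produces exactly the orderings in which each $v_i$, $i>1$, is adjacent to the rightmost earlier vertex that still has an unvisited neighbor. This deserves one sentence of justification:
\begin{itemize}
\item A vertex removed by backtracking has no unvisited neighbor and never regains one.
\item The vertices on the current recursion path appear in increasing $\sigma$-order.
\item Hence the deepest path vertex with an unvisited neighbor is precisely the rightmost visited vertex with one.
\item Conversely, any ordering obeying this rule is realized by backtracking DFS making the corresponding choices.
\end{itemize}

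Two cosmetic remarks. Choosing the \emph{first} failing index is unnecessary, since any failing index works. The detour about choosing $c \neq b$ is redundant: $ab \notin E(G)$ already forces every unvisited neighbor $c$ of $a$ to satisfy $b \prec_\sigma c$.
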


 Following~\cite{beisegel2021recognition}, a spanning tree $T$ of $G$ is called the \cl-tree of a search ordering $\sigma=(v_1,v_2,\dots,v_n)$ if $T$ is constructed as follows: $v_iv_j\in E(T)$ with $i<j$ if and only if $v_i\in N(v_j)$ and $v_\ell\not\in N(v_j)$ for all $\ell$ with $i<\ell<j$. In other words, in the tree $T$ each vertex $v$ is connected to the neighbor in $G$ that was \emph{last} visited prior to $v$ in $\sigma$.
 The \cl-tree of a search is rooted at the first vertex $r\in V(G)$ of the search ordering. A vertex $v \in V(T)$ is a \emph{leaf} if it has no descendants and is called an \emph{internal vertex} otherwise. Note that we do not consider the root to be a leaf, even if it has degree one. Further note that the $\cl$-tree is well-defined if the search ordering is an ordering that may be produced by GS.

There are various strong properties of $\cl$-trees that can be used algorithmically. For our purposes, the following observation will be helpful.

\begin{observation}\label{obs:dfs-leaves}
    Let $T$ be the $\cl$-tree of a DFS ordering $\sigma$ of the graph $G$. The following three conditions are equivalent for a vertex $v$:
    \begin{enumerate}[(i)]
        \item $v$ is a leaf of $T$,
        \item if $v$ has a successor $w$ in $\sigma$, then $w$ is not a neighbor of $v$,
        \item all neighbors of $v$ are to the left of $v$ in $\sigma$.
    \end{enumerate}
\end{observation}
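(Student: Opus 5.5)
We prove the cycle (i)$\Rightarrow$(iii)$\Rightarrow$(ii)$\Rightarrow$(i). Besides the definition of the $\cl$-tree, the only tool needed is the following defining property of DFS orderings: if $\sigma=(v_1,\dots,v_n)$ is a DFS ordering and $v_{i+1}$ is its successor, then the parent of $v_{i+1}$ in $T$ is the rightmost vertex among $v_1,\dots,v_i$ that still has an unvisited neighbor. This is the stack semantics: DFS backtracks from $v_i$ to the most recent vertex with an unvisited neighbor and visits one of its neighbors. It can also be derived from \cref{lemma:4point-dfs}, as shown in the next paragraph.

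First we check this property via \cref{lemma:4point-dfs}. Let $u=v_j$ be the rightmost vertex in $\{v_1,\dots,v_i\}$ having a neighbor outside $\{v_1,\dots,v_i\}$. Suppose $v_{i+1}$ is not adjacent to $u$. Take $c\in N(u)$ with $c\succ_\sigma v_{i+1}$, and apply the lemma with $a=u$ and $b=v_{i+1}$. This yields some $d$ with $u\prec_\sigma d\prec_\sigma v_{i+1}$ and $db\in E(G)$. Such a $d$ lies to the right of $u$ and has the unvisited neighbor $v_{i+1}$ at time $i$, contradicting the choice of $u$. Hence $v_{i+1}$ is adjacent to $u$. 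No vertex strictly between $u$ and $v_{i+1}$ is adjacent to $v_{i+1}$, again by maximality of $u$. So by the definition of the $\cl$-tree, $u$ is the parent of $v_{i+1}$.

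(iii)$\Rightarrow$(ii): if all neighbors of $v$ precede $v$, then the successor $w$ of $v$ is not a neighbor of $v$.

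(ii)$\Rightarrow$(iii): suppose $v=v_i$ has a neighbor to its right. Then at time $i$ the vertex $v_i$ itself is the rightmost visited vertex with an unvisited neighbor. By the property above, the successor $v_{i+1}$ is adjacent to $v_i$, contradicting (ii).

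(iii)$\Leftrightarrow$(i): in the $\cl$-tree, every child $w$ of $v$ is a neighbor of $v$ with $v\prec_\sigma w$. So (iii) implies $v$ has no children, i.e., (i). Conversely, if $v$ has a neighbor to its right, then by the previous step its successor $v_{i+1}$ is adjacent to $v_i$. Since nothing lies strictly between $v_i$ and $v_{i+1}$, the $\cl$-tree makes $v_i$ the parent of $v_{i+1}$. Thus $v$ is not a leaf.

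The main obstacle is justifying the backtracking property rigorously from the paper's definition of DFS, for which the 4-point condition argument above suffices. A minor point is that the root never counts as a leaf; the statement presumably concerns non-root $v$, or else this case is covered trivially when $n=1$.
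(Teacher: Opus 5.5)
Your proof is correct. The paper gives no proof here. It records the statement as an observation, implicitly relying on the stack semantics of DFS: after visiting $v$, DFS moves to an unvisited neighbor of $v$ whenever one exists, and only otherwise backtracks. You instead derive it formally from the 4-point characterization in \cref{lemma:4point-dfs}, which is also the tool the paper uses later, e.g.\ in \cref{lemma:blow-up-dfs}.

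Your route is more rigorous but heavier than needed. You only ever use the special case $u=v_i$ of your backtracking property, and that case is a one-liner. If $v=v_i$ had a neighbor $c$ with $v_{i+1}\prec_\sigma c$ while $v_iv_{i+1}\notin E(G)$, then \cref{lemma:4point-dfs} applied to $(v_i,v_{i+1},c)$ would demand a vertex strictly between $v_i$ and $v_{i+1}$, and none exists. You also announce the cycle (i)$\Rightarrow$(iii)$\Rightarrow$(ii)$\Rightarrow$(i) but actually prove (ii)$\Leftrightarrow$(iii) and (i)$\Leftrightarrow$(iii); that is equally sufficient.

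On the root: for $n\ge 2$ your argument already handles it. Connectivity gives $v_1$ a neighbor to its right, so (ii) and (iii) fail exactly as (i) does. The one-vertex graph is the only case where the equivalence breaks: there (ii) and (iii) hold vacuously, but the root is not a leaf by convention. So that case is a genuine exception rather than something ``covered trivially''.
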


\section{Number of Internal Vertices as Parameter}\label{sec:internal}
\subsection{LDFS Trees}

Bergougnoux et al.~\cite{bergougnoux2025parameterized} have shown that the two problems \MinInL{} and \MaxInL{} of DFS have polynomial kernels and, therefore, are in \FPT{}. We extend their results  to \cl-trees of LDFS to some extent. First, we show that the structural properties of DFS \cl-trees also yield exponential kernels for both problems when considering the special case of LDFS \cl-trees. However, we also show that the more sophisticated search rule of LDFS appears to  increase the parameterized complexity of the problems. To this end, we show that \MinIn{} and \MaxInL{} of LDFS do not admit polynomial kernels unless $\NP \subseteq \coNP / \poly$.

\subparagraph{Exponential Kernels}

The polynomial kernels of Bergougnoux et al.~\cite{bergougnoux2025parameterized} for the DFS problems hinge on the following results, which also form important ingredients for our algorithms.

\begin{observation}[Bergougnoux et al.~\cite{bergougnoux2025parameterized}]\label{lemma:dfs-vc1}
    The set of internal vertices of any \cl-tree of DFS of a connected graph $G$ is a vertex cover of $G$.
\end{observation}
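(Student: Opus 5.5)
The claim is that the set $I$ of internal vertices of the \cl-tree $T$ of a DFS ordering $\sigma$ is a vertex cover. Equivalently, the set of leaves of $T$ is an independent set in $G$. So I would argue by contradiction: assume two leaves $u$ and $v$ of $T$ are adjacent in $G$, and derive a contradiction from \cref{obs:dfs-leaves}.

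Without loss of generality let $u \prec_\sigma v$. Since $u$ is a leaf, condition (iii) of \cref{obs:dfs-leaves} says that every neighbor of $u$ lies to the left of $u$ in $\sigma$. But $v$ is a neighbor of $u$ with $u \prec_\sigma v$, which contradicts this. Hence no edge of $G$ has both endpoints among the leaves, so every edge has at least one endpoint in $I$.

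One subtlety must be handled: by convention the root $r$ is never counted as a leaf, so it always belongs to $I$. Therefore edges incident to $r$ are trivially covered, and the argument above only needs to consider the non-root leaves, to which \cref{obs:dfs-leaves} applies directly.

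If the direction (i)$\Rightarrow$(iii) of \cref{obs:dfs-leaves} needs justification, it is the only nontrivial step. Suppose $v$ is a leaf but has a neighbor $w$ with $v \prec_\sigma w$. Take the moment $v$ is visited: the DFS stack then keeps $v$ on top until all unvisited neighbors of $v$ are exhausted. So the vertex visited immediately after $v$ is itself a neighbor of $v$, and its rightmost visited neighbor is $v$. This makes $v$ its parent in $T$, contradicting that $v$ is a leaf. The same conclusion also follows from \cref{lemma:4point-dfs} by letting $x$ be the immediate successor of $v$ and applying the lemma to the triple $(v,x,w)$.
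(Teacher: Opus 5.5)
Your proof is correct. The paper gives no proof of its own (it cites the observation from Bergougnoux et al.), and your argument is the immediate consequence of \cref{obs:dfs-leaves} that the paper relies on: by (i)$\Rightarrow$(iii), a leaf has no neighbor to its right in $\sigma$, so no two leaves are adjacent. Your justification of (i)$\Rightarrow$(iii) is also sound; in the 4-point variant, you should note separately the trivial case where the successor $x$ of $v$ equals $w$.
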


\begin{lemma}[Bergougnoux et al.~\cite{bergougnoux2025parameterized}]\label{lemma:dfs-vc3}
    Let $G$ be a connected graph and let $Z$ be a vertex cover of $G$. Then, every rooted spanning tree of $G$ has at most $2|Z|$ internal vertices, and at most $|Z|$ internal vertices are not in $Z$.
\end{lemma}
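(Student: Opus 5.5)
We have to prove Lemma~\ref{lemma:dfs-vc3}: if $Z$ is a vertex cover of the connected graph $G$ and $T$ is any rooted spanning tree of $G$, then $T$ has at most $|Z|$ internal vertices outside $Z$. Together with the trivial bound of $|Z|$ internal vertices inside $Z$, this gives at most $2|Z|$ internal vertices in total. The plan is a charging argument that maps the internal vertices outside $Z$ injectively into $Z$.

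Let $v \notin Z$ be an internal vertex of $T$. By definition $v$ has at least one child $c$ in $T$. The tree edge $vc$ is an edge of $G$, and $Z$ covers it. Since $v \notin Z$, we must have $c \in Z$. We now charge $v$ to one such child $c_v \in Z$, chosen arbitrarily among the children of $v$ that lie in $Z$.

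The charging map is injective. In a rooted tree every non-root vertex has exactly one parent, so two different vertices $v \neq v'$ cannot have a common child. Hence $c_v \neq c_{v'}$. It follows that the number of internal vertices outside $Z$ is at most $|Z|$. Moreover, the image of the map consists of non-root vertices of $Z$, which gives the slightly sharper bound $|Z \setminus \{r\}|$, though this is not needed.

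For the total count, the internal vertices inside $Z$ number at most $|Z|$ trivially. Adding the two bounds gives at most $2|Z|$ internal vertices. Two edge cases deserve a check. If $Z$ is empty, then $G$ has no edges and, being connected and non-empty, is a single vertex; that vertex is the root, which has no children and so is not internal, and the bound $0$ holds. The root itself, if it lies outside $Z$ and is internal, is handled by the same argument, since it too has a child, which must lie in $Z$.

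I do not expect any step to be a real obstacle. The only point needing care is to charge each vertex to a child rather than to its parent, because the parent map is not injective whereas the child-to-parent relation is.
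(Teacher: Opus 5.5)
Your argument is correct, and it is the standard one. The paper does not reprove this lemma; it cites it from Bergougnoux et al. The injection you use sends each internal vertex outside $Z$ to a child, which lies in $Z$ because the tree edge must be covered, and distinct vertices get distinct children because parents are unique; this is the natural argument behind the lemma.

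The only convention-sensitive point is your $Z=\emptyset$ remark. The paper declares that the root is never a leaf, so for $K_1$ the lone root could count as internal and the statement itself would fail there. For every graph with at least two vertices the root has a child, so your argument applies unchanged.
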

The next result shows that, for a graph $G$, one can add false twins of leaves of an LDFS $\cl$-tree of $G$ without changing the number of internal vertices of the LDFS $\cl$-tree.

\begin{lemma}\label{lemma:adding-twin}
    Let $\sigma$ be an LDFS ordering of a graph $G$ whose $\cl$-tree $T$ has $k$ internal vertices and let $v$ be a leaf in $T$. Furthermore, let $H$ be the graph that is constructed by adding a false twin $w$ of $v$ to $G$. Adding $w$ to the ordering $\sigma$ directly after $v$ creates an LDFS ordering of $H$ whose $\cl$-tree has exactly $k$ internal vertices.
\end{lemma}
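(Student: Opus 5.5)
Let $\tau$ be the ordering obtained from $\sigma$ by inserting $w$ directly after $v$. The plan has two parts: first show that $\tau$ is an LDFS ordering of $H$ via the 4-point condition of \cref{lemma:4point-ldfs}, and then compare the $\cl$-tree $T'$ of $\tau$ with $T$.

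\emph{Tree structure.} Since $v$ is a leaf of $T$ and $\sigma$ is in particular a DFS ordering, \cref{obs:dfs-leaves} says that all neighbors of $v$ in $G$ lie to the left of $v$ in $\sigma$. Because $w$ is a false twin of $v$, $N_H(w)=N_G(v)$, so all neighbors of $w$ precede $v$ as well, and $v\notin N_H(w)$.
\begin{itemize}
\item The parent of $w$ in $T'$ is the rightmost neighbor of $w$ before it. This is the rightmost neighbor of $v$ before $v$, i.e., the parent of $v$ in $T$, which is already internal.
\item No vertex after $w$ is adjacent to $w$, so $w$ is a leaf, and $v$ stays a leaf.
\item For any vertex $x\ne w$, its rightmost earlier neighbor is unchanged. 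Inserting $w$ could only matter if $w\in N_H(x)$, i.e., $x\in N(v)$. All such $x$ precede $v$, hence precede $w$, so $w$ is never an earlier neighbor of them.
\end{itemize}
Thus $T'$ is $T$ with the leaf $w$ attached to $v$'s parent, and the set of internal vertices is unchanged, giving exactly $k$ internal vertices.

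\emph{LDFS property.} Take $a\prec_\tau b\prec_\tau c$ with $ac\in E(H)$ and $ab\notin E(H)$. We need some $d$ with $a\prec_\tau d\prec_\tau b$, $db\in E(H)$ and $dc\notin E(H)$.
\begin{itemize}
\item If $w\notin\{a,b,c\}$, the condition holds in $\sigma$ and yields some $d$. If $d\neq w$ we are done, since $d$ also works in $\tau$.
\item $w=c$ is impossible: $a$ must be a neighbor of $w$, so $a$ is a neighbor of $v$ and precedes $v$. If $b=v$, we use $d$ from the triple $(a,v,v)$?? — that fails since $c$ must differ, so we handle it directly. We need $a\prec_\tau d\prec_\tau v$ with $dv\in E$ and $dw\notin E$. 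But $N(w)=N(v)$ and $v$ has no neighbor adjacent to itself in that sense; $dv\in E$ means $d\in N(v)=N(w)$, so $dw\in E$, and no such $d$ exists. So the configuration $b=v$, $c=w$ must itself be excluded. It is: $ac\in E$ means $a\in N(v)$, so $ab=av\in E$, contradicting $ab\notin E$. More generally, for $c=w$ with $b\neq v$, the triple $(a,b,v)$ is valid in $\sigma$ (as $av\in E$ and $b\prec v$ or $b=v$), and its witness $d$ satisfies $dv\notin E$, hence $dw\notin E$.
\item If $w=b$, then $a\prec v\prec w\prec c$ with $aw\notin E$, so $av\notin E$. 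If $c\in N(v)$, then $c$ precedes $v$, a contradiction; so $c\notin N(w)$ and $d=v$ fails only because $vw\notin E$. Instead apply the condition in $\sigma$ to $(a,v,c)$, or to $(a,c',c)$ for suitable $c'$: it gives $d$ with $a\prec d\prec v$, $dv\in E$, $dc\notin E$. Then $dw\in E$, so $d$ works.
\item If $w=a$, then $ac\in E$ forces $c\in N(v)$, so $c\prec v\prec w$, which is impossible.
\end{itemize}

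The main obstacle is this case analysis, in particular ruling out that $w$ is the only available witness $d$. That would require $db\in E$ with $d=w$, i.e., $b\in N(v)$, so $b\prec v\prec w=d$, contradicting $d\prec b$. So $d=w$ never occurs, and all cases reduce to the 4-point condition for $\sigma$ together with the fact that $w$ and $v$ have identical neighborhoods, all located before $v$.
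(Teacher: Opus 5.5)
Your proposal is correct and is essentially the paper's proof: the 4-point condition for the new ordering comes from substituting $v$ for $w$ in any triple containing $w$ (after excluding $a\in\{v,w\}$ and the combination $b=v$, $c=w$), and the tree claim holds because $w$ is a leaf whose rightmost earlier neighbor is the parent of $v$. In the write-up, fix the case $c=w$, which you open by wrongly calling impossible (only its sub-case $b=v$ is), state $a\neq v$ explicitly in the case $b=w$, and drop the worry about $d=w$, since every witness is taken from $\sigma$ and so lies in $V(G)$.
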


\begin{proof}
    First, we show that the new ordering is an LDFS ordering of $G'$. Let $a$, $b$, and $c$ be in $V(G')$ such that $a \prec_{\sigma'} b \prec_{\sigma'} c$, $ac \in E(G)$, and $ab \notin E(G)$. Due to \cref{obs:dfs-leaves}, $a$ is neither $v$ nor $w$. Furthermore, since $v$ and $w$ are false twins, at most one of the two can appear in $\{b,c\}$. Thus, if $w$ equals $b$ or $c$, then we can replace it with $v$ and get another triple of $\sigma$ fulfilling the prerequisites of the 4-point condition given in \cref{lemma:4point-ldfs}. This implies that there is a vertex between $a$ and $b$ that is adjacent to $b$ and not to $c$, and hence $\sigma'$ is an LDFS ordering.

    Due to \cref{obs:dfs-leaves}, $w$ is a leaf in the \cl-tree of $\sigma'$. Furthermore, the rightmost neighbor of $w$ in $\sigma'$ is the rightmost neighbor of $v$ in $\sigma'$. Hence, $v$ and $w$ have the same parent in the \cl-tree of $\sigma'$ and the number of internal vertices did not change.
\end{proof}

In the following, we aim to bound the kernel size of the problems using the vertex cover number. As the internal vertices of a spanning tree form a vertex cover of the underlying graph, we can use this to construct a kernel parametrized by the solution size.

\begin{lemma}\label{lemma:ldfs-kernel-vc}
    \MinInL{} and \MaxInL{} of LDFS admit kernels with $\O(\tau \cdot 4^{\tau})$ vertices when parameterized by the vertex cover number $\tau$ of the graph.
\end{lemma}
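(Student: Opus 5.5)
Given a minimum vertex cover $Z$ with $|Z| = \tau$, I would compute it (or a 2-approximation, which only changes constants) in FPT time. I then partition the independent set $I = V(G)\setminus Z$ into classes of false twins, grouping vertices of $I$ by their neighborhood $N(v)\subseteq Z$. There are at most $2^{\tau}$ such classes. The reduction rule is: if a class contains more than $c\cdot\tau$ vertices, for a suitable constant $c$ (e.g.\ $c=2$), delete one vertex of the class. Applying this exhaustively leaves at most $\tau + 2^{\tau}\cdot c\tau$ vertices. That is already $\O(\tau\cdot 2^\tau)$, comfortably within the claimed $\O(\tau\cdot 4^\tau)$. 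If the exact argument needs classes indexed by pairs of subsets (for instance, remembering which neighbors are "last" in the ordering), the bound $4^\tau$ leaves room for that refinement. The parameter $k$ is kept unchanged, and the instance is answered trivially if $k$ exceeds $2\tau$, by \cref{lemma:dfs-vc3}.

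For safeness, let $H$ be the graph after deleting a false twin $w$ from a large class $C$, and let $G$ be the original graph. In one direction, given an LDFS ordering of $H$ whose \cl-tree has $k$ internal vertices, I would show that some vertex $v\in C\setminus\{w\}$ is a leaf. Internal vertices outside $Z$ number at most $\tau$ by \cref{lemma:dfs-vc3}, and $|C\setminus\{w\}| > \tau$, so such a leaf exists. Then \cref{lemma:adding-twin} inserts $w$ right after $v$, producing an LDFS ordering of $G$ with exactly $k$ internal vertices. This gives both the ``at most $k$'' and the ``at least $k$'' variants.

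For the converse, start from an LDFS ordering $\sigma$ of $G$ whose tree has $k$ internal vertices. By the same counting argument, many vertices of $C$ are leaves. I need to delete one leaf of $C$ while preserving both the LDFS property and the number of internal vertices. This is the main obstacle, because deleting a vertex can break the 4-point condition of \cref{lemma:4point-ldfs}: a deleted $w$ might have served as the witness $d$ for some triple. To handle this, I would pick $w$ to be a leaf of $C$ that has a twin $v\in C$ which is also a leaf and occurs before $w$ with no neighbor of $w$ between them. Ideally $v$ is placed directly before $w$, after a reordering argument that moves twin leaves together.

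The intended strategy is to use a leaf twin $v$ in place of $w$ whenever $w$ is a witness $d$. This works if $v$ also lies between $a$ and $b$; if not, I would choose $w$ as the leftmost leaf of $C$ and argue via DFS structure (\cref{obs:dfs-leaves}) that consecutive leaf twins can be contracted. Leaves have no later neighbors, and the parents of the remaining vertices do not change when a leaf is removed. So the internal count stays $k$, unless $w$'s parent loses its only child; choosing $w$ with a sibling twin leaf avoids this. The class-size threshold of roughly $2\tau$ to $4\tau$ guarantees such a pair exists.
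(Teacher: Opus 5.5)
Your reduction rule and the direction from the reduced instance back to $G$ match the paper's proof: \cref{lemma:dfs-vc3} gives a leaf twin, and \cref{lemma:adding-twin} reinserts the deleted vertex. The gap is in the other direction, which you yourself flag as the main obstacle. You never show that removing a leaf $w$ from an LDFS ordering $\sigma$ of $G$ leaves an LDFS ordering of $G-w$. Your workaround is not carried out. It has a twin $v$ take over as witness, uses a ``reordering argument that moves twin leaves together'', and ``contracts consecutive leaf twins''. It explicitly leaves open the case where $v$ is not between $a$ and $b$. Any reordering of $\sigma$ would also need its own check of the 4-point condition. As written, this half of the safeness proof is missing.

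The missing observation removes the obstacle entirely. In \cref{lemma:4point-ldfs} the witness $d$ satisfies $d \prec_\sigma b$ and $db \in E(G)$. So $d$ has a neighbor to its right, and by \cref{obs:dfs-leaves} it is an internal vertex of the $\cl$-tree. A leaf is therefore never a witness. For every triple of $\sigma - w$, the witness supplied by $\sigma$ survives. Hence $\sigma$ restricted to the complement of any set of leaves is again an LDFS ordering. Its $\cl$-tree is obtained by deleting those leaves, since a leaf is nobody's parent. This is exactly the paper's argument. Which twin the rule deletes does not matter, since $G-w \cong G-w'$ for false twins $w,w'$.

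Your worry that the parent of $w$ might lose its only child is legitimate; the paper does not spell it out. It is settled without any special choice of $w$. If false twins $w,w'$ are both leaves, their whole common neighborhood lies to their left. So both have the rightmost vertex of that neighborhood as parent. All leaves of a twin class are therefore siblings, and keeping at least one of them keeps the parent internal.

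A minor point: a 2-approximate vertex cover does not ``only change constants''. With $|Z| \le 2\tau$ there can be $2^{|Z|} = 4^{\tau}$ twin classes, which is exactly where the paper's $4^{\tau}$ comes from. Your $\O(\tau\cdot 2^{\tau})$ bound needs an exact minimum cover. Computing that in FPT time has to be reconciled with the polynomial running time a kernelization requires, for example by noting the instance is already small whenever the FPT algorithm would be too slow.
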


\begin{proof}
    We use the folklore greedy 2-approximation algorithm attributed to Gavril as well as Yannakakis in {\cite[p.~432]{papadimitriou1998combinatorial}} to find a vertex cover $Z$ of $G$ of size $s \leq 2\tau$.

    For every $v \in V(G) \setminus Z$, it holds that $N(v) \subseteq Z$. Note that none of these neighborhoods can be empty since $G$ is connected. For every non-empty subset $Z' \subseteq Z$ let $t(Z')$ be the number of vertices in $V(G) \setminus Z$ whose neighborhood is equal to $Z'$. If $t(Z') > s + 1$, then we remove from $G$ exactly $t(Z') - (s+1)$ vertices of $V(G) \setminus Z$ for each of which the neighborhood is $Z'$. Thus, the resulting graph $G'$ has $\leq s + (s+1) \cdot (2^s - 1) = \O(\tau \cdot 4^\tau)$ vertices.

    First, assume that $G$ has an LDFS ordering $\sigma$ whose $\cl$-tree $T$ has exactly $k$ internal vertices. Due to \cref{lemma:dfs-vc3}, there are at most $s$ internal vertices that are not part of $Z$. In particular, there are at most $s$ internal vertices not in $Z$ for every possible neighborhood in~$Z$. Thus, we can assume that all internal vertices of $T$ are elements of $G'$ as otherwise we may replace vertices by false twins of them. This implies that $T' = T[V(G')]$ is a subtree of~$T$. We claim that $T'$ is the $\cl$-tree of the LDFS ordering $\sigma' = \sigma[V(G')]$ of $G'$. It is straightforward to verify that $T'$ is the \cl-tree of $\sigma'$. This holds since only leaves of $T$ are removed and thus the parent of every vertex is still its rightmost neighbor before it. It remains to show that $\sigma'$ is an LDFS ordering of $G'$. Let $a$, $b$, and $c$ be three vertices with $a \prec_{\sigma'} b \prec_{\sigma'} c$ such that $ac \in E(G)$ and $ab \notin E(G)$. Due to \cref{lemma:4point-ldfs}, there is a $d \in V(G)$ with $a \prec_\sigma d \prec_\sigma b$ and $db \in E(G)$, but $dc \notin E(G)$. As $d$ has a neighbor to the right of it in $\sigma$, vertex $d$ is not a leaf in $T$, due to \cref{obs:dfs-leaves}. Hence, $d$ is also contained in $\sigma'$. This shows that the four point condition also holds for $\sigma'$.

    Now assume that $G'$ has an LDFS ordering $\sigma'$ whose $\cl$-tree $T'$ has exactly $k$ internal vertices. Let $v \in V(G) \setminus V(G')$, then $v \notin Z$ and there are $s+1$ vertices in $G'$ that are false twins of $v$. Due to \cref{lemma:dfs-vc3}, there is at least one of these false twins that is a leaf in $T'$, say $w$. Adding $v$ directly after $w$ creates an LDFS ordering of $G' \cup \{v\}$ whose \cl-tree has the same number of internal vertices due to \cref{lemma:adding-twin}. By repeating this process, we conclude that $G$ has an LDFS $\cl$-tree with exactly $k$ internal vertices.
\end{proof}

\begin{theorem}\label{lemma:ldfs-kernel}
    \MinInL{} and \MaxInL{} of LDFS admit kernels with $\O(k \cdot 4^{k})$ vertices.
\end{theorem}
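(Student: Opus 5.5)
The plan is to reduce to \cref{lemma:ldfs-kernel-vc} by relating the vertex cover number $\tau$ of $G$ to the parameter $k$. We treat the two problems separately. In each case we either solve the instance directly or conclude that $\tau \le 2k$, roughly. Then the kernel of \cref{lemma:ldfs-kernel-vc} has $\O(\tau \cdot 4^{\tau})$ vertices, which is exponential in $\O(k)$. To obtain exactly $\O(k \cdot 4^k)$, we exploit that the kernel construction only needs a vertex cover $Z$ of size $s$, giving $\O(s\cdot 2^s)$ vertices.

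\textbf{\MinInL{}.} We ask for an LDFS $\cl$-tree with at most $k$ internal vertices. By \cref{lemma:dfs-vc1}, the internal vertices of every DFS (hence LDFS) $\cl$-tree form a vertex cover, so a yes-instance has $\tau \le k$. We run the greedy 2-approximation. If it returns a cover of size $> 2k$, then $\tau > k$ and we output a trivial no-instance. Otherwise we have a cover $Z$ with $s = |Z| \le 2k$. We then apply the twin-reduction of \cref{lemma:ldfs-kernel-vc} directly with this $Z$: keep at most $s+1$ false twins of $V(G)\setminus Z$ per neighborhood class. This yields $\O(s\cdot 2^s)=\O(k\cdot 4^k)$ vertices. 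Correctness is exactly the argument in that proof. It only used that $Z$ is a vertex cover of size $s$ and \cref{lemma:dfs-vc3}, which bounds the internal vertices outside $Z$ by $s$.

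\textbf{\MaxInL{}.} We ask for an LDFS $\cl$-tree with at least $k$ internal vertices. Every LDFS $\cl$-tree has an internal vertex set that is a vertex cover, so it has at least $\tau$ internal vertices. Hence if $\tau \ge k$, every LDFS ordering works, and we answer yes; compute any LDFS ordering in polynomial time. Again we use the approximation: if the greedy cover has size $s \ge 2k$, then $\tau \ge s/2 \ge k$ and the instance is a yes-instance, so we output a trivial yes-instance. Otherwise $s < 2k$, and the reduction of \cref{lemma:ldfs-kernel-vc} with this $Z$ gives $\O(k\cdot 4^k)$ vertices.

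The main point to verify is that the proof of \cref{lemma:ldfs-kernel-vc} preserves the exact number of internal vertices. It does: it maps orderings with exactly $k'$ internal vertices in both directions. So both the "at most" and "at least" versions transfer, and the reduction is valid for any vertex cover $Z$, not just the approximate one. Everything else is routine bookkeeping of the thresholds $s$ versus $2k$.
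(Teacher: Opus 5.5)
Your proof is correct. For \MinInL{} it is identical to the paper's: compute the greedy cover $Z$, reject if $|Z|>2k$ (then $\tau>k$, and by \cref{lemma:dfs-vc1} every LDFS \cl-tree has more than $k$ internal vertices), and otherwise apply the twin reduction of \cref{lemma:ldfs-kernel-vc}.

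For \MaxInL{} you take a slightly different route. You certify yes-instances via the matching lower bound $\tau \ge |Z|/2$. The paper instead runs one LDFS. If its \cl-tree has fewer than $k$ internal vertices, it uses that internal vertex set (a vertex cover by \cref{lemma:dfs-vc1}) to conclude $\tau<k$, and then invokes \cref{lemma:ldfs-kernel-vc}. Your version is more uniform: the same greedy cover serves both problems, and no search needs to be run. The paper's version has an actual vertex cover of size less than $k$ in hand. Fed directly into the twin reduction as $Z$, that cover would even give $\O(k\cdot 2^k)$ vertices for \MaxInL{}.

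One caution when you instantiate the trivial no-instance for \MinInL{}. The star $K_{1,k+2}$, which is the paper's choice, does not work: an LDFS started at the center has a single internal vertex. Something like $K_{k+2}$ does work, since all of its \cl-trees are Hamiltonian paths with $k+1$ internal vertices.
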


\begin{proof}
    For \MaxInL{}, we run an LDFS and compute the \cl-tree $T$. If this tree has $\geq k$ internal vertices, the instance is a YES-instance, and we return a $P_{k+1}$; a trivial YES-instance of size $\O(k)$. Otherwise, due to \cref{lemma:dfs-vc1}, the internal vertices of $T$ form a vertex cover, and thus $\tau < k$. So applying \cref{lemma:ldfs-kernel-vc}, we get a kernel with $\O(k \cdot 4^k)$ vertices.

    For \MinInL{}, consider the procedure of \cref{lemma:ldfs-kernel-vc}. If the set $Z$ has size larger then $2k$, then $\tau > k$ and there is no LDFS \cl-tree with $\leq k$ internal vertices, due to \cref{lemma:dfs-vc1}. So we return a $K_{1,k+2}$, a trivial NO-instance of size $\O(k)$. Otherwise, the computed kernel contains $\O(k \cdot 4^k)$ vertices.
\end{proof}

The existence of these kernels implies that there are also \FPT{} algorithms for these problems since we can solve them on the kernel in finite time depending only on $k$. 

\subparagraph{Lower Bound for Kernels} 
As mentioned before, the sizes of our kernels are exponential in the respective parameter, while Bergougnoux et al.~\cite{bergougnoux2025parameterized} present polynomial kernels for the same problems when considered with DFS. Here, we will show that the existence of a polynomial kernel for \MinInL{} of LDFS is unlikely, as it would imply $\NP \subseteq \coNP / \poly$. Our proof uses the following problem.

\newcommand{\CVC}{\textsc{Non-Uniform Clique parameterized by Vertex Cover}}

\begin{problem}[\CVC]~
    \begin{description}
        \item[Instance:] A graph $G$, a vertex cover $Z$ of $G$, an integer $\ell \in \N$
        \item[Parameter:] $|Z|$
        \item[Question:] Is there a clique of $G$ of size $\geq \ell + 1$ that contains a vertex that is not in~$Z$?
    \end{description}
\end{problem}

Bodlaender, Jansen, and Kratsch~\cite{bodlaender2014kernerlization} have shown that \textsc{Non-Uniform Clique parameterized by Vertex Cover} does not admit a polynomial compression, unless $\NP \subseteq \coNP / \poly$. Note that the authors did not explicitly force a vertex outside of $Z$ to be part of the clique. However, in their proof, every possible solution has to fulfill this property.

We call an instance $(G,Z,\ell)$ of \CVC{} \emph{normalized} if it fulfills the following condition: For every two distinct vertices $u,v \in V(G) \setminus Z$, there are vertices $x,y \in Z$ such that $x \in N(u) \setminus N(v)$ and $y \in N(v) \setminus N(u)$. If this is not the case for some pair $u$ and $v$, then the neighborhood of one of the vertices, say $u$, is a subset of the neighborhood of the other. Thus, $u$ is not interesting for the clique problem as for every clique that contains $u$ there is a clique of the same size that contains $v$ instead of $u$. Hence, we can restrict ourselves to normalized instances. Furthermore, we can normalize any instance in polynomial time. Giving these observations, we can prove the following theorem.

\begin{theorem}\label{thm:mininl-lower-bound}
    \MinInL{} of LDFS does not admit a polynomial compression when parameterized by the solution size plus the vertex cover number.
\end{theorem}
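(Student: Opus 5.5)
The plan is a polynomial parameter transformation from normalized instances of \CVC{} to \MinInL{} of LDFS. The new parameter, the solution size $k$ plus the vertex cover number of the constructed graph, must be bounded by a polynomial in $|Z|$. Since \CVC{} has no polynomial compression unless $\NP \subseteq \coNP / \poly$ (Bodlaender, Jansen, and Kratsch), the statement then follows. We may assume $(G,Z,\ell)$ is normalized, with $\ell \le |Z|$; otherwise the instance is trivial, because a clique contains at most one vertex of the independent set $V(G)\setminus Z$. The intended picture is as follows. In an LDFS ordering of the constructed graph $H$, every vertex of $V(G)\setminus Z$ should be a leaf; by \cref{obs:dfs-leaves}, this holds exactly when all of its neighbours are visited before it. The internal vertices should then essentially be a fixed set of size $\mathrm{poly}(|Z|)$, together with a few vertices whose number is governed by the clique size.

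\textbf{Construction.} I would keep $I = V(G)\setminus Z$ as an independent set attached to $Z$ as in $G$. I would replace $G[Z]$ by its complement, or add a padding gadget of $\mathrm{poly}(|Z|)$ vertices, for example a path or clique of length depending on $\ell$. The aim is that the LDFS 4-point condition (\cref{lemma:4point-ldfs}) forces the search, once it enters a vertex $v\in I$, to continue through $N(v)\cap Z$ in an order in which consecutive vertices are adjacent in $H$. Each such step creates a new internal vertex. We then set $k = |Z| + c$ for a constant $c$, or $k = |Z| + (|Z| - \ell)$, chosen so that the budget is met exactly when the vertices of $Z$ visited along a long ``run'' after some $v\in I$ form a clique of size $\ell$ in $G$. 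Every vertex of $H$ outside $I$ goes into the vertex cover, so the vertex cover number is at most $|Z| + \mathrm{poly}(|Z|)$, and $k$ is polynomial in $|Z|$.

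\textbf{Correctness.} For the forward direction, given a clique $C\ni v$ with $v\notin Z$ and $|C|\ge \ell+1$, I would write down an explicit ordering. It starts at the gadget, visits $v$, then $C\setminus\{v\}$, then the rest of $Z$, and finally all remaining vertices of $I$ in an order that makes each of them a leaf. I would verify the LDFS property with the 4-point condition and count internal vertices via \cref{obs:dfs-leaves}. Normalization is used in the backward direction. Given an LDFS ordering with at most $k$ internal vertices, \cref{lemma:dfs-vc1} and \cref{lemma:dfs-vc3} confine the internal vertices to $Z$, the gadget, and at most a bounded number of vertices of $I$. Normalization guarantees that no vertex of $I$ has a neighbourhood contained in another's, so LDFS cannot simply ``finish'' one $I$-vertex's neighbourhood via a twin. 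Then the first $I$-vertex $v$ that becomes internal, together with the consecutive $Z$-vertices the tie-breaking forces after it, must form a clique of size $\ell+1$ containing $v$.

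\textbf{Main obstacle.} I expect the main difficulty to be designing the gadget so that the LDFS tie-breaking, rather than mere DFS, couples the internal-vertex count to clique size. For plain DFS the problem has a polynomial kernel, so the reduction must exploit the lexicographic history. The backward direction's case analysis of where the ordering enters $I$ is the delicate part. I would first try complementing $G[Z]$, so that a clique becomes an independent set, and check whether LDFS's preference for vertices adjacent to the most recent vertices forces a clique run. If that fails, I would add a private pendant or twin class per vertex of $Z$ to penalize non-adjacent steps.
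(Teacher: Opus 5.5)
You have the right framework: a polynomial parameter transformation from normalized instances of \CVC{}, with everything outside $\nZ$ in the vertex cover. However, the proposal never fixes a construction, and the mechanism you sketch has the wrong monotonicity. The target asks for \emph{at most} $k$ internal vertices, so a larger clique must turn vertices into \emph{leaves}. In your sketch each step of the run through the clique ``creates a new internal vertex'', which penalizes large cliques.

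The missing idea is to make the count rigid and then let the clique steal children. The paper appends a pendant leaf to every vertex of $Z\cup Y\cup A'\cup B\cup U\cup S$, so all of these are internal in every $\cl$-tree. It then adds one counting vertex $a_i$ per $z_i$, such that $z_i$ is the only neighbour of $a_i$ that can still be unvisited when $a_i$ is reached. After the clique phase the search is forced along $a'_1,a_1,(z_1),b_1,a'_2,a_2,\dots$. Hence $a_i$ is a leaf exactly when $z_i$ already lies in the clique, which gives $5t-\lambda+7$ internal vertices.

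Complementing $G[Z]$ would also be counterproductive. LDFS tie-breaking favours vertices adjacent to all recently visited ones. So after $s_1,s_2,s_3$, with nested neighbourhoods $Z\cup\nZ\cup U\supseteq Z\cup\nZ\supseteq\nZ$, the search enters some $v\in\nZ$ and then greedily traverses a maximal clique of the \emph{original} $G$ containing $v$. The sets $U$ and $Y$ then control what is visited next; here $y_iz_j\in E(G')$ iff $i\neq j$, and $y_iv\in E(G')$ iff $z_i\notin N_G(v)$. This is where normalization is really used: it keeps the other vertices of $\nZ$ from being visited prematurely.

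Your backward direction also tacitly assumes that the ordering starts in your gadget. A minimizing LDFS ordering may start anywhere, for example in $Z$ or in $\nZ$, and then none of the forcing applies. The vertex-cover bounds of Bergougnoux et al.\ only give crude counts and cannot extract a clique.

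The paper handles this by amplification. It takes $t+2$ disjoint copies of the gadget graph plus a new vertex $s$ adjacent to every copy's $S$-set. Every copy except possibly the one containing the start vertex is then entered through its $S$-set, and the uncontrolled copy still has at least $4t+6$ internal vertices because of the pendant leaves. With budget $k=(t+2)(5t-\ell+7)+1$, consider a no-instance. Each of the $t+1$ controlled copies has a surplus of at least one internal vertex, which outweighs the deficit of at most $t-\ell+1$ in the uncontrolled copy.

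Without a concrete gadget, a rigid count, and such a start-vertex argument, the proposal does not yet constitute a proof.
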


\begin{proof}
We show this by giving a polynomial-time \FPT{} reduction from \CVC{} that increases the parameter value only polynomially.\footnote{These reductions are also called polynomial parameter transformations.}  

Let $(G,Z,\ell)$ be a normalized instance of \CVC{}. Let $Z = \{z_1, \dots, z_t\}$ and let $\nZ = V(G) \setminus Z$. We may assume w.l.o.g.~that $G$ is connected and that $1 < \ell < |Z|$. We define a graph $G'$ by adding the following vertices and edges to $G$ (see \cref{fig:lower-bound-1,fig:lower-bound-2,fig:lower-bound-3} for illustrations). First, we add the set of vertices $Y = \{y_1, \dots, y_t\}$, $A = \{a_1, \dots, a_t\}$, $A' = \{a'_1, \dots, a'_t\}$, and $B = \{b_1, \dots, b_t\}$. Additionally, we add the vertices $U = \{u_1, u_2, u_3\}$ as well as the vertices $S = \{s_1, s_2, s_3\}$. We have the following adjacencies.

\begin{figure}
\begin{center}
    \begin{tikzpicture}[scale=0.3]

  \coordinate (AsblobCoord) at (15.0,32);
  \coordinate (BblobCoord) at (3.0,25);
  \coordinate (AblobCoord) at (9.0,25);
  \coordinate (YblobCoord) at (15.0,18.5);
  \coordinate (ZblobCoord) at (6.0,18.5);
  \coordinate (ZqblobCoord) at (-3.0,18.5);
  \coordinate (UblobCoord) at (17.0,7.0);
  \coordinate (SblobCoord) at (0.0,7.0);
  
  \node [rectangle,text width=3.1cm,minimum height=1.0cm, rounded corners,drop shadow] (Asblob) at (AsblobCoord) {};
  \filldraw[lipicsLightGray][] (Asblob.south west)
        [rounded corners=4pt] -- (Asblob.south east)
        [rounded corners=4pt] -- (Asblob.north east)
        [rounded corners=4pt] -- (Asblob.north west)--cycle
        ;
  \node[vertex, label=right:$a_1'$] (a1s) at ($ (AsblobCoord) + (0.3,0) $) {};
  \node at ($ (AsblobCoord) + (-3,0) $) {$A'$};
  
  \node [rectangle,text width=1.1cm,minimum height=0.8cm, rounded corners,drop shadow] (Bblob) at (BblobCoord) {};
  \filldraw[lipicsLightGray][] (Bblob.south west)
        [rounded corners=4pt] -- (Bblob.south east)
        [rounded corners=4pt] -- (Bblob.north east)
        [rounded corners=4pt] -- (Bblob.north west)--cycle
        ;
  \node at (BblobCoord) {$B$};

  \node [rectangle,text width=1.1cm,minimum height=0.8cm, rounded corners,drop shadow] (Ablob) at (AblobCoord) {};
  \filldraw[lipicsLightGray][] (Ablob.south west)
        [rounded corners=4pt] -- (Ablob.south east)
        [rounded corners=4pt] -- (Ablob.north east)
        [rounded corners=4pt] -- (Ablob.north west)--cycle
        ;
  \node at (AblobCoord) {$A$};

  \node [rectangle,text width=1.1cm,minimum height=2.0cm, rounded corners,drop shadow] (Yblob) at (YblobCoord) {};
  \filldraw[lipicsLightGray][] (Yblob.south west)
        [rounded corners=4pt] -- (Yblob.south east)
        [rounded corners=4pt] -- (Yblob.north east)
        [rounded corners=4pt] -- (Yblob.north west)--cycle
        ;
  \node at (YblobCoord) {$Y$};
        
  \node [rectangle,text width=1.1cm,minimum height=0.8cm, rounded corners,drop shadow] (Zblob) at (ZblobCoord) {};
  \filldraw[lipicsLightGray][] (Zblob.south west)
        [rounded corners=4pt] -- (Zblob.south east)
        [rounded corners=4pt] -- (Zblob.north east)
        [rounded corners=4pt] -- (Zblob.north west)--cycle
        ;
   \node at (ZblobCoord) {$Z$};
        
  \node [rectangle,text width=1.1cm,minimum height=2.0cm, rounded corners,drop shadow] (Zqblob) at (ZqblobCoord) {};
  \filldraw[lipicsLightGray][] (Zqblob.south west)
        [rounded corners=4pt] -- (Zqblob.south east)
        [rounded corners=4pt] -- (Zqblob.north east)
        [rounded corners=4pt] -- (Zqblob.north west)--cycle
        ;
   \node at (ZqblobCoord) {$\overline{Z}$};

  \node [rectangle,text width=3.1cm,minimum height=1.0cm, rounded corners,drop shadow] (Ublob) at (UblobCoord) {};
  \filldraw[lipicsLightGray][] (Ublob.south west)
        [rounded corners=4pt] -- (Ublob.south east)
        [rounded corners=4pt] -- (Ublob.north east)
        [rounded corners=4pt] -- (Ublob.north west)--cycle
        ;
  \node[vertex, label=below:$u_1$] (u1) at ($ (UblobCoord) + (-4.5,0) $) {};
  \node[vertex, label=below:$u_2$] (u2) at ($ (UblobCoord) + (-1,0) $) {};
  \node[vertex, label=below:$u_3$] (u3) at ($ (UblobCoord) + (2.5,0) $) {};
  \node at ($(UblobCoord) + (4.5,0) $) {$U$};

  \node [rectangle,text width=3.1cm,minimum height=1.0cm, rounded corners,drop shadow] (Sblob) at (SblobCoord) {};
  \filldraw[lipicsLightGray][] (Sblob.south west)
        [rounded corners=4pt] -- (Sblob.south east)
        [rounded corners=4pt] -- (Sblob.north east)
        [rounded corners=4pt] -- (Sblob.north west)--cycle
        ;
  \node[vertex, label=below:$s_3$] (s3) at ($ (SblobCoord) + (-2.5,0) $) {};
  \node[vertex, label=below:$s_2$] (s2) at ($ (SblobCoord) + (1,0) $) {};
  \node[vertex, label=below:$s_1$] (s1) at ($ (SblobCoord) + (4.5,0) $) {};
  \node at ($(SblobCoord) + (-4.5,0) $) {$S$};

    \draw[blobedge] (Asblob) -- (Bblob);
    \draw[blobedge] (Asblob) -- (Ablob);
    \draw[cliqueedge] (Asblob) -- (Yblob);
    \draw[cliqueedge] (Asblob) to[bend left=40] (u3);
    \draw[cliqueedge] (Ablob) to[bend left=60] (u3);
    \draw[cliqueedge] (Yblob) to (Ublob);
    \draw[blobedge] (Yblob) to (Zblob);
    \draw[cliqueedge] (Yblob) to (Ablob);
    \draw[cliqueedge] (Yblob) to (Bblob);
    \draw[blobedge] (Yblob) to[bend right=25] (Zqblob);
    \draw[cliqueedge] (Ublob) to (Zqblob);
    \draw[cliqueedge] (Ublob) to (Zblob);
    \draw[blobedge] (Bblob) -- (Zblob);
    \draw[blobedge] (Ablob) -- (Zblob);
    \draw[blobedge] (Zqblob) -- (Zblob);
    \draw[cliqueedge] (Zqblob) -- (s3);
    \draw[cliqueedge] (Zqblob) -- (s2);
    \draw[cliqueedge] (Zblob) -- (s2);
    \draw[cliqueedge] (Zblob) -- (s1);
    \draw[cliqueedge] (Zqblob) -- (s1);
    \draw[cliqueedge] (Ublob) -- (s1);

    \draw[normaledge] (a1s) to[bend left=20] (u2);
    \draw[normaledge] (u1) to (u2);
    \draw[normaledge] (u2) to (u3);
    \draw[normaledge] (u1) to[bend left=30] (u3);
    \draw[normaledge] (s1) to (s2);
    \draw[normaledge] (s2) to (s3);
    \draw[normaledge] (s1) to[bend right=30] (s3);

\end{tikzpicture}

\end{center}

    \caption{A schematic sketch of the graph $G'$ used in the proof of \cref{thm:mininl-lower-bound}. Yellow thick edges describe that there is an edge between every pair of vertices of the two sets. Gray thin edges describe that there are some edges between vertices of the two sets. For the details of these connections see \cref{fig:lower-bound-2,fig:lower-bound-3}.}
    \label{fig:lower-bound-1}
\end{figure}

\begin{enumerate}
    \item $y_i z_j \in E(G')$ if and only if $i \neq j$,
    \item for all $v \in \nZ$, $y_i v \in E(G')$ if and only if $z_i \notin N_G(v)$,
    \item $a_i z_j, b_i z_j, a_i a'_j \in E(G')$ if and only if $i=j$,
    \item $b_i a'_j \in E(G')$ if and only if $j \in \{i,i+1\}$,
    \item $G'[Y]$, $G'[S]$, and $G'[U]$ are cliques,
    \item $Y \cup Z \cup \nZ \subseteq N(u_1) \cap N(u_2) \cap N(u_3)$,
    \item $A \cup A' \subseteq N(u_3)$,
    \item $Y \subseteq N(a_i) \cap N(a'_i) \cap N(b_i)$ for all $i \in [t]$
    \item $a'_1u_2 \in E(G')$,
    \item $N_{G'}(s_1) = Z \cup \nZ \cup U$, $N_{G'}(s_2) = Z \cup \nZ$, $N_{G'}(s_3) = \nZ$.
\end{enumerate}

We furthermore append a leaf to every vertex in $Z \cup Y \cup A' \cup B \cup U \cup S$. Therefore, these vertices are internal vertices in every $\cl$-tree, which implies the following.

\begin{claim}\label{obs:lower-bound}
    The $\cl$-tree of every LDFS ordering of $G'$ has at least $4t + 6$ internal vertices.
\end{claim}

We now show that the number of internal vertices of the \cl-tree of LDFS orderings of $G'$ is related to the size of a clique containing a vertex of $\nZ$ as long as the LDFS ordering starts with the vertices of $S$.

\begin{claim}\label{lemma:kernel-lower-bound}
    Every LDFS ordering $\sigma$ of $G'$ that starts with the vertices of $S$ visits afterwards a maximal clique of $G$ containing a vertex of $\nZ$. Let $\lambda + 1$ be the size of that clique. Then the \cl-tree of $\sigma$ has $5t - \lambda + 7$ internal vertices.
    Furthermore, for every maximal clique $C$ of $G$ containing a vertex of $\nZ$ there is an LDFS ordering that starts with the vertices of $S$ and then visits the vertices of $C$. 
\end{claim}

\begin{figure}[t]
\begin{center}
\resizebox{0.75\textwidth}{!}{\begin{tikzpicture}[scale=0.5]

  \coordinate (YblockCoord) at (32,3.0);
  \coordinate (ZpblockCoord) at (32,8.0);
  \coordinate (ZblockCoord) at (32,13.0);

  \node [rectangle,text width=10.0cm,minimum height=1.1cm, rounded corners,drop shadow] (Yblock) at (YblockCoord) {};
  \filldraw[lipicsLightGray][] (Yblock.south west)
        [rounded corners=4pt] -- (Yblock.south east)
        [rounded corners=4pt] -- (Yblock.north east)
        [rounded corners=4pt] -- (Yblock.north west)--cycle
        ;
  \node[vertex,opacity=.2] (y1) at ($ (YblockCoord) + (-8,0) $) {};
  \node[vertex,opacity=.2] (y2) at ($ (YblockCoord) + (-7,0) $) {};
  \node[vertex,opacity=.2] (y3) at ($ (YblockCoord) + (-6,0) $) {};
  \node[vertex,opacity=.2] (y4) at ($ (YblockCoord) + (-5,0) $) {};
  \node[vertex,opacity=.2] (y5) at ($ (YblockCoord) + (-4,0) $) {};
  \node[vertex,opacity=.2] (y6) at ($ (YblockCoord) + (-3,0) $) {};
  \node[vertex,opacity=.2] (y7) at ($ (YblockCoord) + (-2,0) $) {};
  \node[vertex,opacity=.2] (y8) at ($ (YblockCoord) + (-1,0) $) {};
  \node[vertex, label=below:$y_i$] (yi) at ($ (YblockCoord) + (0.0,0) $) {};
  \node[vertex,opacity=.2] (y9) at ($ (YblockCoord) + (1,0) $) {};
  \node[vertex,opacity=.2] (y10) at ($ (YblockCoord) + (2,0) $) {};
  \node[vertex,opacity=.2] (y11) at ($ (YblockCoord) + (3,0) $) {};
  \node[vertex,opacity=.2] (y12) at ($ (YblockCoord) + (4,0) $) {};
  \node[vertex,opacity=.2] (y13) at ($ (YblockCoord) + (5,0) $) {};
  \node[vertex,opacity=.2] (y14) at ($ (YblockCoord) + (6,0) $) {};
  \node[vertex,opacity=.2] (y15) at ($ (YblockCoord) + (7,0) $) {};
  \node[vertex,opacity=.2] (y16) at ($ (YblockCoord) + (8,0) $) {};
  \node at ($ (YblockCoord) + (9.5,0) $) {$Y$};

  \node [rectangle,text width=10.0cm,minimum height=1.1cm, rounded corners,drop shadow] (Zpblock) at (ZpblockCoord) {};
  \filldraw[lipicsLightGray][] (Zpblock.south west)
        [rounded corners=4pt] -- (Zpblock.south east)
        [rounded corners=4pt] -- (Zpblock.north east)
        [rounded corners=4pt] -- (Zpblock.north west)--cycle
        ;
  \node[vertex] (b1) at ($ (ZpblockCoord) + (8,0.0) $) {};
  \node[vertex] (b2) at ($ (ZpblockCoord) + (6,0.0) $) {};
  \node[vertex] (b3) at ($ (ZpblockCoord) + (4,0.0) $) {};
  \node[vertex] (b4) at ($ (ZpblockCoord) + (2,0.0) $) {};
  \node[vertex] (b5) at ($ (ZpblockCoord) + (0,0.0) $) {};
  \node[vertex] (b6) at ($ (ZpblockCoord) + (-2,0.0) $) {};
  \node[vertex] (b7) at ($ (ZpblockCoord) + (-4,0.0) $) {};
  \node[vertex] (b8) at ($ (ZpblockCoord) + (-6,0.0) $) {};
  \node[vertex] (b9) at ($ (ZpblockCoord) + (-8,0.0) $) {};
  \node at ($ (ZpblockCoord) + (9.5,0) $) {$\overline{Z}$};

  \node [rectangle,text width=10.0cm,minimum height=1.1cm, rounded corners,drop shadow] (Zblock) at (ZblockCoord) {};
  \filldraw[lipicsLightGray][] (Zblock.south west)
        [rounded corners=4pt] -- (Zblock.south east)
        [rounded corners=4pt] -- (Zblock.north east)
        [rounded corners=4pt] -- (Zblock.north west)--cycle
        ;
  \node[vertex,opacity=.2] (z1) at ($ (ZblockCoord) + (-8,0) $) {};
  \node[vertex,opacity=.2] (z2) at ($ (ZblockCoord) + (-7,0) $) {};
  \node[vertex,opacity=.2] (z3) at ($ (ZblockCoord) + (-6,0) $) {};
  \node[vertex,opacity=.2] (z4) at ($ (ZblockCoord) + (-5,0) $) {};
  \node[vertex,opacity=.2] (z5) at ($ (ZblockCoord) + (-4,0) $) {};
  \node[vertex,opacity=.2] (z6) at ($ (ZblockCoord) + (-3,0) $) {};
  \node[vertex,opacity=.2] (z7) at ($ (ZblockCoord) + (-2,0) $) {};
  \node[vertex,opacity=.2] (z8) at ($ (ZblockCoord) + (-1,0) $) {};
  \node[vertex, label=above:$z_i$] (zi) at ($ (ZblockCoord) + (0,0.0) $) {};
  \node[vertex,opacity=.2] (z9) at ($ (ZblockCoord) + (1,0) $) {};
  \node[vertex,opacity=.2] (z10) at ($ (ZblockCoord) + (2,0) $) {};
  \node[vertex,opacity=.2] (z11) at ($ (ZblockCoord) + (3,0) $) {};
  \node[vertex,opacity=.2] (z12) at ($ (ZblockCoord) + (4,0) $) {};
  \node[vertex,opacity=.2] (z13) at ($ (ZblockCoord) + (5,0) $) {};
  \node[vertex,opacity=.2] (z14) at ($ (ZblockCoord) + (6,0) $) {};
  \node[vertex,opacity=.2] (z15) at ($ (ZblockCoord) + (7,0) $) {};
  \node[vertex,opacity=.2] (z16) at ($ (ZblockCoord) + (8,0) $) {};
  \node at ($ (ZblockCoord) + (9.5,0) $) {$Z$};

    \draw[thinedge] (b1) -- (y1);
    \draw[thinedge] (b2) -- (z1);
    \draw[thinedge] (b3) -- (z1);
    \draw[thinedge] (b4) -- (y1);
    \draw[thinedge] (b5) -- (y1);
    \draw[thinedge] (b6) -- (z1);
    \draw[thinedge] (b7) -- (y1);
    \draw[thinedge] (b8) -- (z1);
    \draw[thinedge] (b9) -- (z1);

    \draw[thinedge] (b1) -- (y2);
    \draw[thinedge] (b2) -- (y2);
    \draw[thinedge] (b3) -- (z2);
    \draw[thinedge] (b4) -- (y2);
    \draw[thinedge] (b5) -- (z2);
    \draw[thinedge] (b6) -- (y2);
    \draw[thinedge] (b7) -- (z2);
    \draw[thinedge] (b8) -- (y2);
    \draw[thinedge] (b9) -- (z2);

    \draw[thinedge] (b1) -- (y3);
    \draw[thinedge] (b2) -- (y3);
    \draw[thinedge] (b3) -- (z3);
    \draw[thinedge] (b4) -- (y3);
    \draw[thinedge] (b5) -- (z3);
    \draw[thinedge] (b6) -- (y3);
    \draw[thinedge] (b7) -- (z3);
    \draw[thinedge] (b8) -- (y3);
    \draw[thinedge] (b9) -- (z3);

    \draw[thinedge] (b1) -- (y4);
    \draw[thinedge] (b2) -- (y4);
    \draw[thinedge] (b3) -- (z4);
    \draw[thinedge] (b4) -- (y4);
    \draw[thinedge] (b5) -- (z4);
    \draw[thinedge] (b6) -- (y4);
    \draw[thinedge] (b7) -- (z4);
    \draw[thinedge] (b8) -- (y4);
    \draw[thinedge] (b9) -- (z4);

    \draw[thinedge] (b1) -- (y5);
    \draw[thinedge] (b2) -- (z5);
    \draw[thinedge] (b3) -- (y5);
    \draw[thinedge] (b4) -- (y5);
    \draw[thinedge] (b5) -- (y5);
    \draw[thinedge] (b6) -- (z5);
    \draw[thinedge] (b7) -- (y5);
    \draw[thinedge] (b8) -- (y5);
    \draw[thinedge] (b9) -- (z5);

    \draw[thinedge] (b1) -- (y6);
    \draw[thinedge] (b2) -- (z6);
    \draw[thinedge] (b3) -- (y6);
    \draw[thinedge] (b4) -- (y6);
    \draw[thinedge] (b5) -- (y6);
    \draw[thinedge] (b6) -- (z6);
    \draw[thinedge] (b7) -- (y6);
    \draw[thinedge] (b8) -- (y6);
    \draw[thinedge] (b9) -- (z6);

    \draw[thinedge] (b1) -- (z7);
    \draw[thinedge] (b2) -- (y7);
    \draw[thinedge] (b3) -- (y7);
    \draw[thinedge] (b4) -- (y7);
    \draw[thinedge] (b5) -- (z7);
    \draw[thinedge] (b6) -- (y7);
    \draw[thinedge] (b7) -- (z7);
    \draw[thinedge] (b8) -- (z7);
    \draw[thinedge] (b9) -- (y7);

    \draw[thinedge] (b1) -- (y8);
    \draw[thinedge] (b2) -- (y8);
    \draw[thinedge] (b3) -- (z8);
    \draw[thinedge] (b4) -- (y8);
    \draw[thinedge] (b5) -- (y8);
    \draw[thinedge] (b6) -- (y8);
    \draw[thinedge] (b7) -- (z8);
    \draw[thinedge] (b8) -- (y8);
    \draw[thinedge] (b9) -- (z8);

    \draw[thinedge] (b1) -- (z9);
    \draw[thinedge] (b2) -- (y9);
    \draw[thinedge] (b3) -- (y9);
    \draw[thinedge] (b4) -- (z9);
    \draw[thinedge] (b5) -- (y9);
    \draw[thinedge] (b6) -- (y9);
    \draw[thinedge] (b7) -- (y9);
    \draw[thinedge] (b8) -- (z9);
    \draw[thinedge] (b9) -- (z9);

    \draw[thinedge] (b1) -- (y10);
    \draw[thinedge] (b2) -- (y10);
    \draw[thinedge] (b3) -- (z10);
    \draw[thinedge] (b4) -- (y10);
    \draw[thinedge] (b5) -- (y10);
    \draw[thinedge] (b6) -- (z10);
    \draw[thinedge] (b7) -- (z10);
    \draw[thinedge] (b8) -- (z10);
    \draw[thinedge] (b9) -- (z10);

    \draw[thinedge] (b1) -- (z11);
    \draw[thinedge] (b2) -- (z11);
    \draw[thinedge] (b3) -- (z11);
    \draw[thinedge] (b4) -- (y11);
    \draw[thinedge] (b5) -- (z11);
    \draw[thinedge] (b6) -- (y11);
    \draw[thinedge] (b7) -- (z11);
    \draw[thinedge] (b8) -- (y11);
    \draw[thinedge] (b9) -- (y11);

    \draw[thinedge] (b1) -- (y12);
    \draw[thinedge] (b2) -- (y12);
    \draw[thinedge] (b3) -- (z12);
    \draw[thinedge] (b4) -- (y12);
    \draw[thinedge] (b5) -- (y12);
    \draw[thinedge] (b6) -- (z12);
    \draw[thinedge] (b7) -- (y12);
    \draw[thinedge] (b8) -- (z12);
    \draw[thinedge] (b9) -- (z12);

    \draw[thinedge] (b1) -- (z13);
    \draw[thinedge] (b2) -- (z13);
    \draw[thinedge] (b3) -- (z13);
    \draw[thinedge] (b4) -- (y13);
    \draw[thinedge] (b5) -- (y13);
    \draw[thinedge] (b6) -- (y13);
    \draw[thinedge] (b7) -- (z13);
    \draw[thinedge] (b8) -- (z13);
    \draw[thinedge] (b9) -- (z13);

    \draw[thinedge] (b1) -- (y14);
    \draw[thinedge] (b2) -- (z14);
    \draw[thinedge] (b3) -- (y14);
    \draw[thinedge] (b4) -- (z14);
    \draw[thinedge] (b5) -- (y14);
    \draw[thinedge] (b6) -- (y14);
    \draw[thinedge] (b7) -- (y14);
    \draw[thinedge] (b8) -- (z14);
    \draw[thinedge] (b9) -- (z14);

    \draw[thinedge] (b1) -- (y15);
    \draw[thinedge] (b2) -- (y15);
    \draw[thinedge] (b3) -- (z15);
    \draw[thinedge] (b4) -- (z15);
    \draw[thinedge] (b5) -- (z15);
    \draw[thinedge] (b6) -- (y15);
    \draw[thinedge] (b7) -- (z15);
    \draw[thinedge] (b8) -- (y15);
    \draw[thinedge] (b9) -- (z15);

    \draw[thinedge] (b1) -- (z16);
    \draw[thinedge] (b2) -- (y16);
    \draw[thinedge] (b3) -- (z16);
    \draw[thinedge] (b4) -- (y16);
    \draw[thinedge] (b5) -- (z16);
    \draw[thinedge] (b6) -- (y16);
    \draw[thinedge] (b7) -- (y16);
    \draw[thinedge] (b8) -- (z16);
    \draw[thinedge] (b9) -- (z16);

    \draw[exampleedge] (b1) -- (zi);
    \draw[exampleedge] (b2) -- (zi);
    \draw[exampleedge] (b3) -- (yi);
    \draw[exampleedge] (b4) -- (zi);
    \draw[exampleedge] (b5) -- (yi);
    \draw[exampleedge] (b6) -- (yi);
    \draw[exampleedge] (b7) -- (zi);
    \draw[exampleedge] (b8) -- (yi);
    \draw[exampleedge] (b9) -- (zi);

\end{tikzpicture}

}
\end{center}
    \caption{Sketch of the connections between $\nZ$ and the sets $Z$ and~$Y$ in the graph $G'$ used in the proof of \cref{thm:mininl-lower-bound}.}\label{fig:lower-bound-2}
\end{figure}
\begin{figure}[t]
\begin{center}
    \resizebox{0.75\textwidth}{!}{\begin{tikzpicture}[scale=0.5, xshift=-1cm]

  \tikzfading[name=fade out, inner color=transparent!0, outer color=transparent!100]

  \tikzfading[name=fade ends,
  inner color=transparent!100,
  middle color=transparent!100,
  outer color=transparent!70]

  \coordinate (L1Coord) at (3.0,11);
  \coordinate (L2Coord) at (2.0,8);
  \coordinate (L3Coord) at (3.0,5);
  \coordinate (L4Coord) at (3.0,0);

  \node [rectangle,text width=12.0cm,minimum height=0.8cm, rounded
  corners,drop shadow] (Yblock) at ($ (L1Coord) + (10,0) $) {};
  \filldraw[lipicsLightGray][] (Yblock.south west)
        [rounded corners=4pt] -- (Yblock.south east)
        [rounded corners=4pt] -- (Yblock.north east)
        [rounded corners=4pt] -- (Yblock.north west)--cycle
        ;
  \node at ($ (L1Coord) + (21,0) $) {$A'$};
  \node[vertex, label=right:$a_1'$] (a1s) at ($ (L1Coord) + (0.0,0.0) $) {};
  \node[vertex, label=right:$a_2'$] (a2s) at ($ (L1Coord) + (4.0,0.0) $) {};
  \node[vertex, label=right:$a_3'$] (a3s) at ($ (L1Coord) + (8.0,0.0) $) {};
  \node (pp) at ($ (L1Coord) + (11.5,0.0) $) {\dots};
  \node[vertex, label=right:$a_{t-1}'$] (atm1s) at ($ (L1Coord) + (14,0.0) $) {};
  \node[vertex, label=right:$a_t'$] (ats) at ($ (L1Coord) + (18,0.0) $) {};

  \node [rectangle,text width=12.0cm,minimum height=0.8cm, rounded
  corners,drop shadow] (Yblock) at ($ (L2Coord) + (11,0) $) {};
  \filldraw[lipicsLightGray][] (Yblock.south west)
        [rounded corners=4pt] -- (Yblock.south east)
        [rounded corners=4pt] -- (Yblock.north east)
        [rounded corners=4pt] -- (Yblock.north west)--cycle
        ;
  \node at ($ (L2Coord) + (22,0) $) {$A \cup B$};
  \node[vertex, label=right:$a_1$] (a1) at ($ (L2Coord) + (0.0,0.0) $) {};
  \node[vertex, label=right:$b_1$] (b1) at ($ (L2Coord) + (2.0,0.0) $) {};
  \node[vertex, label=right:$a_2$] (a2) at ($ (L2Coord) + (4.0,0.0) $) {};
  \node[vertex, label=right:$b_2$] (b2) at ($ (L2Coord) + (6.0,0.0) $) {};
  \node[vertex, label=right:$a_3$] (a3) at ($ (L2Coord) + (8.0,0.0) $) {};
  \node[vertex, label=right:$b_3$] (b3) at ($ (L2Coord) + (10.0,0.0) $) {};
  \node (pp2) at ($ (L2Coord) + (12.5,0.0) $) {\dots};
  \node (kkk) at ($ (L2Coord) + (12,0.0) $) {};
  \node[vertex, label=right:$a_{t-1}$] (atm1) at ($ (L2Coord) + (14,0.0) $) {};
  \node[vertex, label=right:$b_{t-1}$] (btm1) at ($ (L2Coord) + (16,0.0) $) {};
  \node[vertex, label=right:$a_t$] (at) at ($ (L2Coord) + (18,0.0) $) {};
  \node[vertex, label=right:$b_t$] (bt) at ($ (L2Coord) + (20,0.0) $) {};
  
  \node [rectangle,text width=12.0cm,minimum height=0.8cm, rounded
  corners,drop shadow] (Yblock) at ($ (L3Coord) + (10,0) $) {};
  \filldraw[lipicsLightGray][] (Yblock.south west)
        [rounded corners=4pt] -- (Yblock.south east)
        [rounded corners=4pt] -- (Yblock.north east)
        [rounded corners=4pt] -- (Yblock.north west)--cycle
        ;
  \node at ($ (L3Coord) + (21,0) $) {$Z$};
  \node[vertex, label=right:$z_1$] (z1) at ($ (L3Coord) + (0.0,0.0) $) {};
  \node[vertex, label=right:$z_2$] (z2) at ($ (L3Coord) + (4.0,0.0) $) {};
  \node[vertex, label=right:$z_3$] (z3) at ($ (L3Coord) + (8.0,0.0) $) {};
  \node (pp) at ($ (L3Coord) + (11.5,0.0) $) {\dots};
  \node[vertex, label=right:$z_{t-1}$] (ztm1) at ($ (L3Coord) + (14,0.0) $) {};
  \node[vertex, label=right:$z_t$] (zt) at ($ (L3Coord) + (18,0.0) $) {};

  \node [rectangle,text width=12.0cm,minimum height=0.8cm, rounded
  corners,drop shadow] (Yblock) at ($ (L4Coord) + (10,-0.2) $) {};
  \filldraw[lipicsLightGray][] (Yblock.south west)
        [rounded corners=4pt] -- (Yblock.south east)
        [rounded corners=4pt] -- (Yblock.north east)
        [rounded corners=4pt] -- (Yblock.north west)--cycle
        ;
  \node at ($ (L4Coord) + (21,0) $) {$Y$};
  \node[vertex, label=below:$y_1$] (y1) at ($ (L4Coord) + (0.0,0.0) $) {};
  \node[vertex, label=below:$y_2$] (y2) at ($ (L4Coord) + (4.0,0.0) $) {};
  \node[vertex, label=below:$y_3$] (y3) at ($ (L4Coord) + (8.0,0.0) $) {};
  \node (pp) at ($ (L4Coord) + (11.5,0.0) $) {\dots};
  \node[vertex, label=below:$y_{t-1}$] (ytm1) at ($ (L4Coord) + (14,0.0) $) {};
  \node[vertex, label=below:$y_t$] (yt) at ($ (L4Coord) + (18,0.0) $) {};

  \draw[normaledge] (a1s) -- (a1);
  \draw[normaledge] (a1s) -- (b1);
  \draw[normaledge] (a2s) -- (b1);
  \draw[normaledge] (a2s) -- (a2);
  \draw[normaledge] (a2s) -- (b2);
  \draw[normaledge] (a3s) -- (b2);
  \draw[normaledge] (a3s) -- (a3);
  \draw[normaledge] (a3s) -- (b3);
  \draw[normaledge] (atm1s) -- (kkk);
  \draw[normaledge] (atm1s) -- (atm1);
  \draw[normaledge] (atm1s) -- (btm1);
  \draw[normaledge] (ats) -- (btm1);
  \draw[normaledge] (ats) -- (at);
  \draw[normaledge] (ats) -- (bt);

  \draw[normaledge] (a1) -- (z1);
  \draw[normaledge] (b1) -- (z1);
  \draw[normaledge] (a2) -- (z2);
  \draw[normaledge] (b2) -- (z2);
  \draw[normaledge] (a3) -- (z3);
  \draw[normaledge] (b3) -- (z3);
  \draw[normaledge] (atm1) -- (ztm1);
  \draw[normaledge] (btm1) -- (ztm1);
  \draw[normaledge] (at) -- (zt);
  \draw[normaledge] (bt) -- (zt);

  \draw[noneedge] (z1) -- (y1);
  \draw[noneedge] (z2) -- (y2);
  \draw[noneedge] (z3) -- (y3);
  \draw[noneedge] (ztm1) -- (ytm1);
  \draw[noneedge] (zt) -- (yt);

  \draw[fadeedge] (z1) -- (y2);
  \draw[fadeedge] (z1) -- (y3);
  \draw[fadeedge] (z1) -- (ytm1);
  \draw[fadeedge] (z1) -- (yt);

  \draw[fadeedge] (z2) -- (y1);
  \draw[fadeedge] (z2) -- (y3);
  \draw[fadeedge] (z2) -- (ytm1);
  \draw[fadeedge] (z2) -- (yt);

  \draw[fadeedge] (z3) -- (y1);
  \draw[fadeedge] (z3) -- (y2);
  \draw[fadeedge] (z3) -- (ytm1);
  \draw[fadeedge] (z3) -- (yt);

  \draw[fadeedge] (ztm1) -- (y1);
  \draw[fadeedge] (ztm1) -- (y2);
  \draw[fadeedge] (ztm1) -- (y3);
  \draw[fadeedge] (ztm1) -- (yt);

  \draw[fadeedge] (zt) -- (y1);
  \draw[fadeedge] (zt) -- (y2);
  \draw[fadeedge] (zt) -- (y3);
  \draw[fadeedge] (zt) -- (ytm1);

\end{tikzpicture}

}
\end{center}
 \caption{Connections between the sets $Z$, $Y$, $A$, $A'$, and $B$ in the graph $G'$ used in the proof of \cref{thm:mininl-lower-bound}.}\label{fig:lower-bound-3}
\end{figure}

\begin{claimproof}
    First note that in the following argumentation, we will ignore the leaves that have been appended to certain vertices, i.e., we do not mention when they are visited by the LDFS as this is not relevant for the number of internal vertices. Further note that the only vertices of $V(G') \setminus \nZ$ that are not forced to be internal vertices via appended leaves are the vertices of $A$. Thus, these vertices will be the part of the graph where we ``count'' the size of the clique. The overall idea is that those vertices that are part of the clique $C$ are visited to early and, therefore, cannot be children of their respective vertices in $A$.
    
    Consider an LDFS ordering $\sigma$ of $G'$ starting with the vertices of $S$. The only vertices of $G'$ that are adjacent to all vertices of $S$ are the vertices in $\nZ$. Hence, $\sigma$ visits a vertex $v$ of $\nZ$ directly after $S$. Note that $v$ could be any vertex of $\nZ$ as all of them have the same set of visited neighbors. No unvisited vertex that is adjacent to $s_3$ is also adjacent to $v$. Hence, we will visit elements of $N_{G'}(v) \cap N_{G'}(s_1) \cap N_{G'}(s_2)$ as long as these vertices induce a clique. All vertices fulfilling this condition are contained in $Z$. Therefore, $\sigma$ visits a maximal clique $C$ of $G'$ containing $v$ and any of these cliques would be a possible option. W.l.o.g.~we may assume that $C = \{v, z_1, \dots, z_\lambda\}$ and the vertices of $C$ are visited in the given order. Afterwards, every remaining vertex of $Z \cup \nZ$ is non-adjacent to at least one of the vertices visited after $S$ since $C$ is maximal. 

    We claim that $\sigma$ will now visit the vertices of $U$ and then the vertices of $Y$. To see this, we define the following two sets: $Y_1 := \{y_i \mid z_i \notin N(v)\}$ and $Y_2 := \{y_i \mid z_i \in N(v), z_i \notin C\}$. Note that $Y \setminus (Y_1 \cup Y_2) = \{y_1, \dots, y_\lambda\}$. We observe that all vertices in $Y_1 \cup U$ are adjacent to all vertices that are visited after $S$. All other unvisited vertices have at least one non-neighbor that was visited after $S$: the unvisited vertices of $Z \cup \nZ$ are non-adjacent to at least one vertex of $C$ and the vertices in $(Y \setminus Y_1) \cup A \cup A' \cup B$ are non-adjacent to $v$.
    Furthermore, the vertices in $U$ have a neighbor in $S$ while the vertices of $Y_1$ do not. Hence, $\sigma$ first visits $U$ in an arbitrary order and, afterwards, it visits $Y_1$ in an arbitrary order. 
    
    Now, the vertices of $Y_2$ are adjacent to all vertices visited after $v$. All other unvisited vertices have at least one non-neighbor that was visited after $v$:
    
    \begin{itemize}
        \item If an unvisited vertex of $Z$ is adjacent to $v$, then it is not adjacent to at least one vertex of $Z \cap C$ since $C$ is maximal.
        \item If an unvisited vertex $z_i$ of $Z$ is not adjacent to $v$, then $y_i \in Y_1$ and $y_iz_i \notin E(G)$.
        \item Every vertex of $\nZ \setminus \{v\}$ has a non-neighbor in $Y_1$ since we have considered a normalized instance of \CVC{} and, thus, every vertex of $\nZ \setminus \{v\}$ has some neighbor in $Z$ that is not a neighbor of $v$. 
        \item The vertices in $Y \setminus (Y_1 \cup Y_2) = \{y_1, \dots, y_\lambda\}$ are not adjacent to one vertex of $Z \cap C$.
        \item The vertices in $A \cup A' \cup B$ are not adjacent to $u_1$.
    \end{itemize}
    
    Thus, $\sigma$ visits the vertices of $Y_2$ in an arbitrary order. We claim that $\sigma$ visits the vertices $(y_1, \dots, y_\lambda)$ next in that order. Note that $y_i$ is adjacent to all vertices visited after $z_i$ while all other unvisited vertices are non-adjacent to some vertex visited after $z_i$:
    
    \begin{itemize}
        \item The unvisited vertices of $Z \cup \nZ$ are non-adjacent to at least one vertex of $Y_1 \cup Y_2$.
        \item The vertices in $A \cup A' \cup B$ are not adjacent to $u_1$.
        \item Vertex $y_j$ with $j > i$ is not adjacent to $z_j$.
    \end{itemize}
    
    The only remaining vertices that are adjacent to all vertices of $Y$ are the vertices in $A \cup A' \cup B$. Among these vertices, $a'_1$ is the only vertex that is adjacent both to $u_2$ and $u_3$ while all other vertices are only adjacent to $u_3$ or to no element of $U$. So, $\sigma$ visits $a'_1$ next. 
    
    This vertex has two unvisited neighbors: $a_1$ and $b_1$. The visited neighbors of $b_1$ are also neighbors of $a_1$ and $u_3$ is adjacent to $a_1$ but not to $b_1$. Thus, $\sigma$ visits $a_1$ next. If $z_1$ was already visited, i.e., $z_1 \in C$, then $a_1$ has no unvisited neighbor and, thus, it is a leaf of the $\cl$-tree. In this case, LDFS backtracks to $a'_1$ and then $\sigma$  visits $b_1$. If $z_1$ was not visited, i.e., $z_1 \notin C$, then $\sigma$ visits $z_1$ after $a_1$. Now, $\sigma$ visits $b_1$ since it is the only neighbor of $z_1$ that is adjacent to another vertex of the last three visited vertices. In this case, $a_1$ is not a leaf. In both cases, $\sigma$ visits $a'_2$ next since it is the only unvisited neighbor of $b_1$. Repeating this procedure, we construct $\lambda$ vertices of $A$ that are leaves in the $\cl$-tree. After reaching $b_t$, we have visited all vertices of $G'$ except for the remaining vertices of $\nZ$. Since these vertices form an independent set, they all become leaves of the \cl-tree.

    Finally, let us count the number of internal vertices of the \cl-tree. All vertices of $Z \cup S \cup U \cup Y \cup A' \cup B \cup \{v\}$ are internal vertices. Furthermore, $|A| - \lambda$ vertices of $A$ are internal vertices. Thus, there are $5t - \lambda + 7$ internal vertices.
\end{claimproof}

    To overcome the problem with fixing the start vertices of the LDFS ordering, we consider $t+2$ copies $H_1, \dots, H_{t+2}$ of $G'$. For each $i \in [t+2]$, let $S_i$ be the $S$-sets of copy $H_i$. Let $s$ be another vertex that is adjacent to all vertices in all $S_i$-sets. We call the resulting graph~$H$. We claim that $H$ has an LDFS $\cl$-tree with at most $k = (t+2) \cdot (5t - \ell + 7) + 1$ internal vertices if and only if $G$ has a clique of size $\ell+1$ that contains exactly one vertex of~$\nZ$.

    First assume that there is such a clique. Then, due to \cref{lemma:kernel-lower-bound}, every $H_i$ has some LDFS ordering $\sigma_i$ starting with the vertices of $S_i$ whose $\cl$-tree has $5t - \ell + 7$ internal vertices. We observe that $(s) \oplus \sigma_1 \oplus \dots \oplus \sigma_{t+2}$ forms an LDFS ordering of $H$ whose $\cl$-tree has $(t+2) \cdot (5t - \ell + 7) + 1$ internal vertices (here $\oplus$ stands for the concatenation operator). 

    Now assume that such a clique does not exist, i.e., every clique of $G$ containing a vertex of $\nZ$ contains at most $\ell - 1$ vertices of $Z$. Let $\sigma$ be an LDFS ordering of $H$ and let $\sigma_1, \dots, \sigma_{t+2}$ be the suborderings of $\sigma$ induced by the vertices of $H_1, \dots, H_{t+2}$, respectively. It is easy to see that for every $i \in [t+2]$, the subordering $\sigma_i$ forms an LDFS ordering of $H_i$. Furthermore, all but maybe $\sigma_1$ start with the vertices of set $S_i$. Hence, \cref{lemma:kernel-lower-bound} implies that the \cl-tree of these orderings has at least $5t - \ell + 8$ internal vertices. The $\cl$-tree of $\sigma_1$ has at least $4t + 6$ internal vertices, due to \cref{obs:lower-bound}. Furthermore, $s$ is an internal vertex. Thus, the \cl-tree of $\sigma$ has at least $(t+1) \cdot (5t - \ell + 8) + (4t + 6) + 1 = (t+1) \cdot (5t - \ell + 7) + 5t + 8 > (t+2) \cdot (5t - \ell + 7) + 1$ internal vertices. 

    Note that both the vertex cover number of $G'$ and the value $k$ are bounded by $\O(|Z|^2)$. Hence, the given reduction is a polynomial-time \FPT{} reduction from \CVC{} to \MinInL{} of LDFS parameterized by vertex cover number plus solution size that increases the parameter only polynomially. 
\end{proof}

We can adapt this proof to give the same result also for the maximization problem of internal vertices. In the proof of \cref{thm:mininl-lower-bound}, visiting some vertex of the clique implied that this vertex is missing to make some other vertex to an internal vertex. In the proof for the maximization problem, however, visiting a clique vertex implies that another vertex is visited earlier, ensuring that this other vertex becomes an internal vertex. 

\begin{theorem}\label{thm:maxinl-lower-bound}
    \MaxInL{} of LDFS does not admit a polynomial compression when parameterized by the solution size plus the vertex cover number.
\end{theorem}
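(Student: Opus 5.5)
We follow the blueprint of \cref{thm:mininl-lower-bound}: a polynomial parameter transformation from a normalized instance $(G,Z,\ell)$ of \CVC{} to \MaxInL{} of LDFS, parameterized by solution size plus vertex cover number. We keep the forcing part of $G'$ unchanged: the sets $S$, $U$, $Y$, the edges between $\nZ$, $Z$, $Y$, $U$, $S$, and the pendant leaves on $Z\cup Y\cup A'\cup B\cup U\cup S$. With these, every LDFS ordering starting with $S$ still visits one vertex $v\in\nZ$, then a maximal clique $C=\{v,z_1,\dots,z_\lambda\}$, then $U$, $Y_1$, $Y_2$, and finally $y_1,\dots,y_\lambda$. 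The arguments for this part carry over verbatim from the proof of \cref{lemma:kernel-lower-bound}.

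The only change is the counting gadget on $A\cup A'\cup B$. We reverse its effect, so that $a_i$ becomes internal exactly when $z_i\in C$. To do this, we give $a_i$ a private pendant-type neighbor $p_i$ (no pendant leaf of its own) that is reachable from $a_i$ only. We also route the chain so that, when $z_i$ is still unvisited, LDFS is forced to continue $a'_i\to z_i$ or $a'_i\to b_i$ before $a_i$. In that case $a_i$ is entered only at the end, after its extra neighbor has already been consumed, or as the last vertex of its neighborhood, and it is a leaf. Conversely, if $z_i\in C$ was visited early, then $a_i$ is the lexicographically preferred next vertex, since it carries the label from $z_i$. It is then visited while $p_i$ is still unvisited, so $a_i$ becomes internal. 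Concretely, I would first try swapping the roles of $a_i$ and $b_i$ relative to the $u_3$-adjacency. The check is via \cref{lemma:4point-ldfs}: the tie between $a_i$ and $b_i$ after $a'_i$ should be decided by whether $z_i$ is visited, not by $u_3$. I would then verify by \cref{obs:dfs-leaves} which vertices have an unvisited neighbor to their right. The target is an analogue of \cref{lemma:kernel-lower-bound}: every LDFS ordering starting with $S$ has exactly $c(t)+\lambda$ internal vertices for an explicit $c(t)=\O(t)$, and every maximal clique through $\nZ$ is realizable. \cref{obs:lower-bound} is replaced by an upper bound: every LDFS $\cl$-tree of the gadget has at most $c(t)+t$ internal vertices, by \cref{lemma:dfs-vc3} or by direct counting, since only the $a_i,p_i$ are free.

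For the start-vertex issue we again take $t+2$ copies $H_1,\dots,H_{t+2}$ of the gadget and join a new vertex $s$ to all the $S_i$. The threshold is $k=(t+2)(c(t)+\ell)+1$. If a clique of size $\ell+1$ exists, we concatenate orderings as before. If no such clique exists, each of the copies $H_2,\dots,H_{t+2}$ contributes at most $c(t)+\ell-1$ internal vertices, and $H_1$ contributes at most $c(t)+t$. The total is at most $(t+1)(c(t)+\ell-1)+c(t)+t+1<(t+2)(c(t)+\ell)+1$, because $t+1$ copies each lose one and $H_1$ gains at most $t-\ell<t+1$. The vertex cover number and $k$ are $\O(|Z|^2)$, and the incompressibility of \CVC{} by Bodlaender, Jansen, and Kratsch then transfers.

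The main obstacle is the redesigned chain gadget. We must ensure that LDFS's label-based tie-breaking deterministically makes $a_i$ internal if and only if $z_i\in C$, without letting the search leave the chain early or revisit $\nZ$. This needs a careful 4-point-condition check at each step $a'_i, a_i/b_i, z_i, a'_{i+1}$.
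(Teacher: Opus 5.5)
\textbf{There is a genuine gap.} Your reduction frame matches the paper: a normalized instance, copies joined by a new vertex $s$, and a polynomial parameter transformation. But the one new ingredient this theorem needs is never constructed, namely a gadget in which the number of \emph{internal} vertices grows with $\lambda$. The pieces you do sketch cannot work.

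\textbf{The sketched gadget fails.}
\begin{enumerate}
\item A neighbor $p_i$ of $a_i$ that is ``reachable from $a_i$ only'' lies to the right of $a_i$ in every LDFS ordering that does not start at $p_i$. By \cref{obs:dfs-leaves}, $a_i$ is then internal in every such $\cl$-tree, independently of $C$. So $p_i$ can never be ``consumed'' before $a_i$.
\item Swapping the $u_3$-adjacencies of $a_i$ and $b_i$ does not make the tie after $a'_i$ depend on $z_i$, because $a_i$ and $b_i$ are both adjacent to $z_i$.
\item More fundamentally, LDFS breaks ties by the \emph{most recently} visited distinguishing neighbor. A $z_i$ visited back in the clique phase can only steer a choice if the candidates agree on everything visited since. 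That fails deep inside the $A$-chain, where $a'_i$, $b_{i-1}$, $z_{i-1},\dots$ are visited in between.
\end{enumerate}

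\textbf{How the paper avoids this.} It leaves the chain intact, and even appends leaves to $A$ so that $A$ is always internal. It deletes the edge $u_2a'_1$ and counts in new sets $P,Q,R$ with the following edges: $p_iz_i\in E(G')$; $P$--$Q$ and $Q$--$R$ complete bipartite; $Y\cup\{u_2,u_3\}$ complete to $P\cup Q\cup R$; and $R$ adjacent to $s_1$ and $a'_1$.

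Directly after $Y$, the vertices of $P\cup Q\cup R$ are the only candidates adjacent to both $u_2$ and $u_3$. The next distinguishing vertices going back in time are $z_\lambda,\dots,z_1$. So LDFS walks $p_\lambda,q,p_{\lambda-1},q,\dots,p_1,q$, which makes exactly $p_1,\dots,p_\lambda$ internal. It then alternates between $R$ and $Q$, enters $a'_1$, and runs the old chain. The chain visits all remaining $z_i$ before the remaining $p_i$, so those $p_i$ become leaves. This gives $7t+\lambda+7$ internal vertices. The recency-based choice is placed exactly where the clique vertices are the most recent distinguishing ones.

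\textbf{The copy count also has a hole.} Your bound ``$H_1$ contributes at most $c(t)+t$'' treats only the $a_i$ (and $p_i$) as free. But $\sigma_1$ need not start in $S_1$, and then you have no control over how many vertices of $\nZ$ become internal. The bound you can actually justify is \cref{lemma:dfs-vc3}: twice the size of the vertex cover $V(G')\setminus\nZ$. That exceeds $c(t)+\ell$ by $\Theta(t)$ with a constant well above $1$. The $t+1$ units lost by $H_2,\dots,H_{t+2}$ therefore cannot compensate for it.

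This part is repairable by taking more copies; the paper uses $9t+6$ copies against the upper bound $16t+12$. The gadget and its step-by-step LDFS analysis, however, are the substance of the theorem, and your proposal does not supply them.
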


\begin{proof}
We adapt the proof of \cref{thm:mininl-lower-bound}. Again, we give a polynomial-time \FPT{} reduction from \CVC{} that increases the parameter value only polynomially.

Let $(G,Z,\ell)$ be a normalized instance of \CVC{}. Let $Z = \{z_1, \dots, z_t\}$ and let $\nZ = V(G) \setminus Z$. We may assume w.l.o.g.~that $G$ is connected and that $1 < \ell < |Z|$. We define a graph $G'$ in the same way as we did in the proof of \cref{thm:mininl-lower-bound}. The only difference is that we do not include the edge $u_2a'_1$. We additionally append a leaf to every vertex of the set $A$. Furthermore, we add the vertices $P = \{p_1, \dots, p_t\}$, $Q = \{q_1, \dots, q_t\}$, and $R = \{r_1, \dots, r_t\}$. We have the following additional edges.

\begin{enumerate}
    \item $p_i z_i \in E(G')$ for all $i \in [t]$,
    \item $p_iq_j, q_ir_j \in E(G')$ for all $i,j \in [t]$,
    \item $r_i a'_1 \in E(G')$ or all $i \in [t]$,
    \item $s_1 r_i \in E(G')$ for all $i \in [t]$,
    \item $Y \cup \{u_2,u_3\} \subseteq N(p_i) \cap N(q_i) \cap N(r_i)$ for all $i \in [t]$.
\end{enumerate}

We first observe that $V(G') \setminus \nZ$ is a vertex cover of $G'$ and that $|V(G') \setminus \nZ| = 8t + 6$. Thus, \cref{lemma:dfs-vc3} implies the following. 

\begin{claim}\label{obs:upper-bound}
    The $\cl$-tree of every LDFS ordering of $G'$ has at most $16t + 12$ internal vertices.
\end{claim}

We now show that the number of internal vertices of the \cl-tree of LDFS orderings of $G'$ is related to the size of a clique containing a vertex of $\nZ$ as long as the LDFS ordering starts with the vertices of $S$.

\begin{claim}\label{lemma:kernel-upper-bound}
    Every LDFS ordering $\sigma$ of $G'$ that starts with the vertices of $S$ visits afterwards a maximal clique of $G$ containing a vertex of $\nZ$. Let $\lambda + 1$ be the size of that clique. Then the \cl-tree of $\sigma$ has $7t + \lambda + 7$ internal vertices.
    Furthermore, for every maximal clique $C$ of $G$ containing a vertex of $\nZ$ there is an LDFS ordering that starts with the vertices of $S$ and then visits the vertices of $C$. 
\end{claim}

\begin{claimproof}
    Similarly as in the proof of \cref{lemma:kernel-lower-bound} of \cref{thm:mininl-lower-bound}, we will not mention when the leaves appended to some vertices are visited. The part where we ``count'' the size of $C$ are now the vertices of $P$ as all other vertices of $V(G') \setminus \nZ$ are ensured to be internal vertices. Visiting the vertices of $C$ ensures that their respective vertices in $P$ are also visited earlier, ensuring that they become internal vertices. 

    Consider an LDFS ordering $\sigma$ of $G'$ starting with the vertices of $S$. Following the arguments in the proof of \cref{lemma:kernel-lower-bound} of \cref{thm:mininl-lower-bound}, we can argue that $\sigma$ starts with $S$, then visits a vertex $v \in \nZ$, then vertices of $Z$ that form a maximal clique $C$ with $v$, then $U$ and then $Y$. 
    
    Afterwards, the only remaining vertices that are adjacent to all vertices of $Y$ are the vertices in $A \cup A' \cup B \cup P \cup Q \cup R$. Among these vertices, the vertices of $P \cup Q \cup R$ are the only vertices that are adjacent both to $u_2$ and $u_3$ while all other vertices are only adjacent to $u_3$ or to none of these vertices. So a vertex of $P \cup Q \cup R$ will be visited next. Vertex $z_\lambda$ is adjacent to $p_\lambda$ and to no other vertex of $P \cup Q \cup R$. All other neighbors of some vertex of $P \cup R \cup Q$ to the right of $z_\lambda$ in $\sigma$ are also adjacent to $p_\lambda$. Thus, $p_\lambda$ is visited next. The only unvisited neighbors of $p_\lambda$ are the vertices in $Q$. So one of them is visited next. W.l.o.g. we may assume that it is $q_1$. Now, either a vertex of $P$ or of $R$ can be visited next. With the same argument as before, we can see that $p_{\lambda - 1}$ must be the next vertex. Repeating this procedure, we see that we traverse the vertices $p_{\lambda}, \dots, p_1$ as well as $\lambda$ vertices of $Q$ ending in some vertex of $Q$. Note that this implies that the vertices $p_1, \dots, p_\lambda$ are internal vertices of the $\cl$-tree. 
    
    The visited neighbors of the remaining vertices of $P$ are the elements of $Y \cup \{u_2, u_3\}$ as well as the visited vertices of $Q$. These vertices are also neighbors of the elements of $R$. Furthermore, the vertices of $R$ are adjacent to $s_1$. Thus, we visit one vertex of $R$ next. As $a'_1$ is not adjacent to any vertex of $P$, the next vertex has to be an element of $Q$. Repeating this argument, we see that we alternate between $Q$ and $R$ until all vertices of $Q$ are visited. This process ends in some vertex of $R$. Afterwards, we visit $a'_1$ and then the remaining vertices of $R$. Finally, we have to traverse the vertices of $A \cup A' \cup B$ in the same way as we did in the proof of \cref{thm:mininl-lower-bound}. This ensures that all vertices of $Z$ are visited before the remaining vertices of $P$ are visited. Therefore, all vertices of $P \setminus \{p_1, \dots, p_\lambda\}$ will be leaves in the \cl-tree. The same holds for all vertices of $\nZ \setminus \{v\}$.

    Finally, let us count the number of internal vertices of the \cl-tree. All vertices of $Z \cup S \cup U \cup Y \cup A \cup A' \cup B \cup Q \cup R \cup \{v\}$ are internal vertices. Furthermore, $\lambda$ vertices of $P$ are internal vertices. Thus, there are $7t + \lambda + 7$ internal vertices.
\end{claimproof}

    Again, we consider copies of $G'$ to overcome the problem with fixing the start vertices of the LDFS ordering. Consider $9t+6$ copies $H_1, \dots, H_{9t+6}$ of $G'$. For each $i \in [9t+6]$, let $S_i$ be the $S$-sets of copy $H_i$. Let $s$ be another vertex that is adjacent to all vertices in all $S_i$-sets. We call the resulting graph~$H$. We claim that $H$ has an LDFS $\cl$-tree with at least $k = (9t+6) \cdot (7t + \ell + 7) + 1$ internal vertices if and only if $G$ has a clique of size $\ell+1$ that contains exactly one vertex of~$\nZ$.

    First, assume that there is such a clique. Then, due to \cref{lemma:kernel-upper-bound}, every $H_i$ has some LDFS ordering $\sigma_i$ starting with the vertices of $S_i$ whose $\cl$-tree has $7t + \ell + 7$ internal vertices. We observe that $(s) \oplus \sigma_1 \oplus \dots \oplus \sigma_{9t+6}$ forms an LDFS ordering of $H$ whose $\cl$-tree has $(9t+6) \cdot (7t + \ell + 7) + 1$ internal vertices. 

    Now assume that such a clique does not exist, i.e., every clique of $G'$ containing a vertex of $\nZ$ contains at most $\ell - 1$ vertices of $Z$. Let $\sigma$ be an LDFS ordering of $H$ and let $\sigma_1, \dots, \sigma_{9t+6}$ be the suborderings of $\sigma$ induced by the vertices of $H_1, \dots, H_{9t+6}$, respectively. It is easy to see that for every $i \in [9t+6]$, the subordering $\sigma_i$ forms an LDFS ordering of $H_i$. Furthermore, all but maybe $\sigma_1$ start with the vertices of set $S_i$. Hence, \cref{lemma:kernel-upper-bound} implies that the \cl-tree of these orderings has at most $7t + \ell + 6$ internal vertices. The $\cl$-tree of $\sigma_1$ has at most $16t + 12$ internal vertices, due to \cref{obs:upper-bound}. Furthermore, $s$ is an internal vertex. Thus, the \cl-tree of $\sigma$ has at most $(9t+5) \cdot (7t + \ell + 6) + (16t + 12) + 1 = (9t+6) \cdot (7t + \ell + 6) + (9t - \ell + 6) + 1 < (9t+6) \cdot (7t + \ell + 7) + 1$ internal vertices. 

    Note that both the vertex cover number of $G'$ and the value $k$ are bounded by $\O(|Z|^2)$. Hence, the given reduction is a polynomial-time \FPT{} reduction from \CVC{} to \MaxInL{} of LDFS parameterized by vertex cover number plus solution size that increases the parameter only polynomially. 
\end{proof}

\subsection[L-trees of Other Searches]{$\cl$-trees of Other Searches}\label{sec:minin}

In this previous section, we have seen that the structure of DFS \cl-trees allows us to transfer the \FPT{} results also to the more sophisticated variant LDFS. In this section, we will study searches whose \cl-trees do not share the structural propertiey of DFS \cl-trees. 
First, we consider the \cl-trees of GS.

\begin{theorem}\label{thm:min-leaf-gs}
    Let $G$ be a connected graph and let $S$ be a set of vertices of $G$. The following statements are equivalent.
    \begin{enumerate}
        \item There is a GS ordering of $G$ whose \cl-tree $T$ has only leaves contained in $S$.
        \item There is a rooted spanning tree $T'$ of $G$ {where all leaves are contained in $S$}.
    \end{enumerate}
    Consequently, there is a rooted GS \cl-tree of $G$ with at least $k$ internal vertices if and only if there is a rooted spanning tree of $G$ with at least $k$ internal vertices.
\end{theorem}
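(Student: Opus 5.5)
The implication (1)$\Rightarrow$(2) is immediate: the \cl-tree of a GS ordering is a rooted spanning tree of $G$, so it can serve as $T'$. The work is in (2)$\Rightarrow$(1). Given a rooted spanning tree $T'$ with root $r$ whose leaves all lie in $S$, I would construct a GS ordering whose \cl-tree $T$ has every leaf in $S$. In general $T$ will not equal $T'$; it only needs to inherit the leaf property.

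The approach is to visit the vertices of $T'$ in an order in which every vertex comes after its $T'$-parent, and to organize the order so that each internal vertex of $T'$ is also internal in $T$. The natural choice is a DFS-style preorder of $T'$, listing the children of each vertex consecutively in a chosen order. Any order in which parents precede children is a GS ordering of $G$, since each vertex has an already-visited neighbor, namely its $T'$-parent.

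\textbf{Key claim.} In the \cl-tree of such an ordering, a vertex $u$ is internal if and only if some later vertex $w$ has $u$ as its rightmost earlier neighbor. In particular, $u$ is internal whenever the vertex visited immediately after $u$ is adjacent to $u$, because that successor's rightmost earlier neighbor is then $u$ itself.

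In a preorder of $T'$, each internal vertex $u$ of $T'$ is immediately followed by its first child, which is a $G$-neighbor of $u$. Hence every internal vertex of $T'$ is internal in $T$. Since $T$ and $T'$ share the same root $r$ and the same vertex set, every leaf of $T$ is a non-root vertex that is a leaf in $T'$, and therefore lies in $S$. This proves (1).

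There is one subtlety. The root is not counted as a leaf, so if $r$ is a leaf of $T$ it causes no trouble. We must still check that $T$ is rooted at $r$, which holds because $r$ is visited first. I expect no real obstacle beyond stating the successor argument cleanly. The main point to verify is that the immediate successor of an internal vertex $u$ in the preorder really is a child of $u$ in $T'$; this holds for preorders by definition.

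For the consequence, a rooted spanning tree with at least $k$ internal vertices yields, by the argument above, a GS \cl-tree in which every internal vertex of $T'$ remains internal, so it has at least $k$ internal vertices. The converse is trivial because every GS \cl-tree is a rooted spanning tree. Equivalently, one can apply the first part with $S$ equal to the set of leaves of $T'$.
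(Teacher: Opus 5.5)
Your proposal is correct and matches the paper's proof. Both take a DFS (preorder) traversal of the given rooted spanning tree, which is a GS ordering of $G$, and observe that the successor of every internal tree vertex is one of its children and hence a $G$-neighbor. So every internal vertex of the spanning tree remains internal in the \cl-tree, and the consequence follows the same way.
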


\begin{proof}
    Obviously, Statement~1 implies Statement~2. For the other direction, let $T$ be a rooted spanning tree of $G$ where all the leaves are contained in $S$. Let $\sigma$ be a DFS ordering of $T$ starting in the root of $T$. Note that $\sigma$ is not necessarily a DFS ordering of $G$ but it is a GS ordering of $G$. Let $T'$ be the $\cl$-tree of $\sigma$ with respect to $G$. We claim that all leaves of $T'$ are also leaves in $T$ or equivalently all internal vertices of $T$ are also internal vertices of $T'$. Let $v$ be an internal vertex of $T$. As $\sigma$ is a DFS ordering on $T$, the successor $w$ of $v$ in $\sigma$ is a child of $v$ in $T$, due to \cref{obs:dfs-leaves}. Hence, the edge $vw$ exists in $T'$ and $v$ is not a leaf in $T'$.
\end{proof}

This theorem implies that \MaxInL{} of GS is equivalent to \textsc{Max-Internal Spanning Tree}. For this problem, Li et al.~\cite{li2017deeper} gave a kernel with $2k$ vertices, implying the same for \MaxInL{}.

\begin{corollary}
    \MaxInL{} of GS admits a kernel of $2k$ vertices.
\end{corollary}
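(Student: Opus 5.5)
The corollary follows by combining \cref{thm:min-leaf-gs} with the kernel of Li et al.~\cite{li2017deeper} for \textsc{Max-Internal Spanning Tree}. The final sentence of \cref{thm:min-leaf-gs} states that $(G,k)$ is a YES-instance of \MaxInL{} of GS if and only if $(G,k)$ is a YES-instance of \textsc{Max-Internal Spanning Tree}, where the latter asks for a rooted spanning tree with at least $k$ internal vertices. So the two problems have exactly the same YES-instances, and any kernel for one is a kernel for the other.

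The one point to check is a possible mismatch in how internal vertices are counted. In this paper trees are rooted, and the root is never a leaf, even when it has degree one. In the MIST literature, trees are usually unrooted and a vertex is internal if it has degree at least two.

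I will show that the two maxima differ by at most one, and that this difference vanishes once the root is chosen well:
\begin{itemize}
\item Given an unrooted spanning tree with $i$ internal vertices, root it at a degree-one vertex (if it has at least two vertices). This root becomes an extra internal vertex, so the rooted count is $i+1$.
\item Conversely, take a rooted tree with $i'$ internal vertices. If its root has at least two children, the unrooted count is also $i'$. If the root has exactly one child, the unrooted count is $i'-1$, and rooting at a leaf recovers $i'$.
\end{itemize}
Hence the maximum rooted count is the maximum unrooted count plus one for every graph with at least two vertices.

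This gives the kernelization. Given $(G,k)$, I apply the kernel of Li et al.\ to the instance $(G,k-1)$ of unrooted MIST. This returns an equivalent instance $(G',k'')$ with at most $2(k-1)$ vertices. The same correspondence turns it back into $(G',k''+1)$ for the rooted problem, and therefore, by \cref{thm:min-leaf-gs}, for \MaxInL{} of GS. Small degenerate cases, such as $k\le 1$ or a single-vertex graph, are handled separately by returning a constant-size trivial instance.

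I expect this bookkeeping about root conventions to be the main obstacle. If the kernel is not assumed to keep the graph connected, I would also check that the output can be taken to be connected, since \cref{thm:min-leaf-gs} requires it, or else replace it with a trivial equivalent instance.
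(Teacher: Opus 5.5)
Your proposal is correct and takes the same route as the paper, which simply combines the last sentence of \cref{thm:min-leaf-gs} with the $2k$-vertex kernel of Li et al.~\cite{li2017deeper}. Your check on the root convention is a useful refinement that the paper glosses over: here the root always counts as internal, while standard MIST counts only vertices of degree at least two, so the optima differ by exactly one when $n \geq 2$, and running the kernel on $(G,k-1)$ as you do gives an equivalent instance with at most $2(k-1) \leq 2k$ vertices.
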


Next, we will see that we cannot obtain a similar result for the minimization problem. We will use the following property of $\cl$-trees.

\begin{observation}[Scheffler {\cite[Lemma 20]{scheffler2022recognition}}]\label{obs:children}
    The children of a vertex $v$ in an $\cl$-tree of some GS ordering form an independent set.
\end{observation}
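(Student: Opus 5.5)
This is Scheffler's lemma: in the $\cl$-tree $T$ of a GS ordering $\sigma$, the children of any vertex $v$ form an independent set. The plan is a direct argument from the definition of the $\cl$-tree. We only use that each vertex is joined to its rightmost neighbour among its predecessors in $\sigma$.

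Fix a vertex $v$ and two distinct children $w_1, w_2$ of $v$ in $T$. Since parents always precede their children in $\sigma$, we may assume without loss of generality that $v \prec_\sigma w_1 \prec_\sigma w_2$.

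Suppose, for contradiction, that $w_1w_2 \in E(G)$. Then $w_1$ is a neighbour of $w_2$ lying strictly between $v$ and $w_2$ in $\sigma$. By the definition of the $\cl$-tree, $v$ is the parent of $w_2$ exactly when $v \in N(w_2)$ and no vertex strictly between $v$ and $w_2$ in $\sigma$ is adjacent to $w_2$. The vertex $w_1$ violates this second condition. Hence the parent of $w_2$ would be $w_1$ or some neighbour of $w_2$ further to the right, and not $v$. This contradicts $w_2$ being a child of $v$, so $w_1w_2 \notin E(G)$.

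No real obstacle is expected. The only point needing care is the ordering: every child appears after its parent in $\sigma$, which holds because the tree edge $v_iv_j$ with $i<j$ makes $v_i$ the parent. The argument never uses any GS-specific rule. GS matters only in that it guarantees every non-root vertex has an earlier neighbour, so that $T$ is well defined.
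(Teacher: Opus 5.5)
Your proof is correct. If two children $w_1 \prec_\sigma w_2$ of $v$ were adjacent, then $w_1$ would be a neighbour of $w_2$ strictly between $v$ and $w_2$, so $v$ could not be the rightmost earlier neighbour of $w_2$. The paper only cites this fact from Scheffler without reproducing a proof, and your argument is the natural direct one from the definition of the $\mathcal{L}$-tree.
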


This allows us to prove the following.

\begin{theorem}\label{thm:min-in-bfs}
    \MinInL{} of GS and BFS is \NP-hard for every fixed $k \geq 5$.
\end{theorem}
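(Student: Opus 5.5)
The plan is to reduce from \textsc{$c$-Coloring}, which is \NP-hard for every fixed $c \geq 3$, and to take $k = c+1$ (any $c \geq 4$ gives the stated range $k \geq 5$). The tool that makes this possible is \cref{obs:children}. In any GS \cl-tree the children of one vertex form an independent set. So if only few vertices are internal and every vertex of the input graph is the child of one of them, grouping the vertices by their parent gives a partition into few independent sets, which is a coloring.

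\emph{Construction.} Given a graph $G$, build $H$ as follows. Add a vertex $r$ and vertices $x_1,\dots,x_{c-1}$. Make $r$ adjacent to every vertex of $V(G)\cup\{x_1,\dots,x_{c-1}\}$, and make each $x_i$ adjacent to all of $V(G)$. Append a pendant leaf to $r$ and to each $x_i$. Every pendant leaf is a leaf of the \cl-tree unless it is the root, so $r,x_1,\dots,x_{c-1}$ are forced to be internal. That gives at least $c$ internal vertices in every \cl-tree, and the budget is $k=c+1$.

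\emph{Coloring $\Rightarrow$ small tree.} Let $V_0,\dots,V_{c-1}$ be the color classes of a $c$-coloring of $G$. Start at $r$ and visit
\[
r,\;V_0,\;x_1,\;V_1,\;\dots,\;x_{c-1},\;V_{c-1},
\]
followed by the pendant leaves. This is a BFS ordering, since all of these vertices lie in $N(r)$, and hence also a GS ordering. Take a vertex of $V_i$ with $i\ge1$. Its rightmost earlier neighbor is $x_i$, because $V_i$ is independent and every earlier vertex of $G$ comes before $x_i$. Likewise every vertex of $V_0$ gets $r$ as its parent. So the only internal vertices are $r$ and the $x_i$, which is $c \le k$ of them.

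\emph{Small tree $\Rightarrow$ coloring.} Suppose some GS ordering of $H$ has a \cl-tree with at most $c+1$ internal vertices. The $c$ forced vertices are among them, so at most one further internal vertex $w$ exists. Every vertex of $G$ that is not the root has an internal parent. By \cref{obs:children}, grouping the vertices of $G$ by their parent partitions them into at most $c+1$ independent sets, with $w$ possibly a vertex of $G$ that is itself uncovered. This is one class too many, and this off-by-one is the main obstacle.

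I would first try to remove the slack by arguing that $r$ cannot be the parent of any vertex of $G$ unless $r$ is the last visited vertex adjacent to it. If that fails, I would change the adjacencies, for example drop the edges between $r$ and $V(G)$ and attach a $c$-th hub $x_c$. For BFS the vertices of $G$ must then lie in the same layer as the hubs, so a cleaner fix is to choose the construction so that the extra internal vertex $w$ is forced to be a pendant root or the root itself. The fallback is to reduce from \textsc{$(c+1)$-Coloring} and accept $k=c+1$ forced-plus-one vertices. In every variant I would also handle separately the case where the root is a pendant vertex or a vertex of $G$, where the root is internal but is nobody's child, and check that the count still gives an independent-set partition of $V(G)$ of the right size.
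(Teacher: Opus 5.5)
Your reduction breaks in the direction you treat as easy, ``coloring $\Rightarrow$ small tree''. The off-by-one you flag is not the real problem. Each hub $x_i$ is adjacent to all of $V(G)$. So in the ordering $r, V_0, x_1, V_1, \dots, x_{c-1}, V_{c-1}$, the rightmost earlier neighbor of $x_i$ is the last vertex of $V_{i-1}$ (whenever $V_{i-1}\neq\emptyset$), not $r$. That vertex of $G$ gets a child and becomes internal, so you can get up to $2c-1$ internal vertices rather than $c$.

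No other ordering rescues the construction. Take $G=K_c$, which is $c$-colorable. In $H$, every non-pendant vertex ($r$, the hubs, the other vertices of $G$) is adjacent to every vertex of $G$. A pendant leaf visited before its attachment vertex must be the root, so no pendant leaf lying between two consecutive non-pendant vertices $v \prec z$ is adjacent to $z$. Hence every vertex of $G$ other than the last non-pendant vertex of the ordering is the parent of the next non-pendant vertex. Every \cl-tree of $H$ therefore has at least $c+(c-1)=2c-1>c+1$ internal vertices, and your reduction maps YES-instances to NO-instances.

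A quick sanity check would have caught this. Nothing in your argument uses $c\ge 4$, and for $c=3$ it would give \NP-hardness for $k\le 4$, contradicting \cref{thm:Min-Int-GS-poly}. With budget $c$ instead of $c+1$, your backward direction would actually be fine, since a root outside the forced set is itself an extra internal vertex. The forward direction is what cannot be repaired, and none of the fixes you list touches it, because they all keep hubs adjacent to all of $V(G)$ and visited right after color classes.

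The missing idea is a \emph{buffer} vertex between consecutive hubs. It is adjacent to the hubs but not to $G$, and is itself forced to be internal. The paper reduces from \ThreeColor{} using:
\begin{itemize}
\item hubs $u_1,u_2,u_3$ universal to $G$;
\item buffers $w_1,w_2$, with $w_i$ adjacent to $u_i$ and $u_{i+1}$;
\item $u_1$ adjacent to all of $U\cup W$, which makes the ordering below a BFS;
\item pendant leaves $x_i$ on $u_i$ and $y_i$ on $w_i$, which make all five vertices of $U\cup W$ cut vertices.
\end{itemize}
Consider the ordering $u_1, V_1, w_1, u_2, V_2, w_2, u_3, V_3, x_1, y_1, x_2, y_2, x_3$. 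The buffer $w_i$ gets a hub as parent because it has no neighbor in $G$. The hub $u_{i+1}$ gets $w_i$ as parent because it is visited immediately after it. So only $U\cup W$ is internal.

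Conversely, with budget $5$ the five cut vertices exhaust the budget. The root must then be among them, and every vertex of $G$ is a leaf whose parent lies in $U$. \cref{obs:children} then yields the $3$-coloring. The two buffers are exactly why the threshold is $k\ge 5$. Larger $k$ is handled by attaching to $u_1$ a path with $k-5$ internal vertices, not by using more colors.
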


\begin{proof}
\begin{figure}
    \centering
    \begin{tikzpicture}[scale=1.0]
    \node[vertex,label=90:$u_1$] (u1) at (0,0) {};
    \node[vertex,label=0:$u_2$] (u2) at (2,0) {};
    \node[vertex,label=90:$u_3$] (u3) at (4,0) {};

    \node[vertex,label=175:$w_1$] (w1) at (1,0.5) {};
    \node[vertex,label=5:$w_2$] (w2) at (3,0.5) {};

    \node[vertex,label=90:$x_1$] (x1) at (-1,0.5) {};
    \node[vertex,label=90:$x_2$] (x2) at (2,1) {};
    \node[vertex,label=90:$x_3$] (x3) at (5,0.5) {};

    \node[vertex,label=90:$y_1$] (y1) at (1,1.0) {};
    \node[vertex,label=90:$y_2$] (y2) at (3,1.0) {};

    \draw[normaledge] (u1) -- (w1) -- (u2) -- (w2) -- (u3);
    \draw[normaledge] (x1) -- (u1);
    \draw[normaledge] (x2) -- (u2);
    \draw[normaledge] (x3) -- (u3);
    \draw[normaledge] (w1) -- (y1);
    \draw[normaledge] (w2) -- (y2);

    \draw[normaledge] (u1) -- (u2);
    \draw[normaledge,bend angle=15, bend right] (u1) to (u3);
    \draw[normaledge] (u1) -- (w2);

    \draw[normaledge] (u1) -- (1,-1);
    \draw[normaledge] (u1) -- (0.8,-1);
    \draw[normaledge] (u1) -- (1.2,-1);

    \draw[normaledge] (u2) -- (2,-1);
    \draw[normaledge] (u2) -- (1.8,-1);
    \draw[normaledge] (u2) -- (2.2,-1);

    \draw[normaledge] (u3) -- (3,-1);
    \draw[normaledge] (u3) -- (2.8,-1);
    \draw[normaledge] (u3) -- (3.2,-1);

    \draw[normaledge,fill=white] (2,-1) ellipse (2cm and 0.25cm);

    \node at (2,-1) {$G$};
\end{tikzpicture}
    \caption{Construction of the proof of \cref{thm:min-in-bfs}. Multiple edges to $G$ imply that all edges to vertices in $G$ are present.}
    \label{fig:max-ls-gs}
\end{figure}

    We reduce from \ThreeColor, i.e., we want to decide whether a given graph $G$ is 3-colorable.

    We construct the graph $G'$ from $G$ as follows (see \cref{fig:max-ls-gs}). We add the vertices of the vertex sets $U = \{u_1, u_2, u_3\}$, $W = \{w_1, w_2\}$, $X = \{x_1, x_2, x_3\}$, and $Y = \{y_1, y_2\}$. The vertices of $U$ are universal to $G$. Furthermore, $u_i$ is adjacent to $x_i$. The vertex $w_i$ is adjacent to $u_i$ and to $u_{i+1}$ as well as to $y_i$. Vertex $u_1$ is also adjacent to all other vertices in $U \cup W$. We claim that $G'$ has a GS and BFS ordering whose $\cl$-tree has at most five internal vertices if and only if $G$ is $3$-colorable.

    First, assume that $G$ is $3$-colorable and let $V_1, V_2, V_3$ be the color classes of some $3$-coloring of $G$. We start our BFS ordering $\sigma$ in $u_1$. Then we visit all vertices of $V_1$ and afterwards we visit $w_1$. Next, we visit $u_2$ followed by the vertices of $V_2$ and $w_2$. Then we visit $u_3$ and the vertices of $V_3$. Finally, we visit the vertices of $X$ and $Y$ in the order $x_1, y_1, x_2, y_2, x_3$. This defines a valid BFS ordering of $G'$ as the start vertex $u_1$ is adjacent to all vertices except for the vertices in $X \cup Y$ and the ordering of these vertices follows the ordering of their neighbors.
    
    Let $T$ be the $\cl$-tree of the ordering $\sigma$. We claim that all vertices of $G$ as well as all vertices of $X$ and $Y$ are leaves of $T$. For the vertices of $X$ and $Y$ this is obvious as they are also leaves in $G'$ and are not the root of the tree $T$. So consider a vertex $v$ of $G$ and assume, for contradiction, that this vertex is not a leaf. This implies that $v$ has some child. This child $z$ is either a vertex of $G$ or a vertex of $U$. Obviously, the child cannot be $u_1$ as this vertex is the start vertex of $\sigma$. For every other $u_i$, the vertex $w_{i-1}$ is visited directly before $u_i$. Thus, the parent of $u_i$ is $w_{i-1}$. Therefore, $z$ is contained in $G$. Let $V_j$ be the color class of $v$. As $z$ has to be adjacent to $v$ and has to be visited after $v$ in $\sigma$, it holds that $z$ is in some color class $V_\ell$ with $\ell > j$. However, vertex $u_\ell$ is visited between $v$ and the vertices of $V_\ell$; a contradiction.

    Now assume that $G$ has a GS ordering $\sigma$ whose \cl-tree $T$ has at most five internal vertices. First, observe that the vertices in $U \cup W$ are internal vertices of $T$ since they are cut vertices of $G'$. As $|U \cup W| = 5$, we know that all other vertices have to be leaves in $T$. In particular, this holds for the vertices of $G$. For all $i \in \{1,2,3\}$, let the set $V_i$ contain those vertices of $G$ that are children of $u_i$. As all vertices of $G$ must be leaves and they are only adjacent to vertices of $G$ and of $U$, it holds that $V_1 \cup V_2 \cup V_3 = V(G)$. Due to \cref{obs:children}, the sets $V_i$ induce independent sets in $G$. Therefore, the sets $V_1, V_2, V_3$ induce a coloring of $G$ with at most three colors.
    To show \NP-completeness for every fixed value $k > 5$, we simply add a  path to $u_1$ with exactly $k - 5$ internal vertices.
\end{proof}

This result raises the question whether \MinInL{} of GS and BFS can be solved in polynomial time for fixed values smaller than $5$. Using properties of \cl-trees of GS given in \cite{mfcs}, we can show that at least for these trees it is indeed true. The idea of the algorithm is a complete enumeration of all possible tree structures and orderings of the $k \leq 4$ internal vertices. To decide whether such a choice is valid, one has to decide whether the remaining vertices can be traversed in such a way that they all become leaves. In particular, one has to find out whose children the remaining vertices should become. This question can be reduced to an instance of the \textsc{2-List Coloring} problem, where all lists have size at most~2; a problem that is well-known to be solvable in polynomial time.
We need the following definition and lemmas for our algorithm.

\bigskip
\begin{definition}[Beisegel et al.~\cite{mfcs}]
    Let $T$ be a spanning tree of a graph $G$ rooted in $r \in V(G)$. Let $xy$ be an edge in $E(G) \setminus E(T)$ and let $x'$ and $y'$ be the parents of $x$ and $y$ in $T$, respectively. We call the set $\{x,y,x',y'\}$ a \emph{$U$-bend} of $T$. If $y$ is a descendant of $x'$ but not a descendant of $x$, then we call the triple of vertices $x$, $y$, and $x'$ a \emph{hook configuration}. We call $x$ the \emph{point} and $y$ the \emph{eye} of the hook.
\end{definition}
\begin{lemma}[Beisegel et al.~\cite{mfcs}]\label{lemma:ubend}
    Let $T$ be a spanning tree of a graph $G$ rooted in $r$. Let $xy$ be an edge in $E(G) \setminus E(T)$ and let $x'$ and $y'$ be the parents of $x$ and $y$ in $T$, respectively. If $T$ is an \cl-tree of a GS ordering $\sigma$ starting with $r$, then it either holds that $x' \prec_\sigma x \prec_\sigma y' \prec_\sigma y$ or $y' \prec_\sigma y \prec_\sigma x' \prec_\sigma x$.
\end{lemma}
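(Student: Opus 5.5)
We argue through the \cl-tree property: for every vertex $v\neq r$, the parent $v'$ of $v$ is the rightmost neighbor of $v$ that precedes $v$ in $\sigma$. Fix a non-tree edge $xy \in E(G)\setminus E(T)$ and, by symmetry, assume $x \prec_\sigma y$. The goal is then $x' \prec_\sigma x \prec_\sigma y' \prec_\sigma y$. The other ordering $y' \prec_\sigma y \prec_\sigma x' \prec_\sigma x$ is the same statement with $x$ and $y$ swapped.

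\textbf{Step 1.} The parent of any non-root vertex precedes it, since it is chosen among the earlier neighbors. Hence $x' \prec_\sigma x$ and $y' \prec_\sigma y$. If $x=r$, the claimed chain should be read without $x'$. This special case is harmless, because every other vertex comes after $r$ anyway.

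\textbf{Step 2.} Since $x \prec_\sigma y$ and $x \in N(y)$, the vertex $x$ is a candidate for the parent of $y$. The parent $y'$ is the rightmost earlier neighbor of $y$, so $x \preceq_\sigma y'$. Because $xy \notin E(T)$, we have $y' \neq x$, and therefore $x \prec_\sigma y'$. Together with Step 1 this gives $x' \prec_\sigma x \prec_\sigma y' \prec_\sigma y$.

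\textbf{Step 3.} If instead $y \prec_\sigma x$, the same argument with the roles of $x$ and $y$ exchanged yields the second ordering. This completes the case analysis.

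There is no real obstacle here. The only point that needs care is the strict inequality $x \prec_\sigma y'$, and it follows precisely from the assumption that $xy$ is not a tree edge. One should also check that the definition of a GS ordering guarantees every non-root vertex has an earlier neighbor, so that $y'$ exists; this holds because $G$ is connected and $\sigma$ is a GS ordering.
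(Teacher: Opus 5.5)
Your proof is correct. Assuming w.l.o.g.\ $x \prec_\sigma y$, you use $x' \prec_\sigma x$, and that $y'$ is the rightmost neighbor of $y$ preceding it with $y' \neq x$ because $xy \notin E(T)$, to get $x' \prec_\sigma x \prec_\sigma y' \prec_\sigma y$. The paper gives no proof of this lemma and imports it from earlier work of Beisegel et al.; your direct use of the \cl-tree definition is the natural argument for it.
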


\begin{lemma}[Beisegel et al.~\cite{mfcs}]\label{lemma:hook}
    Let $x$ and $y$ be part of a hook configuration of $T$ rooted in $r \in V(G)$ with point $x$ and eye $y$. Then for any GS ordering $\sigma$ starting in $r$ with \cl-tree $T$ it holds that $x \prec_\sigma y$.
\end{lemma}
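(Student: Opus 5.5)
This is \cref{lemma:hook}, and I would argue by contradiction. Take a GS ordering $\sigma$ starting in $r$ whose \cl-tree is $T$, and let $x$ be the point and $y$ the eye of a hook, so $xy \in E(G)\setminus E(T)$, $x'$ is the parent of $x$, and $y$ is a descendant of $x'$ but not of $x$. Suppose $y \prec_\sigma x$.

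The first tool is the basic \cl-tree fact that every vertex appears after its parent in $\sigma$. Hence every descendant of a vertex appears after that vertex, so $x' \prec_\sigma y$. Combining with the assumption gives $x' \prec_\sigma y \prec_\sigma x$.

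Since $xy$ is a non-tree edge, I would apply \cref{lemma:ubend} to it with $y'$ the parent of $y$. The option $x' \prec x \prec y' \prec y$ is impossible because $y \prec x$. So $y' \prec_\sigma y \prec_\sigma x' \prec_\sigma x$, which gives $y \prec_\sigma x'$. This contradicts $x' \prec_\sigma y$, and the lemma follows.

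I do not expect a real obstacle. The one point to check is the degenerate case $y = x'$. It cannot occur, because $x'$ is the parent of $x$, so $x'x \in E(T)$, whereas $xy \notin E(T)$. Thus $y$ is a proper descendant of $x'$, and the ordering ancestor-before-descendant is strict.

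The fact that $y$ is not a descendant of $x$ is not used in the argument. It only guarantees that the configuration is a genuine hook, so that $y \prec_\sigma x$ is conceivable.
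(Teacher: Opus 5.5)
Your proof is correct. Since $y$ is a proper descendant of $x'$ (you rightly rule out $y=x'$), the ancestor-before-descendant property gives $x' \prec_\sigma y$, which excludes the alternative $y' \prec_\sigma y \prec_\sigma x' \prec_\sigma x$ of \cref{lemma:ubend} and forces $x \prec_\sigma y$. The paper only cites this lemma from Beisegel et al.\ without proof, but this derivation from the U-bend lemma is essentially the intended one.
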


We use these properties of \cl-trees to give the following algorithmic result.

\begin{theorem}\label{thm:Min-Int-GS-poly}
    Given a graph $G$ and a value $k \leq 4$, then \MinInL{} of GS can be solved in polynomial time.
\end{theorem}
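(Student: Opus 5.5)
The plan is to guess the internal part of the tree by brute force and then solve the leaf assignment as a \textsc{2-List Coloring} (equivalently 2-SAT) instance. Since $k \leq 4$ is a constant, we enumerate every set $I \subseteq V(G)$ with $|I| \leq k$ that could be the set of internal vertices. For each such $I$ we also enumerate a root $r \in I$, a rooted tree $T_I$ on $I$ (the internal vertices of a rooted tree form a subtree containing the root), and a linear order $i_1 = r, i_2, \dots, i_{|I|}$ of $I$ in which the internal vertices will appear in $\sigma$. This gives $\O(n^4)$ guesses times a constant. For each guess we must decide whether all vertices of $L = V(G) \setminus I$ can be inserted into $\sigma$ so that every vertex of $L$ is a leaf and the $\cl$-tree restricted to $I$ is exactly $T_I$. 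Taking the minimum over successful guesses decides the instance.

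For a fixed guess, a vertex $v \in L$ must become a child of some $i \in N(v) \cap I$, because a parent is never a leaf. I will describe a choice by a pair (parent $p(v)$, slot $j$), where the slot means that $v$ is placed between $i_j$ and $i_{j+1}$. The choice is feasible only if two conditions hold: $p(v)$ is the last $I$-neighbour of $v$ among $i_1, \dots, i_j$, and no $L$-neighbour of $v$ lies between $p(v)$ and $v$. Within one slot the relative order of leaves is free, so the remaining requirements become pairwise constraints.
\begin{itemize}
    \item For every edge $xy$ with $x,y \in L$, \cref{lemma:ubend} requires $p(x) \prec x \prec p(y) \prec y$ or the symmetric order. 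Hence $x$ and $y$ lie in different slots, and the parent of the later one lies strictly after the earlier one.
    \item For every edge between $v \in L$ and an internal vertex $i_{m}$ whose prescribed $T_I$-parent is $i_{m'}$, vertex $v$ must not be placed between $i_{m'}$ and $i_m$.
    \item The hook conditions of \cref{lemma:hook} give further precedence constraints between pairs of leaves.
\end{itemize}
Each internal vertex must additionally have at least one child. To keep every constraint binary, I will guess one designated child per internal vertex, which costs a further $\O(n^4)$ factor and fixes those vertices outright.

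The main obstacle is to show that, after this guessing, every $v \in L$ has at most two feasible choices, so that the binary constraints form a \textsc{2-List Coloring}/2-SAT instance. With only four slots and at most four potential parents, I would first show that w.l.o.g.\ $v$ is placed directly after its parent; that is, the slot is determined by $p(v)$ up to a monotone shift that never helps. I would justify this with an exchange argument: moving a leaf earlier within its feasible interval cannot create a new violation of the conditions from \cref{lemma:ubend} and \cref{lemma:hook}. I would then do a case analysis on $N(v) \cap I$ to rule out three genuinely different parents. A leaf adjacent to three internal vertices $i_a \prec i_b \prec i_c$ can only take its parent among the last ones before its slot. The edges to non-parent internal vertices that come later are forbidden to lie in the relevant ancestor intervals by the second constraint above, and for $|I| \leq 4$ a short enumeration should leave at most two options. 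If this case analysis fails for some configuration, the fallback is to guess additionally the parent choices of the few leaves with three options, or to branch on such configurations, which is still polynomial because they are determined by constantly many vertices of $I$. Once the lists have size at most two, 2-SAT solves each guess in polynomial time, and a satisfying assignment directly yields the ordering $\sigma$ by listing each slot's leaves in any order consistent with the precedence constraints.
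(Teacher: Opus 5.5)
Your framework is the paper's: guess $I$, the tree on $I$ and the order of $I$, place every other vertex directly after its parent, and solve a \textsc{2-List Coloring} instance whose colours are parent indices. Your exchange argument for this normalization is sound; it matches the paper's ordering ``$v_1$, its colour class, $v_2$, its colour class, \dots''. The gap is exactly the step you call the main obstacle. You never prove that every leaf keeps at most two admissible parents, and this is the only place where $k \le 4$ is used. Your fallback does not work. The number of leaves with three candidate parents is not bounded. Leaves of the same neighbourhood type also cannot all be given the same parent, since adjacent ones need different parents by \cref{obs:children}. The construction in \cref{thm:min-in-bfs} makes this concrete for $k=5$. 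There, with $I = U \cup W$ ordered $u_1,w_1,u_2,w_2,u_3$, every vertex of the original graph keeps the three admissible parents $u_1,u_2,u_3$, and choosing among them is exactly \textsc{3-Coloring}. So any scheme that tolerates three-option leaves would solve an \NP-hard problem.

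The missing argument is short. With leaves placed directly after their parents, your second constraint says the following. If $v \notin I$ is adjacent to $i_m$, whose parent in $T_I$ is $i_{m'}$, then the parent index $a$ of $v$ must not satisfy $m' \le a < m$. This is precisely the content of the paper's rules (R1)--(R3). Now suppose $a<b<c$ all remain admissible for $v$, and let $b'$ and $c'$ be the parent indices of $i_b$ and $i_c$.
\begin{itemize}
\item The constraint for the neighbour $i_c$ gives $b \notin [c',c-1]$, hence $b<c'$.
\item The constraint for $i_b$ gives $a<b'$.
\end{itemize}
Thus $a<b'<b<c'<c$ are five distinct indices, contradicting $|I|\le 4$. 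The paper phrases the same fact as follows: the three candidates have pairwise distinct parents outside the triple, and at most one candidate is the root.

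With this bound in place, suppose the guessed tree and order on $I$ are consistent, including hooks inside $I$. Then the U-bend and hook constraints between leaves hold automatically, and the only remaining binary constraint is that adjacent vertices receive different parents. That is exactly the 2-list colouring instance. Your extra guess of a designated child per internal vertex is unnecessary, since the problem only asks for at most $k$ internal vertices, but it is harmless.
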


\begin{proof}
    The idea of the algorithm is an exhaustive search over all possible sets of internal vertices, every possible tree structure of these internal vertices, and every possible ordering of these vertices in the search ordering. It is clear that there are $\O(n^k)$ of these choices. So, let $I = \{v_1, \dots, v_k\}$ be the choice of the internal vertices with $k \in \{1,2,3,4\}$, let $T'$ be the choice of the subtree induced by $I$, and let $\sigma'$ be the ordering of the vertices in $I$. W.l.o.g.~we assume that $\sigma' = (v_1,\dots,v_k)$. Note that we assume that $T'$ and $\sigma'$ are consistent, i.e., the parent of a vertex in $T'$ is to the left of that vertex in $\sigma'$. Furthermore, we assume that $\sigma'$ fulfills \cref{lemma:hook}, i.e., if there is a hook in $T'$, then its vertices are ordered correctly in $\sigma'$. We now have to decide whether we can add the other vertices as leaves to $T'$. 
    
    We reduce this question to an instance of the \textsc{2-List Coloring} problem, which is known to be solvable in polynomial time.
    To this end, we define for every vertex except $v_1$ a list of colors that encode the possible parents. If $v_j$ is the parent of $v_i \in \{v_2, \dots, v_k\}$, then we set $L(v_i) = \{j\}$. For every vertex $u \in V(G) \setminus I$, we set $L(u) := \{i \in \{1,\dots,k\} \mid uv_i \in E(G)\}$. We apply the following general rules to reduce the size of the sets $L$.

    \begin{enumerate}[(R1)]
        \item If $i,j \in L(u)$ and $v_i$ is the parent of $v_j$ in $T'$, then remove $i$ from $L(u)$.\label{1-appendix}
        \item If $i,j \in L(u)$, $i < j$, and $v_i$ is a descendant of the parent of $v_j$ in $T'$, then remove $i$ from $L(u)$.\label{2-appendix}
        \item If $i,j \in L(u)$, $i < j$, the parent of $v_j$ in $T'$ is $v_\ell$, $v_\ell$ is not the parent of $v_i$ and $\ell < i$, then remove $i$ from $L(u)$.\label{3-appendix}
    \end{enumerate}

    We first show that these reductions are correct, i.e., we do not remove indices of vertices that might be parents of $u$ in some $\cl$-tree of GS following the conditions given by $T'$ and $\sigma'$.

    \begin{claim}\label{claim:parent-appendix}
        Assume there is a GS ordering $\sigma$ of $G$ following $\sigma'$ with an \cl-tree $T$ that contains $T'$ as subtree. Let $v_i$ be the parent of $u$ in $T$. Then $i \in L(u)$.
    \end{claim}

    \begin{claimproof}
        Assume for contradiction that $i \notin L(u)$. As $v_i$ is adjacent to $u$, $i$ has been an element of $L(u)$ before (R\ref{1-appendix}), (R\ref{2-appendix}) and (R\ref{3-appendix}) have been applied. So, first assume that (R\ref{1-appendix}) has removed $i$ from $L(u)$. This implies that $u$ and $v_j$ are both children of $v_i$ in $T$ but $u$ and $v_j$ are adjacent. This contradicts \cref{obs:children}. 
        
        Now assume that (R\ref{2-appendix}) has removed $i$ from $L(u)$. Then, $u$, $v_j$, and the parent of $v_j$ form a hook in $T$. \Cref{lemma:hook} implies that $v_j$ is to the left of $u$ in $\sigma$. Since  $v_i$ is to the left of $v_j$ in $\sigma$, it holds that $v_i \prec_\sigma v_j \prec_\sigma u$. This contradicts the fact that $v_i$ is the parent of $u$ in $T$ as $v_i$ is not the rightmost neighbor of $u$ to the left of $u$ in $\sigma$. 
        
        Finally, assume that (R\ref{3-appendix}) has removed $i$. Then, the situation is as in \cref{lemma:ubend}. Since $v_\ell \prec_\sigma v_i$, it must also hold that $v_j \prec_\sigma v_i$; a contradiction to the fact that $i < j$. 
    \end{claimproof}

    Next, we show that the $L$-sets have a size of at most~2.

    \begin{claim}
        For every $u \in V(G) \setminus \{v_1\}$, it holds that $|L(u)| \leq 2$.
    \end{claim}

    \begin{claimproof}
        Assume that there are $a,b,c \in L(u)$. Then, due to (R\ref{1-appendix}), none of the three vertices $v_a, v_b, v_c$ has a parent in $T'$ that is part of the set $\{v_a, v_b, v_c\}$ and the parents of these vertices (if they exist)  are pairwise different, due to (R\ref{2-appendix}). At least two of the three vertices must have a parent, since at most one of them could be the root of $T$. Thus, there must be at least five vertices in $I$; a contradiction to our assumption that $k \leq 4$.
    \end{claimproof}

    Due to this claim, the $L$-lists form an instance of \textsc{2-List Coloring} for the graph $G - v_1$. We claim that this instance is equivalent to the question  whether there is a proper \cl-tree following the conditions of $T'$ and $\sigma'$. Note that there could be empty lists.

    \begin{claim}
        There is a GS ordering $\sigma$ of $G$ following $\sigma'$ with an \cl-tree $T$ that contains $T'$ as subtree if and only if there is a 2-list coloring for $G - v_1$ following the $L$-lists.
    \end{claim}

    \begin{claimproof}
        First, assume that there is an ordering $\sigma$ with a correct $\cl$-tree $T$. Note that the parent of every vertex in $G - v_1$ is an element of $I$. We define a coloring of $G - v_1$ as follows: For every vertex $u \in G - v_1$ with parent $v_i$, we set $f(u) = i$. We now claim that $f$ is a proper 2-list coloring of $G - v_1$. Due to \cref{claim:parent-appendix}, $i$ was part of the list $L(u)$. So, it only remains to show that $f$ is a proper coloring. As vertices only have the same color if they have the same parent, \cref{obs:children} implies that the color classes are independent sets. Hence, $f$ is a proper coloring following the $L$-lists.
    
        Now assume that there is a proper $k$-coloring $f$ of $G - v_1$ following the $L$-lists. We set $f(v_1) = 0$. We first visit $v_1$ and then all vertices with color $1$ that are not part of $I$ in arbitrary order. We repeat this procedure with all vertices $v_2, \dots, v_k$. We call the resulting ordering $\sigma$. Note that every vertex of $G$ except for $v_1$ has some color from the set $\{1,\dots,k\}$. If it has the color $i$, then it is adjacent to $v_i$ in $G$. Therefore, when the vertex is visited, it has at least one neighbor that was visited before. Thus, the ordering $\sigma$ is a GS ordering of $G$. 
        
        It remains to show that the \cl-tree $T$ of $\sigma$ is correct. It suffices to show that every vertex $u \in G - v_1$ has the parent $v_{f(u)}$. First, consider the case that $u \notin I$. Then all vertices between $v_{f(u)}$ and $u$ in $\sigma$ also have color $f(u)$. As $f$ is a proper coloring, these vertices are not adjacent to $u$ and, thus, $v_{f(u)}$ is the parent of $u$ in $T$.

        So consider the case that $u \in I$, i.e., $u = v_j$ for some $j \in \{1,\dots,k\}$. Assume for contradiction that $w$ is the parent of $u$ in $T$ and $w \neq v_{f(u)}$. As $u$ and $w$ are adjacent, it holds that $f(u) \neq f(w)$. Furthermore, $v_{f(u)}$ is visited before $u$ and, thus, $v_{f(u)} \prec_\sigma w \prec_\sigma u$. 

        First, assume that $w \notin I$. Then $w$ is visited between $v_{f(w)}$ and any other element of $I$, in particular of $u$. This implies that $f(u) < f(w) < j$. If $v_{f(u)}$ is the parent of $v_{f(w)}$, then (R\ref{2-appendix}) would have removed $f(w)$ from $L(w)$. Otherwise, (R\ref{3-appendix}) would have removed $f(w)$ from $L(w)$. So in any case, $f(w)$ would not be part of $L(w)$; a contradiction.

        So finally consider the case that $w \in I$. Thus, $\{u,v_{f(u)},w,v_{f(w)}\} = I$. If $f(u) = 1$, then $v_{f(w)}$ must be a child of $v_{f(u)}$ in $T'$. Thus, the edge $uw$ would induce a hook in $T'$ which implies that $u \prec_{\sigma'} w$; a contradiction. If $f(w) = 1$, then $v_{f(u)}$ must be a child of $v_{f(w)}$ in $T'$ and the hook induced by $uw$ would imply that $w \prec_{\sigma'} f(u)$; again a contradiction.
    \end{claimproof}
    This concludes the proof.
\end{proof}

\section{Number of Leaves as Parameter}\label{sec:leaves}

The para-\NP-hardness of \MinLeafL{} of DFS for $k=1$ follows directly from the fact that DFS can find Hamiltonian paths~\cite{bergougnoux2025parameterized}. The same holds for GS. Even stronger, $\cl$-trees of GS already capture the full power of general spanning trees when one is interested in the minimum number of leaves of the trees, as was shown in \cref{thm:min-leaf-gs}. It is easy to see that for every fixed value~$k \geq 1$, \textsc{Min-Leaf Spanning Tree} is \NP-complete. This implies the following.

\begin{corollary}
    \MinLeafL{} of GS is \NP-complete for all fixed values $k \geq 1$.
\end{corollary}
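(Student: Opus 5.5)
The corollary follows from \cref{thm:min-leaf-gs}, applied with $S = V(G)$, together with the \NP-completeness of \textsc{Min-Leaf Spanning Tree} for every fixed $k \geq 1$. The plan has three steps.

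\textbf{Membership in \NP.} A certificate is a GS ordering $\sigma$ of $G$. In polynomial time we check that every vertex after the first has an earlier neighbor, build the $\cl$-tree by attaching each vertex to its rightmost earlier neighbor, and count its non-root leaves.

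\textbf{Hardness for every fixed $k \geq 1$.} Let $X$ be the set of non-root leaves of a rooted tree; the root is not counted, per the paper's convention.

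\begin{enumerate}
\item First, the two problems are equivalent on every instance. \Cref{thm:min-leaf-gs} with $S = V(G)$ says that $G$ has a GS $\cl$-tree with at most $k$ leaves if and only if $G$ has a rooted spanning tree with at most $k$ leaves. The leaves must be counted the same way on both sides. The proof of \cref{thm:min-leaf-gs} gives the following: every internal vertex of the rooted spanning tree $T$ stays internal in the $\cl$-tree $T'$ of a DFS ordering of $T$ started at its root. Both trees are rooted at the same vertex and span $V(G)$, so the non-root leaf set of $T'$ is contained in $X$, the non-root leaf set of $T$. The converse direction is immediate, because a GS $\cl$-tree is itself a rooted spanning tree. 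Hence the minimum number of leaves is the same for GS $\cl$-trees and for rooted spanning trees.
\item Second, I show that \textsc{Min-Leaf Spanning Tree} (rooted) is \NP-hard for each fixed $k \geq 1$. For $k = 1$ this is \textsc{Hamiltonian Path}: a rooted spanning tree with one non-root leaf is exactly a Hamiltonian path starting at the root. For general $k$, I reduce from \textsc{Hamiltonian Path} with a prescribed start vertex $s$, which is \NP-hard. The construction attaches $k-1$ pendant vertices to $s$ and adds one more pendant vertex $p$ adjacent to $s$. The pendant vertices other than a possible root are forced leaves, and a tree with few leaves should root at $p$ or at $s$.
\item Third, I do the leaf count. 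Root at $p$. Then $s$ has $k-1$ pendant children plus the rest of the tree. The leaf count is exactly $k$ if and only if the original graph has a Hamiltonian path from $s$. In that case the remainder forms a single path, which contributes exactly one leaf.
\end{enumerate}

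Simpler still, I could attach $k-1$ pendant paths at $s$ and root elsewhere. Any bookkeeping gadget that forces exactly $k-1$ extra leaves will do.

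\textbf{Main obstacle.} The delicate part is the converse of the gadget argument. I must show that a spanning tree with at most $k$ leaves forces a Hamiltonian path from $s$ in the original graph, whatever root is chosen. The $k$ pendant vertices at $s$ are leaves unless one of them is the root, so at least $k-1$ leaves are already forced. The remaining budget allows at most one more leaf, so the subtree in the original graph hanging from $s$ must be a path. I expect a short case analysis on where the root lies to settle this.
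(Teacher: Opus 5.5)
Your proposal is correct and follows the paper's own route: \cref{thm:min-leaf-gs} transfers the question to rooted spanning trees, and you then spell out the \NP-hardness of \textsc{Min-Leaf Spanning Tree} for fixed $k$, which the paper only calls ``easy to see''. Two details to fix: the right instantiation of the theorem is $S$ equal to the leaf set of the given spanning tree, not $S=V(G)$ (which is vacuous), although your appeal to its proof (internal vertices of $T$ stay internal in $T'$) is exactly what is needed; and the ``simpler still'' variant with only $k-1$ pendant paths is off by one and should be dropped (for $k\ge 2$, rooting at the end of a pendant path leaves a budget of two leaves inside $G$, so $G=K_{1,3}$ with $s$ a leaf passes without a Hamiltonian path from $s$), whereas your main gadget with $k$ pendants works, including the case of a root inside $G$, which yields a Hamiltonian path ending at $s$.
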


Using a slight modification of the graph, we can achieve the same result for BFS.

\begin{theorem}\label{thm:min-leaf-bfs}
    \MinLeafL{} of BFS is \NP-complete for every fixed $k \geq 1$.
\end{theorem}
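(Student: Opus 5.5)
Membership in \NP{} is immediate: we guess a BFS ordering, check it in linear time, compute its $\cl$-tree and count the leaves. For hardness, the plan is to reduce from \textsc{Min-Leaf Spanning Tree} with a fixed number of leaves. This problem is \NP-complete for every fixed $k\ge 1$, and for $k=1$ it is \textsc{Hamiltonian Path} with a prescribed start vertex $r$. We then use \cref{thm:min-leaf-gs}, which says that GS $\cl$-trees realize exactly the leaf sets of rooted spanning trees. So it suffices to modify $G$ so that any GS ordering realizing a spanning tree with few leaves can be turned into a BFS ordering without changing the $\cl$-tree restricted to $G$. The natural gadget is a universal start vertex. We add a new vertex $s$ adjacent to every vertex of $G$, attach $s$ to the rest through a path, and root the search there.

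More concretely, take an instance $(G,r)$ of Hamiltonian path starting in $r$. Build $G'$ by adding a vertex $s$ adjacent to all of $V(G)$. Any BFS starting at $s$ first visits $s$ and then all of $V(G)$ in an arbitrary order, since every vertex of $G$ is a neighbor of $s$. Hence every permutation of $V(G)$ appended to $s$ is a BFS ordering of $G'$. In the resulting $\cl$-tree, each vertex $v$ gets as parent its rightmost earlier neighbor in $G$, or $s$ if it has none. The $\cl$-tree restricted to $V(G)$ is therefore the $\cl$-tree forest of an arbitrary ordering, and every vertex that has no earlier $G$-neighbor becomes a child of $s$.

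If $G$ has a Hamiltonian path $r=v_1,\dots,v_n$, the ordering $(s,v_1,\dots,v_n)$ yields the path $s,v_1,\dots,v_n$ as $\cl$-tree, which has exactly one leaf. For the converse, we must force the search to start at $s$. We do this by giving $s$ a pendant path whose far end $p$ is the only place where an ordering can start and still produce few leaves. Alternatively, we can use the root convention: a pendant vertex that is not the root is a leaf. So we attach to $s$ a pendant vertex $p$. If the search does not start at $p$, then $p$ is already a leaf, and the rest of the tree must also end somewhere, giving at least $2$ leaves. If it does start at $p$, the BFS ordering is $p,s$ followed by a permutation of $V(G)$, as above.

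For the one-leaf case, the $\cl$-tree must then be a path, by \cref{obs:children}-style arguments. A rooted tree with one leaf is a path from the root, and $s$ has exactly one child, so the first vertex of $G$ must be adjacent to each successor in turn. This yields a Hamiltonian path of $G$, with an arbitrary start, which matches the unrooted version of the problem. For general fixed $k$, we attach $k-1$ extra pendant vertices to $s$ (or to $p$) that are necessarily leaves. Alternatively, we reduce from the fixed-$k$ version of \textsc{Min-Leaf Spanning Tree} via \cref{thm:min-leaf-gs}, using the fact that after $s$ any order of $V(G)$ is permitted.

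The main obstacle is the converse for $k\ge2$. With extra pendants at $s$, the BFS must visit those pendants right after $s$ together with $V(G)$. Their position affects which vertices get $s$ as parent, since a pendant placed between vertices of $G$ is not adjacent to them and does not interfere. I would therefore place all pendants at the end of the ordering and argue that leaves inside $G$ number exactly those of the $\cl$-forest of the permutation. The step to check carefully is that any $\cl$-tree with $k$ leaves, of which $k-1$ are forced pendants, leaves exactly one leaf in $G$, and therefore gives a Hamiltonian path of $G$.
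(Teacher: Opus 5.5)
Your main construction, a vertex $s$ universal to $G$ carrying $k$ pendant vertices ($p$ and $k-1$ further ones) with a reduction from \textsc{Hamiltonian Path}, is exactly the paper's, and your forward direction is fine. The converse, however, rests on a false step. If the search does not start at $p$, the tree need not have at least $2$ leaves (for general $k$, at least $k+1$). For $G=K_n$, a BFS from $v_1\in V(G)$ may visit $v_2,\dots,v_n$ and then $s$ in the first layer, and $p$ afterwards. Its \cl-tree is the path $v_1,\dots,v_n,s,p$, whose single leaf is $p$.

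The reduction survives, but you need the paper's argument here. If the root is not one of the $k$ pendants of $s$, all of them are leaves, so no further leaf is allowed. The root cannot be $s$, because the forest on $V(G)$ below $s$ would contribute a leaf. If the root lies in $G$, deleting the pendants leaves a rooted tree whose only possible leaf is $s$. That tree is a path from the root ending in $s$, and deleting $s$ gives a Hamiltonian path of $G$. Note also that the extra pendants are not ``necessarily leaves'': the root may be one of them, which is symmetric to starting at $p$. So the delicate point is the position of the root, not the position of the pendants in the ordering; the latter is irrelevant, as you observe.

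Two of the alternatives you offer are wrong. Attaching the $k-1$ extra pendants to $p$ instead of $s$ breaks the reduction. A search starting at one of them makes only $k-2$ pendants leaves, which leaves a budget of two leaves for $G$. Take $k=2$, an extra pendant $q$ at $p$, and $G=K_{1,3}$ with center $c$ and leaves $l_1,l_2,l_3$. The BFS ordering $(q,p,s,l_1,c,l_2,l_3)$ has an \cl-tree whose only leaves are $l_2$ and $l_3$, although $K_{1,3}$ has no Hamiltonian path.

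Likewise, reducing from fixed-$k$ \textsc{Min-Leaf Spanning Tree} via \cref{thm:min-leaf-gs} with the gadget $s,p$ does not work. The universal vertex lets the \cl-forest on $V(G)$ split into several components hanging from $s$, so you are really counting leaves of spanning trees of $G+s$, not of $G$. Let $G$ be the tree formed by the paths $a_1a_2a_3$ and $b_1b_2b_3$ plus the edge $a_2b_2$. Every rooted spanning tree of $G$ has at least $3$ leaves, yet $(p,s,a_1,a_2,a_3,b_1,b_2,b_3)$ is a BFS ordering whose \cl-tree has only the leaves $a_3$ and $b_3$. Stick to \textsc{Hamiltonian Path} for every $k$ and put all $k$ pendants on $s$.
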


\begin{proof}
    We reduce from \textsc{Hamiltonian Path}. Let $G$ be some graph and let $G'$ be the graph constructed from $G$ by adding a universal vertex $u$ and appending $k$ leaves $w_1, \dots, w_k$ to $u$. 

    We claim that $G'$ has an \cl-tree of BFS with no more than $k$ leaves if and only if $G$ has a Hamiltonian path. First, assume that $G$ has a Hamiltonian path $P$. Then prepending $(w_1, u, w_2, \dots, w_k)$ to an ordering of the vertices of $G$ according to $P$ results in a BFS ordering whose \cl-tree has exactly $k$ leaves.

    So assume that $G'$ has a BFS ordering $\sigma$ whose \cl-tree $T$ has $\leq k$ leaves. If $\sigma$ starts in some leaf $w_i$, then removing $u$ and $w_1, \dots, w_k$ from $T$ must result in a Hamiltonian path of $G$. If $\sigma$ does not start in any leaf $w_i$, then all $w_i$ are leaves in $T$. Removing these leaves from $T$ must result in a Hamiltonian path of $G' \setminus \{w_1, \dots, w_k\}$. As the end vertex of this path has not been a leaf in $T$, this end vertex must be $u$. Hence, this path without $u$ is a Hamiltonian path of $G$.
\end{proof}

\begin{figure}
    \centering
    \begin{tikzpicture}[scale=0.30]

  \node [rectangle,text width=2.5cm,minimum height=0.8cm,
     rounded corners,name = box1] at (4.0,0.7) {};
  
  \filldraw[lipicsLightGray][] (box1.south west)
        [rounded corners=4pt] -- (box1.south east)
        [rounded corners=4pt] -- (box1.north east)
        [rounded corners=4pt] -- (box1.north west)--cycle
        ;

  \node [rectangle,text width=2.5cm,minimum height=0.8cm,
     rounded corners,name = box2] at (14.0,0.7) {};
  
  \filldraw[lipicsLightGray][] (box2.south west)
        [rounded corners=4pt] -- (box2.south east)
        [rounded corners=4pt] -- (box2.north east)
        [rounded corners=4pt] -- (box2.north west)--cycle
        ;

  \node[vertex] (p1) at (0,0) {};
  \node[vertex] (p2) at (2,0) {};
  \node[vertex,fill=black] (p3) at (4,0) {};
  \node[vertex] (p4) at (6,0) {};
  \node[vertex] (p5) at (8,0) {};
  \node[vertex] (p6) at (10,0) {};
  \node[vertex] (p7) at (12,0) {};
  \node[vertex,fill=black] (p8) at (14,0) {};
  \node[vertex] (p9) at (16,0) {};
  \node[vertex] (p10) at (18,0) {};

  \draw[normaledge] (p1) -- (p2);
  \draw[normaledge] (p2) -- (p3);
  \draw[normaledge] (p3) -- (p4);
  \draw[normaledge] (p4) -- (p5);
  \draw[normaledge] (p5) -- (p6);
  \draw[normaledge] (p6) -- (p7);
  \draw[normaledge] (p7) -- (p8);
  \draw[normaledge] (p8) -- (p9);
  \draw[normaledge] (p9) -- (p10);

  \draw[normaledge] (p1) edge[bend left=50] (p4);

  \draw[normaledge] (p2)  edge[bend left] (p4);
  \draw[normaledge] (p2)  edge[bend left=50] (p5);

  \draw[normaledge] (p6) edge[bend left=50] (p9);

  \draw[normaledge] (p7)  edge[bend left] (p9);
  \draw[normaledge] (p7)  edge[bend left=50] (p10);

  \begin{scope}
    \path [scope fading=east] (20,-2) rectangle (27,2);
    
    \node [rectangle,text width=2.5cm,minimum height=0.8cm,
            rounded corners,name = box3] at (24.0,0.7) {};
  
    \filldraw[lipicsLightGray][] (box3.south west)
        [rounded corners=4pt] -- (box3.south east)
        [rounded corners=4pt] -- (box3.north east)
        [rounded corners=4pt] -- (box3.north west)--cycle
        ;

    \node[vertex] (p11) at (20,0) {};
    \node[vertex] (p12) at (22,0) {};
    \node[vertex, fill=black] (p13) at (24,0) {};
    \node[vertex] (p14) at (26,0) {};
    \node[vertex] (p15) at (28,0) {};
    
    \draw[normaledge] (p10) -- (p11);
    \draw[normaledge] (p11) -- (p12);
    \draw[normaledge] (p12) -- (p13);
    \draw[normaledge] (p13) -- (p14);
    \draw[normaledge] (p14) -- (p15);

    \draw[normaledge] (p11) edge[bend left=50] (p14);

    \draw[normaledge] (p12)  edge[bend left] (p14);
    \draw[normaledge] (p12)  edge[bend left=50] (p15);
  \end{scope}

  \node at (29,0.5) {\textbf{\dots}};
    
\end{tikzpicture}

    \caption{The graph has a Hamiltonian path, but the number of leaves in the \cl-tree of every LDFS ordering is $\Omega(n)$. At most one of the filled vertices can be an internal vertex of the \cl-tree. In every box -- besides the one visited first -- both neighbors of the filled vertex have to be visited before the filled vertex.}\label{fig:path-of-bows}
\end{figure}

It is not straightforward to adapt these hardness proofs for LDFS. In fact, LDFS is in general not able to find Hamiltonian paths (see \cref{fig:path-of-bows} for an example).
To overcome this issue, we will use an auxiliary graph construction, called a \emph{blow-up}, on which LDFS behaves similar to DFS on the original graph $G$. By showing that the number of leaves in the blow-up corresponds exactly to that of $G$, we can then carry over the reduction via \textsc{Hamiltonian Path}. The blow-up is constructed from $G$ by first subdividing every edge and then taking the line graph of this graph. More formally, the construction is given in the following definition.

\begin{definition}[blow-up]
    Given a graph $G$, the \emph{blow-up} $\B(G)$ of $G$ is defined as follows:
    \begin{itemize}
        \item For each ordered pair $(u,v)$ of adjacent vertices in $G$, the graph $\B(G)$ contains a vertex~$b_u^v$. Note that $(u,v) \neq (v,u)$.
        \item For each $u \in V(G)$, the set $B(u) = \{b_u^v \mid v \in N_G(u)\}$ induces a clique in $\B(G)$.
        \item For each $uv \in E(G)$, there is an edge between $b_u^v$ and $b_v^u$ in $\B(G)$.
    \end{itemize}
\end{definition}

\begin{figure}
  \begin{minipage}{0.49\linewidth}
  \centering
    \begin{tikzpicture}[scale=0.14]

  \node[vertex,label=225:$a$] (a) at (0,0) {};
  \node[vertex,label=315:$b$] (b) at (20,0) {};
  \node[vertex,label=10:$c$] (c) at (20,20) {};
  \node[vertex,label=90:$d$] (d) at (10,30) {};
  \node[vertex,label=170:$e$] (e) at (0,20) {};

  \draw[normaledge] (a) -- (b);
  \draw[normaledge] (a) -- (c);
  \draw[normaledge] (a) -- (e);
  \draw[normaledge] (b) -- (c);
  \draw[normaledge] (b) -- (e);
  \draw[normaledge] (c) -- (d);
  \draw[normaledge] (c) -- (e);
  \draw[normaledge] (d) -- (e);
\end{tikzpicture}

  \end{minipage}
  \begin{minipage}{0.49\linewidth}
  \centering
    \begin{tikzpicture}[scale=0.175]

\tikzset{ wolke/.style={lipicsBulletGray,line width=4 mm, line join=round}}


\coordinate (cab) at (3,0.5);
\coordinate (cac) at (3,4);
\coordinate (cae) at (-1,4);
  
  \filldraw[wolke][] 
        (cab) -- (cac) -- (cae) -- cycle;
       
  \node[vertex,label=250:$b_a^b$] (ab) at (cab) {};
  \node[vertex,label=0:$b_a^c$] (ac) at (cac) {};
  \node[vertex,label=200:$b_a^e$] (ae) at (cae) {};
  
 \coordinate (cba) at (17,0.5);
  \coordinate (cbe) at (17,4);
  \coordinate (cbc) at (21,4);
  
  \filldraw[wolke][] 
        (cba) -- (cbe) -- (cbc) -- cycle;
        
  \node[vertex,label=290:$b_b^a$] (ba) at (cba) {};
  \node[vertex,label=180:$b_b^e$] (be) at (cbe) {};
  \node[vertex,label=340:$b_b^c$] (bc) at (cbc) {};

  \coordinate (cca) at (17,16);
  \coordinate (ccb) at (21,16);
  \coordinate (ccd) at (21,20);
  \coordinate (cce) at (17,20);

  \filldraw[wolke][] 
  (cca) -- (ccb) -- (ccd) -- (cce) --cycle;
  
  \node[vertex,label=180:$b_c^a$] (ca) at (cca) {};
  \node[vertex,label=0:$b_c^b$] (cb) at (ccb) {};
  \node[vertex,label=45:$b_c^d$] (cd) at (ccd) {};
  \node[vertex,label=100:$b_c^e$] (ce) at (cce) {};

  \coordinate (cdc) at (12,30) ;
  \coordinate (cde) at (8,30) ;

  \filldraw[wolke][] 
   (cdc) -- (cde) --cycle;
   
  \node[vertex,label=45:$b_d^c$] (dc) at (cdc) {};
  \node[vertex,label=135:$b_d^e$] (de) at (cde) {};

  \coordinate (cea) at (-1,16);
  \coordinate (ceb) at (3,16);
  \coordinate (cec) at (3,20);
  \coordinate (ced) at (-1,20);
  
  \filldraw[wolke][] 
  (cea) -- (ceb) -- (cec) -- (ced) --cycle;
  
  \node[vertex,label=180:$b_e^a$] (ea) at (cea) {};
  \node[vertex,label=0:$b_e^b$] (eb) at (ceb) {};
  \node[vertex,label=45:$b_e^c$] (ec) at (cec) {};
  \node[vertex,label=135:$b_e^d$] (ed) at (ced) {};

  \draw[normaledge] (ab) -- (ba);
  \draw[normaledge] (ac) -- (ca);
  \draw[normaledge] (ae) -- (ea);
  \draw[normaledge] (be) -- (eb);
  \draw[normaledge] (bc) -- (cb);
  \draw[normaledge] (ec) -- (ce);
  \draw[normaledge] (cd) -- (dc);
  \draw[normaledge] (de) -- (ed);

  \draw[normaledge] (ab) -- (ac);
  \draw[normaledge] (ab) -- (ae);
  \draw[normaledge] (ac) -- (ae);

  \draw[normaledge] (ba) -- (bc);
  \draw[normaledge] (ba) -- (be);
  \draw[normaledge] (bc) -- (be);

  \draw[normaledge] (ca) -- (cb);
  \draw[normaledge] (ca) -- (cd);
  \draw[normaledge] (ca) -- (ce);
  \draw[normaledge] (cb) -- (cd);
  \draw[normaledge] (cb) -- (ce);
  \draw[normaledge] (cd) -- (ce);

  \draw[normaledge] (dc) -- (de);

  \draw[normaledge] (ea) -- (eb);
  \draw[normaledge] (ea) -- (ec);
  \draw[normaledge] (ea) -- (ed);
  \draw[normaledge] (eb) -- (ec);
  \draw[normaledge] (eb) -- (ed);
  \draw[normaledge] (ec) -- (ed);

\end{tikzpicture}

  \end{minipage}
  \caption{A graph $G$ and its blow-up $\B(G)$; gray boxes represent cliques of the blow up.}\label{fig:ex-blow-up}
\end{figure}
An example of a blow-up is given in \cref{fig:ex-blow-up}. Observe that the sets $B(v)$ with $v\in V$ partition $V(\B(G))$. There is a strong relationship between the number of leaves of DFS \cl-trees of a graph $G$ and the number of leaves of LDFS \cl-trees of its blow-up $\B(G)$. The general idea of this relation is the fact that LDFS has to traverse the $B$-sets of the blow-up consecutively and in the same order as the vertices of $G$ are traversed by DFS.

\begin{lemma}\label{lemma:blow-up-dfs}
    A graph $G$ has a DFS \cl-tree with exactly $k$ leaves if and only if $\B(G)$ has an LDFS \cl-tree with exactly $k$ leaves.
\end{lemma}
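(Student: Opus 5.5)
The plan is to first establish a rigid structural description of LDFS orderings of $\B(G)$ and then read off a correspondence with DFS orderings of $G$.

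\emph{Step 1 (consecutiveness of $B$-sets).} Let $\tau$ be an LDFS ordering of $\B(G)$. Suppose $\tau$ enters a set $B(u)$ for the first time at a vertex $b_u^w$. Every vertex of $B(u)$ is adjacent to $b_u^w$. By contrast, a vertex $b_x^u$ outside $B(u)$ is adjacent only to $b_u^x$ within $B(u)$. As soon as a second vertex $b_u^x$ of $B(u)$ is visited, each remaining vertex of $B(u)$ is adjacent to both of the last two visited vertices, whereas $b_x^u$ is adjacent only to the latest one. Using the 4-point condition of \cref{lemma:4point-ldfs} (with $a=b_u^w$, the outside vertex as $b$, and a remaining clique vertex as $c$), this shows that $\tau$ visits all of $B(u)$ consecutively before leaving it. At the start vertex the same argument applies. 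Hence $\tau$ induces an ordering $\sigma$ of $V(G)$ by first visits of the $B$-sets.

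Next I determine the internal order of $B(u)$. A vertex $b_u^y$ whose partner $b_y^u$ was visited earlier carries an additional earlier neighbour in its LDFS label. Therefore all such vertices, i.e.\ those with $y$ already visited, come before all $b_u^y$ with $B(y)$ still unvisited. Among each of these two groups the order is free.

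\emph{Step 2 (leaves).} By \cref{obs:dfs-leaves}, every vertex of $B(u)$ except the last visited one has its successor in the clique, so it is internal. The last vertex $b_u^x$ has exactly one neighbour outside $B(u)$, namely $b_x^u$. So $b_u^x$ is a leaf if and only if $B(x)$ was already visited. By Step 1 this happens if and only if every $y\in N_G(u)$ was visited before $u$. The root needs a small separate check, since it is never counted as a leaf. Consequently the number of leaves of the $\cl$-tree of $\tau$ equals the number of vertices $u$ all of whose neighbours precede $u$ in $\sigma$.

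\emph{Step 3 (the correspondence).} Given $\tau$, I show that $\sigma$ is a DFS ordering of $G$ and then apply \cref{obs:dfs-leaves} to get the same leaf count. To see that $\sigma$ is a DFS ordering, check the 4-point condition of \cref{lemma:4point-dfs}: it transfers because $\tau$ is in particular a DFS ordering of $\B(G)$, and adjacency between $B$-sets only runs through the matching edges $b_u^v b_v^u$.

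Conversely, given a DFS ordering $\sigma$ of $G$, I construct $\tau$ greedily. The $B$-sets are visited in the order $\sigma$. Inside $B(u)$, I first take the vertices $b_u^y$ with $y\prec_\sigma u$, then the vertices $b_u^y$ with $y \succ_\sigma u$ and $y$ not a DFS child of $u$, and finally the DFS children $y$ of $u$ in \emph{reverse} order of their appearance in $\sigma$. Thus the last vertex of $B(u)$ points to the first child of $u$. When LDFS backtracks, it returns to the most recent vertex with an unvisited neighbour, which by this arrangement is $b_w^y$ with $y$ the next DFS child of the correct ancestor $w$. I expect the main obstacle to be verifying that this arrangement is consistent with the LDFS label rule at every backtracking step. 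In particular, a non-child $y$ of $u$ must never be the target of a backtrack before it is reached via its real DFS parent. I would argue this with the 4-point condition (\cref{lemma:4point-ldfs}) applied to the triples created at backtracking points, using that the DFS parent of $y$ is visited after $u$. Once $\tau$ is shown to be an LDFS ordering, Step 2 gives exactly $k$ leaves.
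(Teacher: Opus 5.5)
\textbf{Gap 1: the start clique need not be consecutive.} Your Step~1 claim that ``at the start vertex the same argument applies'' is false, and the LDFS-to-DFS direction fails in exactly this case. If $\tau$ starts at $s=b_v^w$, the partner $b_w^v$ is an unvisited neighbour of $s$ with the same label as the rest of $B(v)$. So LDFS may leave $B(v)$ immediately, and $B(v)\setminus\{s\}$ is then visited later as a separate block. In that case the first-visit order $\sigma$ need not be a DFS ordering of $G$. Take $G$ with edges $vw,wx,xv,xy',vy$. The ordering
\[
\tau=(b_v^w,\, b_w^v,\, b_w^x,\, b_x^w,\, b_x^{y'},\, b_x^v,\, b_v^x,\, b_v^y,\, b_y^v,\, b_{y'}^x)
\]
is an LDFS ordering of $\B(G)$; the only free choices are ties at the second and fifth step. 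But $\sigma=(v,w,x,y,y')$ visits $y\notin N_G(x)$ right after $x$ while $y'\in N_G(x)$ is still unvisited.

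Your 4-point transfer in Step~3 tacitly assumes that a witness $d\in B(u)$ lying between $B(x)$ and $B(y)$ in $\tau$ gives $x\prec_\sigma u\prec_\sigma y$. Here the witness is $b_v^y$, whose set was first entered before $B(x)$. Your leaf count in Step~2 also relies on consecutiveness. You need a separate argument for this case, as the paper gives one. Reinsert $s$ into the later block of $B(v)$ to get an LDFS ordering $\beta'$ in which all cliques are consecutive; it now starts in $B(w)$, and in the example its first-visit order $(w,x,v,y,y')$ is a DFS ordering. Then show that this move does not change the number of leaves.

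\textbf{Gap 2: the order inside $B(u)$ is not free.} The vertices $b_u^y$ whose partner is already visited cannot be taken in any order. Their labels agree on the part coming from $B(u)$ and then differ by the visit time of $b_y^u$. So LDFS must take them in order of decreasing recency of $y$; in particular, $B(u)$ must be entered at $b_u^p$, where $p$ is the DFS parent of $u$. Your converse construction leaves this order unspecified, and a wrong choice already violates the DFS rule. For the triangle $abc$ with $\sigma=(a,b,c)$, the vertex after $b_b^c$ must be $b_c^b$, not $b_c^a$.

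Once this order is fixed to the reverse of $\sigma$ (the paper's $\tau_A(u)$), your construction is the paper's. The verification you postpone as the ``main obstacle'' still has to be done. The paper does it by a short case analysis of the LDFS 4-point condition. For $x\prec y\prec z$ with $xz\in E$ and $xy\notin E$, one gets $x=b_{u_1}^{u_3}$ and $z=b_{u_3}^{u_1}$. If $y$ is not first in its block, the first vertex of that block is the witness. Otherwise $y=b_{u_2}^t$ with $t$ the DFS parent of $u_2$, and $b_t^{u_2}$ is the witness, by the DFS 4-point condition and the reversed order of the children.
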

\begin{proof}
    We have to prove two directions. We first prove that for every DFS ordering of $G$ there is an LDFS ordering of $\B(G)$ with the same number of leaves in the \cl-tree. Afterwards, we prove the reverse direction.
    
    \proofsubparagraph{From DFS to LDFS.}
    First, assume that there is a DFS ordering $\sigma$ of $G$ whose $\cl$-tree $T$ has exactly $k$ leaves. We construct an ordering $\beta$ of $\B(G)$ as follows. Let $u$ be a vertex of $G$.

    \begin{itemize}
        \item Let $A(u)$ be the set of vertices $b_u^v$ where $v$ is to the left of $u$ in $\sigma$, let $C(u)$ be the set of vertices $b_u^v$ where $v$ is a child
        of $u$ in $T$ and $D(u) := B(u) \setminus (A(u) \cup C(u))$.
        \item Let $\tau_A(u)$ be an ordering of $A(u)$ such that $b_u^{v_1} \prec_{\tau_A} b_u^{v_2}$ if and only if $v_2 \prec_\sigma v_1$ for all $v_1, v_2 \in A(u)$.
        \item Let $\tau_C(u)$ be an ordering of $C(u)$ such that $b_u^{v_1} \prec_{\tau_A} b_u^{v_2}$ if and only if $v_2 \prec_\sigma v_1$ for all $v_1, v_2 \in C(u)$.
        \item Let $\tau_D(u)$ be an arbitrary ordering of $D(u)$.
        \item Now define $\beta$ by replacing $u$ in $\sigma$ by $\tau_A(u) \oplus \tau_D(u) \oplus \tau_C(u)$ (here $\oplus$ stands for the concatenation operator).
    \end{itemize}

    \begin{claim}
        The ordering $\beta$ is an LDFS ordering of $\B(G)$.
    \end{claim}
    \begin{claimproof}
    Let $x$, $y$, and $z$ be three vertices such that $x \prec_\beta y \prec_\beta z$, $xz \in E(\B(G))$ but $xy \notin E(\B(G))$. Due to the 4-point condition of LDFS (\cref{lemma:4point-ldfs}), it is sufficient to show that there is a vertex $w$ with $x \prec_\beta w \prec_\beta y$ such that $wy \in E(\B(G))$ but $wz \notin E(\B(G))$.
    
    Let $x \in B(u_1)$, $y \in B(u_2)$ and $z \in B(u_3)$. Since $x$ and $y$ are not adjacent, it holds that $u_1 \neq u_2$. By construction, the vertices of $B(u_1)$ are consecutive in $\beta$. Thus, it follows that $u_1 \neq u_3$. This implies that $x = b_{u_1}^{u_3}$ and $z = b_{u_3}^{u_1}$. Observe that this implies that all neighbors of $z$ to the left of $z$ in $\beta$ are either equal to $x$ or elements of $B(u_3)$. 

    Now assume first that $y$ is not the leftmost vertex of $B(u_2)$ in $\beta$. Then, the leftmost vertex $y'$ of $B(u_2)$ is adjacent to $y$ but not to $z$, due to the observation about the neighbors of $z$ above. Furthermore, by construction, $y'$ is placed between $x$ and $y$ and $y'$ is a possible choice for the vertex $w$ showing that the 4-point condition is fulfilled in this case.

    Thus, we may assume that $y$ is the first vertex of $B(u_2)$ in $\beta$. As $u_2$ is not the first vertex of $\sigma$, the set $A(u_2)$ has not been empty and, thus, $y$ is part of $A(u_2)$. Let $y = b_{u_2}^{t}$. By construction, $t$ is the rightmost neighbor of $u_2$ to the left of $u_2$ in $\sigma$, i.e., $t$ is the parent of $u_2$ in $T$ implying that $b_t^{u_2} \in C(t)$. Due to the 4-point condition of DFS (\cref{lemma:4point-dfs}), $t$ is either placed between $u_1$ and $u_2$ in $\sigma$ or $t$ is equal to $u_1$. In the first case, $b_t^{u_2}$ is between $x$ and $y$ in $\beta$ and can take the role of $w$. Thus, we may assume that $t = u_1$. Recall that then $x = b_{u_1}^{u_3}$, $y = b_{u_2}^{u_1}$ and $b_{u_1}^{u_2} \in C(u_1)$. If $u_3$ is not a child of $u_1$ in $T$, then $b_{u_1}^{u_3} \in D(u_1)$ and thus $b_{u_1}^{u_3}$ is to the left of $b_{u_1}^{u_2}$ in $\beta$. Therefore, $b_{u_1}^{u_2}$ can be chosen as $w$. Otherwise, both $b_{u_1}^{u_2}$ and $b_{u_1}^{u_3}$ are part of $C(u_1)$. Since $u_2 \prec_\sigma u_3$, it holds that $x = b_{u_1}^{u_3} \prec_\beta b_{u_1}^{u_2}$ and, again, $b_{u_1}^{u_2}$ can be used as~$w$.
    \end{claimproof}

    \begin{claim}
        The \cl-tree of $\beta$ has exactly $k$ leaves.
    \end{claim}

    \begin{claimproof}
    First observe that every set $B(v)$ contains at most one leaf of $T_\beta$ and if there is one, then this leaf must be the last vertex of $B(v)$. 
    Assume that vertex $u$ is a leaf in $T$. Then all neighbors of $u$ in $G$ are to the left of $u$ in~$\sigma$, due to \cref{obs:dfs-leaves}. Hence, for each $x \in B(u)$, it holds that the single neighbor of $x$ outside of $B(u)$ is to the left of $x$ in $\beta$. In particular, this means that all neighbors of the rightmost vertex of $B(u)$ are to the left of this vertex in $\beta$. Therefore, this vertex is a leaf in $T_\beta$. 

    If $u$ is not a leaf in $T$, then the successor $v$ of $u$ in $\sigma$ is the leftmost child of $u$ in $T$ while $u$ is the rightmost neighbor of $v$ to the left of $v$. This implies that $b_u^v$ is the rightmost vertex of $B(u)$ and $b_v^u$ is the leftmost vertex of $B(v)$ in $\beta$. Therefore, $b_v^u$ is the successor of $b_u^v$ in $\beta$ and thus $b_u^v$ is not a leaf in $T_\beta$. Due to the observation above, no vertex of $B(u)$ is a leaf of $T_\beta$, implying that the number of leaves of $T$ and $T_\beta$ are identical.
    \end{claimproof}
    
    \proofsubparagraph{From LDFS to DFS.}

    Let $\beta$ be an LDFS ordering of $\B(G)$. We first consider the case where for each $v \in V(G)$ the vertices of the clique $B(v)$ appear consecutively in $\beta$. Let $\sigma$ be the vertex ordering of $G$ that is constructed by replacing, for every $v \in V(G)$, the ordering of $B(v)$ in $\beta$ by $v$. 
    
    \begin{claim}
        The ordering $\sigma$ is a DFS ordering of $G$
    \end{claim}
    
    \begin{claimproof}
        Let $x$, $y$, and $z$ be three vertices of $G$ with $x \prec_\sigma y \prec_\sigma z$, $xz \in E(G)$, and $xy \notin E(G)$. Then there are vertices $x' \in B(x)$, $y' \in B(y)$, and $z' \in B(z)$ such that $x' \prec_\beta y' \prec_\beta z'$, $x'z' \in E(\B(G))$ and $x'y' \notin E(\B(G))$. Since no vertex of $B(x)$ is adjacent to any vertex in $B(y)$, we can choose $y'$ as the leftmost vertex of $B(y)$. Since $\beta$ is an LDFS ordering, there is a vertex $w' \in V(\B(G))$ with $x' \prec_\beta w' \prec_\beta y'$ and $w'y' \in E(\B(G))$. Note that $w'$ cannot be an element of $B(x)$ as no vertex of $B(x)$ is adjacent to $y'$. Furthermore, $w'$ cannot be in $B(y)$ since $y'$ is the leftmost vertex of $B(y')$ in $\beta$. Thus, there is a vertex $w \in V(G) \setminus \{x,y,z\}$ with $w' \in B(w)$. It holds that $x \prec_\sigma w \prec_\sigma y$ and $wy \in E(G)$. Therefore, $\sigma$ is a DFS ordering of $G$.
    \end{claimproof}

    \begin{claim}
        The \cl-trees of $\beta$ and $\sigma$ have the same number of leaves.
    \end{claim}

    \begin{claimproof}
    Let $v' \in B(v)$ be a leaf of the $\cl$-tree of $\beta$. Due to \cref{obs:dfs-leaves}, vertex $v'$ is the last vertex of $B(v)$ in $\beta$. Furthermore, its unique neighbor $x' \in B(x)$ in $V(\B(G)) \setminus B(v)$ is to the left of all vertices of $B(v)$ in $\beta$ since $B(v)$ appears consecutively in $\beta$. This implies that for all other vertices $v''$ of $B(v)$ their unique neighbor outside of $B(v)$ must also be to the left of $B(v)$ in $\beta$ (otherwise $x'$, $v''$ and $v'$ would contradict the 4-point condition of LDFS). Therefore, all neighbors of $v$ in $G$ are to the left of $v$ in $\sigma$ and $v$ is a leaf in the $\cl$-tree of $\sigma$. Thus, the number of leaves of the $\cl$-tree of $\sigma$ is at least the number of leaves of the $\cl$-tree of~$\beta$.

    Now, let $v$ be a leaf in $\sigma$. Then the successor $w$ of $v$ in $\sigma$ is not a neighbor of $v$. This implies that there is no edge between $B(v)$ and $B(w)$. In particular, the last vertex of $B(v)$ in $\beta$ is not adjacent to its successor, i.e., the first vertex of $B(w)$ in $\beta$. Due to \cref{obs:dfs-leaves}, the last vertex of $B(v)$ is a leaf in $\beta$ and, thus, the number of leaves of the $\cl$-tree of $\beta$ is equal to the number of leaves of the $\cl$-tree of $\sigma$.
    \end{claimproof}

    It remains to consider the case that not all cliques $B(v)$ appear consecutively in $\beta$. If LDFS has visited two vertices of the same clique $B(v)$, then it must visit all remaining vertices of $B(v)$ since there is no other clique in $\B(G)$ that contains both of these vertices. Whenever we enter a clique $B(v)$ (i.e., there is a vertex visited before), all remaining neighbors of that entry vertex are elements of $B(v)$. Therefore, we have to visit a vertex of $B(v)$ afterwards. This implies that all cliques $B(v)$ appear consecutively in $\beta$ with one possible exception: the clique that contains the first vertex of $\beta$. It is possible that we start in a vertex $s$ of a non-trivial clique $B(v)$ and then directly go to a vertex of $B(w)$. If $B(w)$ only contains a single vertex, then we can just move this vertex to the end of the ordering. This results in an LDFS ordering with the same number of leaves. Otherwise, it is easy to see that there is an LDFS ordering $\beta'$ where all cliques $B(u)$ appear consecutively and that follows $\beta$ with the only exception that vertex $s$ is placed at some position in $B(v)$.

    \begin{claim}
        The \cl-trees of $\beta$ and $\beta'$ have the same number of leaves.
    \end{claim}

    \begin{claimproof}
    Let $T$ be the \cl-tree of $\beta$ and $T'$ be the \cl-tree of $\beta'$. 
    Let $x$ be a leaf of $T$. Then, due to the observations above, $x$ is the last vertex of its clique $B(u)$ and the next vertex in $\beta$ is not adjacent to $x$. If $x$ is not a leaf in $\beta'$, then the successor of $x$ in $\beta'$ is $s$ and $s$ is adjacent to $x$. As the only neighbor of $s$ outside of $B(v)$ is the start vertex of $\beta'$, it holds that both $x$ and $s$ are part of $B(u)$, meaning that $s$ is the last vertex of $B(u)$ in $\beta'$. Therefore, all neighbors of $s$ are to the left of $s$ in $\beta'$, which implies that $s$ is a leaf in $T'$. Therefore, the number of leaves of $T'$ is at least the number of leaves of~$T$.

    For the other direction, we observe that $s$ is the last vertex of its clique $B(v)$ in $\beta'$ if and only if the neighbors of all other vertices of $B(v)$ that are not part of $B(v)$ are to the left of $B(v)$ in $\beta$. Now assume that there is a vertex $x$ that is a leaf in $T'$ but not in $T$. If $x = s$, then, due to the observation above, the vertex before $s$ in $\beta'$ is a leaf in $T$ and is an internal vertex of $T'$. If $x \neq s$, then in $\beta'$ vertex $s$ is placed between $x$ and its successor in $\beta$. The observation implies that $s$ and $x$ are in $B(v)$, contradicting the fact that $x$ is a leaf in~$T'$.\end{claimproof}
    This finalizes the proof.
\end{proof}

This lemma implies that the problems \MinLeafL{} and \MaxLeafL{} of LDFS are at least as hard as the same problems of DFS. In~\cite{bergougnoux2025parameterized}, it was shown that \MinLeafL{} of DFS is \NP-complete for all fixed $k \geq 1$ while \MaxLeafL{} of DFS is \W-hard. So the same holds for LDFS.

\begin{theorem}~
\MinLeafL{} of LDFS is \NP-complete for every fixed $k \geq 1$ and
\MaxLeafL{} of LDFS is \W-hard.
\end{theorem}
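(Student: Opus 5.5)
The plan is to derive both statements from \cref{lemma:blow-up-dfs} together with the hardness results of Bergougnoux et al.~\cite{bergougnoux2025parameterized} for DFS. The key ingredient is a reduction from the DFS problems to the LDFS problems that maps an instance $(G,k)$ to $(\B(G),k)$.

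\textbf{Step 1: the reduction is efficient and parameter-preserving.} The blow-up $\B(G)$ has exactly $\sum_{u} \deg_G(u) = 2|E(G)|$ vertices. Its edges are the clique edges inside each $B(u)$ plus one edge per edge of $G$, so there are $O(\sum_u \deg(u)^2 + m)$ of them. Hence $\B(G)$ is computable in polynomial time, and the parameter $k$ is left unchanged. This makes the map a polynomial-time, parameter-preserving reduction.

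\textbf{Step 2: correctness via \cref{lemma:blow-up-dfs}.} The lemma is stated for \emph{exactly} $k$ leaves, whereas the problems ask for at most $k$ or at least $k$ leaves. Since the equivalence holds for every value $j$, we get that $G$ has a DFS $\cl$-tree with at most (resp.\ at least) $k$ leaves if and only if some $j \le k$ (resp.\ $j\ge k$) is realized in $G$, and this happens if and only if the same $j$ is realized in $\B(G)$. Therefore $(G,k)$ is a yes-instance of \MinLeafL{} (resp.\ \MaxLeafL{}) of DFS exactly when $(\B(G),k)$ is a yes-instance for LDFS.

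\textbf{Step 3: NP-completeness of \MinLeafL{} of LDFS.} For membership in \NP, we guess an ordering. We verify that it is LDFS by checking the 4-point condition of \cref{lemma:4point-ldfs} in polynomial time, then compute the $\cl$-tree and count its leaves. For hardness, we compose the DFS hardness for each fixed $k\ge1$ from~\cite{bergougnoux2025parameterized} with Steps~1--2. Because $k$ is unchanged, hardness holds for the same fixed $k$.

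\textbf{Step 4: \W-hardness of \MaxLeafL{} of LDFS.} Composing the parameterized reduction of~\cite{bergougnoux2025parameterized} with ours gives a parameterized reduction, since it runs in polynomial time and keeps $k$. Hence \MaxLeafL{} of LDFS is \W-hard.

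The only step needing care is the ``exactly versus at most/at least'' point in Step~2, and it is resolved immediately by applying \cref{lemma:blow-up-dfs} for each value. The substantive work was already done in that lemma. A minor point to double-check is that $G$ connected implies $\B(G)$ connected, which is clear from the construction.
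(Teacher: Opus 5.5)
Your proposal is correct and is essentially the paper's argument: the paper likewise maps $(G,k)$ to $(\B(G),k)$, applies \cref{lemma:blow-up-dfs}, and inherits the \NP-completeness for fixed $k$ and the \W-hardness from the DFS results of Bergougnoux et al. Your Steps~1--4 spell out what the paper leaves implicit: the polynomial size of the blow-up, the passage from ``exactly $k$'' to ``at most/at least $k$'', and \NP-membership. The only degenerate case, a single-vertex $G$ whose blow-up is empty, is handled trivially by outputting a fixed yes- or no-instance.
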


\section{Further Research}

\cref{tab:results-tree} summarizes the solved and open questions. In particular, the parameterized complexity of \MaxLeafL{} is still largely unknown for all searches apart from (L)DFS, while for these two, no \XP-algorithm is known to complement the \W-hardness~results.

When comparing the problems studied here with MLST and MIST, we can make some immediate observations. Any \cl-tree is also a spanning tree, so for any~$k$ every solution to \MaxLeaf{} or \MinInL{} yields a lower bound for MLST. Similarly, for any~$k$, every solution to \MinLeaf{} or \MaxInL{} yields a lower bound for MIST. The question remains as to how large the gap between these problems can be. \cref{thm:min-leaf-gs} states that a spanning tree has at most $k$ leaves if and only if there is an \cl-tree of GS that has at most $k$ leaves. For complete graphs, any \cl-tree has at most one leaf, while a spanning tree can have a maximum of $n-1$ leaves. In the graph of~\cref{fig:path-of-bows}, the Hamiltonian path has one leaf, while any \cl-tree computed with LDFS has $\operatorname{\Omega}(n)$ leaves. While these observations shed some light on the gap between spanning trees and \cl-trees, this is far from the complete picture. Regarding \MaxInL{} for BFS and DFS, we lack any estimates on the gap to MIST.

\bibliography{many-leaves}

\end{document}